\newtheorem{proposition}{Proposition}
\journal{Elsevier}
\begin{document}

\begin{frontmatter}

%
%
\title{A conditional formulation of the Vlasov-Amp{\`e}re Equations: A conservative, positivity, asymptotic, and Gauss law preserving scheme}

\author[label1]{William T. Taitano\corref{cor1}}\ead{taitano@lanl.gov}
\author[label2,label1]{Joshua W. Burby}\ead{joshua.burby@austin.utexas.edu}
\author[label3]{Alex Alekseenko}\ead{alexander.alekseenko@csun.edu}
\address[label1]{Applied Mathematics and Plasma Physics Group, Los Alamos National Laboratory, Los Alamos, New Mexico 87545.}
\address[label2]{Physics Department, University of Texas at Austin, Austin, Texas 78712.}
\address[label3]{Mathematics Department, California State University Northridge, Northridge, California, 91330.}

%
%

\begin{abstract}
We propose a novel reformulation of the Vlasov-Amp{\`e}re equations for plasmas. This reformulation exposes discrete symmetries to achieve simultaneous conservation of mass, momentum, and energy; preservation of Gauss's law involution; positivity of the distribution function; and quasi-neutral asymptotics. Our approach relies on transforming variables and coordinates, leading to a coupled system of a modified Vlasov equation and the associated moment-field equations. The modified Vlasov equation evolves a conditional distribution function that excludes information on mass, momentum, and energy densities. The mass, momentum, and energy density, in turn, are evolved using moment equations, in which discrete symmetries, conservation laws, and involution constraints are enforced. The reformulation is compatible with a recent slow-manifold reduction technique, which separates the fast electron time scales and simplifies handling the asymptotic quasi-neutrality limit within the easier to solve moment-field subsystem. Using this new formulation, we develop a numerical approach for the reduced 1D1V subsystem that, for the first time, simultaneously satisfies the important physical constraints while preserving the quasi-neutrality asymptotic limit. The advantages of our approach are demonstrated through canonical electrostatic test problems, including the multiscale ion acoustic shock wave problem.
\end{abstract}

\begin{keyword}
Conditional Vlasov Amp{\`e}re \sep fully conservative \sep asymptotic-preserving \sep positivity-preserving \sep Gauss law preserving \sep fast slow split \sep slow manifold reduction

\vspace{.5\baselineskip}
\end{keyword}

\end{frontmatter}

%
%
\section{Introduction\label{sec:introduction}}

The coupled Vlasov-Amp{\`e}re (VA) equations describe the dynamic evolution of a plasma particle distribution function (PDF), $f$, in the position-velocity phase space $\left\{ \vec{x},\vec{v}\right\} \in\mathbb{R}^{3}\times\mathbb{R}^{3}$ coupled to the electric field, $\vec{E}$. The system has wide applications in the modeling of laboratory astrophysics experiments, electric propulsion systems, semiconductor manufacturing processes, and laboratory fusion systems, among others. The VA system satisfies important physical properties such as: 
\begin{enumerate}
\item conservation of mass, momentum, and energy,
\item Gauss law involution constraint,
\item quasi-neutral asymptotics, and
\item positivity of PDF. 
\end{enumerate}
The simultaneous preservation of these continuum properties in the discrete has been of intense research focus by both the physics and applied mathematics community. Notable approaches include the asymptotic preserving techniques \cite{degond_jcp_2010_qn_ap_va, jin_ap_riv_mat_univ_parma_2012_ap_review,ye_jcp_2024_ap_vfp,coughlin_jcp_2022_low_rank_ap_vafp}, structure preserving discretization scheme \cite{cheng_jsnm_2014_vlasov_ampere, shiroto_jcp_2019_sp_vm}, micro-macro-decomposition (MMD) approaches \cite{liu_cmp_mmd_2004,gamba_jcp_mmd_2019}, spectral element approaches \cite{filbet_camc_2022_dg_hermite,manzini_jcp_2016_legendre_fourier_vp}, and discrete nonlinear constraint schemes \cite{taitano_jcp_2015_conservative_vfpa_1, anderson_jcp_2020_adaptive_grid}. To our knowledge, no approach has been developed that satisfies all four requirements, and the fundamental challenge appears to be in the seemingly incompatible constraints that arise across the different requirements. For example, when numerically solving the Vlasov equation, a hyperbolic PDE, numerical dissipation is required to stabilize the solution and ensure positivity of the PDF through a flux limiter or artificial viscosity schemes. However, due to the Hamiltonian nature of the the Vlasov-Amp{\`e}re system, such artificial dissipation usually leads to momentum and energy conservation errors. Similarly, when dynamically evolving the Amp{\`e}re equation to ensure the discrete energy conservation theorem, the Gauss law must be satisfied as an involution constraint. To satisfy the Gauss law, a finite differencing Yee scheme \cite{yee_ieee_1966_yee_scheme} is often employed, where the electric field and charge density are staggered. However, to recover the asymptotic quasi-neutral limit, it is paramount that the ambipolarity condition, $\nabla_{x}\cdot\vec{j}=0$ is satisfied; here, $\vec{j} = \sum_{\alpha}^{N_{s}}q_{\alpha} \int_{\mathbb{R}^{3}} d^{3}v\vec{v} f_{\alpha}$ is the current (collocated with $\vec{E}$). This limit can be trivially recovered in the limit in which the normalized permittivity constant vanishes, $\epsilon_{0}\rightarrow0$, and the Amp{\`e}re equation becomes an algebraic equation for the current, $\vec{j}=\vec{0}.$ However, once $f$ is defined as staggered with respect to $\vec{E}$ (to ensure that the charge density, $\rho = \sum_{\alpha}^{N_{s}} q_{\alpha} \int_{\mathbb{R}^{3}} d^{3}xf$ is staggered with $\vec{E}$), the current must be reconstructed at the collocation point of $\vec{E}$ through some interpolation operation. This very interpolation operator is non-invertible and introduces a non-trivial null space that destroys the quasi-neutrality asymptotic preserving property. Conversely, if one defines $f$ as collocated with $\vec{E}$, quasi-neutrality can be recovered, but at the cost of violating the Gauss law.

We propose a novel, yet consistent, reformulation of the Vlasov-Amp{\`e}re equation that evolves the conditional distribution function, ${\cal F} \left(t,\vec{w} |\vec{x}\right) = \frac{v_{th}^{3}f}{\int_{\mathbb{R}^{3}} d^{3}vf}$, defined in the scaled peculiar velocity coordinate $\vec{w}=\frac{\vec{v}-\vec{u}}{v_{th}}$; we note that this choice of coordinate was similarly considered in the classical works of Grad and Chapman-Enskog, as well as in Ref. \cite{taitano_cpc_2021_ifp_code} which used a similar coordinate transformation to model the Vlasov-Fokker-Planck equation in inertial confinement fusion capsules that undergo large temperature and flow variations, and in Ref. \cite{filbet_jcp_2013_rescaling_velocity} for modeling the Boltzmann like equation for granular media. In the new coordinate system, the first three $w$ moments of ${\cal F}\left(t,\vec{w}|\vec{x}\right)$ are pointwise invariants, $\int_{\mathbb{R}^{3}}d^{3}w\vec{\phi}_w{\cal F} = \vec{C} \; \forall \left(t,\vec{x}\right)$, where $\vec{\phi}_w=\left\{ 1,\vec{w},w^{2}\right\} $ and $\vec{C} \in \mathbb{R}^5$ is a vector of constants. As such, the evolution of the original hydrodynamic moments of the PDF --the mass, momentum, and energy densities-- is divorced from the evolution of ${\cal F}$. Instead, these moments are evolved in the hydrodynamic moment subsystem. Together, the coupled modified Vlasov equation for ${\cal F}$ and the moment field equations recover the original Vlasov-Amp{\`e}re dynamics. In the new formulation, the viscous stress and heat flux computed from ${\cal F}$ act as a closure for the moment system, while the solution of the moment system parameterizes the dynamics of ${\cal F}$. The advantages of the new reformulation are that the conservation of mass, momentum, and energy; the involution constraint; and quasi-neutrality asymptotic limits are isolated in the moment-field subsystem, and the choice of discretization for evolving ${\cal F}$ can be dictated primarily to satisfy positivity, in which many choices are readily available (e.g. flux limiters, TVD schemes, and flux correction techniques) while not affecting the other properties. In the moment-field subsystem, we extend a conservative and shock-capturing staggered formulation, originally developed by the fluid dynamics community \cite{ansanay_ijnmf_2011_l2_stab_ns_staggered,herbin_2012_esaim_proceedings_staggered,goudon_fvca8_2017_staggered_euler_scheme_staggered,berthelin_2015_hal_barotropic_euler_model_staggered,brunel_jsc_2024_3d_staggered_general_mesh_staggered,grapas_smai_2016_unconditionally_stable_staggered_p_corr_ns_staggered}, to electrostatic plasmas. For the first time in the literature, we demonstrate that our new formulation allows us to simultaneously meet properties 1 to 4 in discrete form for the electrostatic Vlasov-Amp{\`e}re system. In addition, it accurately resolves shocks.

The remainder of the manuscript is organized as follows. In Section~\ref{sec:va_sys_and_qn_asymptotics} we introduce the Vlasov-Amp{\`e}re equation and the specific non-dimensionalization procedure that will aid in our downstream discussion. In Section~\ref{sec:conditional_formulation}, we derive the reformulation of the Vlasov equation in the new coordinate system and in terms of ${\cal F}$. In Section \ref{sec:fast_slow_formulation}, we discuss the modification of the coupled conditional distribution function and moment-field system that naturally exposes the relevant continuum symmetries to recover the quasi-neutral limit. In Section \ref{sec:comparible_discretization}, we discuss the specific spatio-temporal discretizations employed to solve the reformulated system that satisfy requirements 1-4. In Section~\ref{sec:solver} we discuss the details of our solver. In Section~\ref{sec:numerical_results}, we demonstrate the advertised properties of our new formulation on a canonical set of test problems including the challenging multiscale ion-acoustic shock wave problem, and conclude in Section~\ref{sec:conclusions}. 

%
%
\section{Vlasov-Amp{\`e}re System and the quasi-neutral Limit \label{sec:va_sys_and_qn_asymptotics}}

The Vlasov equation describes the dynamical evolution of the plasma particle distribution function (PDF) in the position-velocity phase-space over time and is given as: 
\begin{equation}
    \frac{\partial f_{\alpha}}{\partial t}+\vec{v}\cdot\nabla_{x}f_{\alpha}+\frac{q_{\alpha}}{m_{\alpha}}\left(\vec{E}+\vec{v}\times\vec{B}\right)\cdot\nabla_{v}f_{\alpha}=0.
    \label{eq:vlasov}
\end{equation}
Here, $f_{\alpha}\left(\vec{x},\vec{v},t\right)$ is the particle PDF for species $\alpha$, $\vec{x}\in\mathbb{R}^{3}$ is the particle position, $\vec{v}\in\mathbb{R}^{3}$ is the particle velocity, $t \in \left[0,t_{max} \right] \subset \mathbb{R}_{+}$ is time, $t_{max}$ is the upper temporal bound, $\nabla_{x}=\left\{ \partial_{x} , \partial_{y} , \partial_{z}\right\} $ is the spatial gradient operator, $\nabla_{v}=\left\{ \partial_{v_{x}} , \partial_{v_{y}} , \partial_{v_{z}} \right\} $ is the velocity space gradient operator, $\vec{E}\left(\vec{x},t\right)$ is the electric field, $\vec{B}\left(\vec{x},t\right)$ is the magnetic
field, and $q_{\alpha}$ ($m_{\alpha}$) is the charge (mass) of the species. In this study, we focus on the 1D1V, electrostatic limit, 
\begin{equation}
    \frac{\partial f_{\alpha}}{\partial t} + v \frac{\partial f_{\alpha}}{\partial x} + \frac{q_{\alpha}}{m_{\alpha}} E \frac{\partial f_{\alpha}}{\partial v}=0,
    \label{eq:es_vlasov}
\end{equation}
where the evolution of the electric field is governed by the Amp{\`e}re's equation as: 
\begin{equation}
   \epsilon_{0}\frac{\partial E}{\partial t} + j =0.
   \label{eq:ampere}
\end{equation}
Here,  $\epsilon_{0}$ is the permittivity constant of the vacuum, $j = \sum_{\alpha}^{N_{s}} q_{\alpha} n_{\alpha} u_{\alpha} = \sum_{\alpha}^{N_{s}} q_{\alpha} \left<v,f_{\alpha}\right>_{v}$ is the total current,  $\left<A,B\right>_{v}=\int_{\mathbb{R}} dv AB$ is a shorthand notation for the velocity-space inner-product operation between functions $A$ and $B$, and $N_{s}$ is the total number of plasma species in the system. The two equations are nonlinearly coupled through the electrostatic acceleration term and the current in the Vlasov and Amp{\`e}re equations, respectively. 

We normalize the equations by introducing a set of reference values in terms of reference number density, $n^{*}$, electron mass, $m^{*}=m_{e}$, proton charge, $q^{*}=q_{p}$, macroscopic time, $\tau^{*}$, macroscopic length, $L^{*}$, temperature, $k_{b}T^{*}$, speed, $u^{*}=\sqrt{\frac{k_{b}T^{*}}{m^{*}}}$, distribution function, $f^{*}=n^{*}/u^{*^{3}}$, and electric field, $E^{*}=\frac{m^{*}L^{*}}{\tau^{*^{2}}q^{*}}$. Defining the normalized quantities as $\widehat{f}_{\alpha}=f_{\alpha}/f^{*}$, $\widehat{x}=x/L^{*}$, $\widehat{v}=v/u^{*}$, $\widehat{m}_{\alpha}=m_{\alpha}/m^{*}$, $\widehat{q}_{\alpha} = q_{\alpha}/q_{p}$, and $\widehat{E}=E/E^{*}$ and substituting these values into Eqs. \eqref{eq:es_vlasov} and \eqref{eq:ampere}, one obtains: 
\begin{equation}
    \frac{\partial\widehat{f}_{\alpha}}{\partial\widehat{t}}+\widehat{v}\frac{\partial \widehat{f}_{\alpha}}{\partial\widehat{x}}+\frac{\widehat{q}_{\alpha}}{\widehat{m}_{\alpha}} \widehat{E} \frac{\partial \widehat{f}_{\alpha}}{\partial \widehat{v}} = 0.
    \label{eq:es_vlasov_normalized}
\end{equation}
\begin{equation}
    \epsilon^{2}\frac{\partial \widehat{E}}{\partial\widehat{t}}+\widehat{j} = 0.
    \label{eq:ampere_normalized}
\end{equation}
Here, $\epsilon = \omega_{p,e}^{*^{-1}} \tau^{*}$ is the small parameter that defines the relative importance of charge separation and $\omega_{p,e}^{*} = \sqrt{\frac{n^{*}q_{p}^{2}}{\epsilon_{0}m_{e}}}$ is the reference electron plasma frequency. From this point forward, we assume all quantities are normalized and drop the hats for brevity. We further introduce the concept of a non-Amp{\`e}rean current, defined as
\begin{equation}
    \label{eq:nonamperean_current}
    \widetilde{j} = j/\epsilon = \sum_{\alpha}^{N_{s}} q_{\alpha} n_{\alpha} u_{\alpha}/ \epsilon
\end{equation}
and redefine the system of equations as: 
\begin{equation}
    \left({\cal V}_{\alpha}\right):\frac{\partial f_{\alpha}}{\partial t}
    +
    v \frac{\partial f_{\alpha}}{\partial x}
    +
    \frac{q_{\alpha}}{m_{\alpha}}E \frac{\partial f_{\alpha}}{\partial v}=0,
    \label{eq:vlasov_normalized}
\end{equation}
and 
\begin{equation}
    \left({\cal A}\right):\epsilon\frac{\partial E}{\partial t}+\widetilde{j} = 0.
    \label{eq:ampere_nonamperean_current}
\end{equation}
The motivation for this transformation will be described shortly. Furthermore, we show the classic relationship of the Gauss law serving as an involution constraint of the normalized and transformed Amp{\`e}re equation by taking the divergence of \eqref{eq:ampere_nonamperean_current},
\begin{equation}
    \epsilon\frac{\partial}{\partial t} \frac{\partial E}{\partial x} 
    +
    \frac{\partial \widetilde{j}}{\partial x}=0.
    \label{eq:div_of_ampere}
\end{equation}
By using the charge continuity equation, $\sum_{\alpha}^{N_{s}}q_{\alpha}\left<1,{\cal V}_{\alpha}\right>_{v} = \partial_{t} \rho + \epsilon \partial_x \widetilde{j} = 0$, with $\rho = \sum_{\alpha}^{N_{s}} q_{\alpha} \left<1,f_{\alpha}\right>_{v}$ the total charge density, we obtain: 
\begin{equation}
    \frac{\partial}{\partial t}\left(\epsilon^{2} \frac{\partial E}{\partial x} - \rho\right)=0.
    \label{eq:gauss_law}
\end{equation}
Finally, using the Gauss law, we can express the electron number density as,
\begin{equation}
    \label{eq:electron_number_density_in_terms_of_gauss_law}
    n_e = q^{-1}_{e} \left(\partial_x E - \sum^{N_i}_{i} q_i n_i \right),
\end{equation}
and recover the quasi-neutrality limit as  $\epsilon \rightarrow 0$,
\begin{equation}
    \label{eq:quasi_neutrality}
    n_e = - q^{-1}_e  \sum^{N_i}_{i} q_i n_i.
\end{equation}
Here, $n_i$ is the number density of the $i^{th}$ ion species and $N_i$ is the total number of ions in the system.

%
%
\section{Conditional Formulation\label{sec:conditional_formulation}}

We propose a novel reformulation of the underlying kinetic equation based on a conditional distribution function, which we define as 
\begin{equation}
    {\cal F}_{\alpha}\left(w,t|x\right) \equiv {\cal F}_{\alpha} \left(x,w,t\right)=\frac{f_{\alpha}\left(x,w v_{th,\alpha}+u_{\alpha},t\right)}{n_{\alpha}\left(x,t\right)}.
    \label{eq:conditional_distribution}
\end{equation}
Here, ${\cal F}_{\alpha}$ is a local probability distribution for species $\alpha$, conditional to the given $x$ in the scaled peculiar velocity space, $w=\frac{v-u_{\alpha}}{v_{th,\alpha}}$, $u_{\alpha} \left( x,t \right) = \gamma_{\alpha} / n_{\alpha}$ is the mean velocity, $\gamma_{\alpha} \left( x , t \right) = \left< v , f_{\alpha} \right>_{v}$ is the particle flux, $v_{th}\left(x,t\right)=\sqrt{\frac{2T}{m}}$ is the thermal speed, and $T_{\alpha}\left(x,t\right)=m_{\alpha}\frac{\left<\left|v-u_{\alpha}\right|^{2},f_{\alpha}\right>_{v}}{n_{\alpha}}$ is the temperature. Hereon, unless otherwise specified, we drop the explicit species indices for brevity.

We transform the Vlasov equation \eqref{eq:vlasov_normalized} from the original Cartesian phase-space coordinate system, $\vec{Z}=\left\{ t,x,v\right\} $, with the associated phase-space advection velocity, ${\vec{\dot{Z}}}=\left\{ 1,v,\frac{q}{m}E \right\} $:
\begin{equation}
    \nabla_{Z}\cdot\left(\vec{\dot{Z}}f\right)=0,\label{eq:vlasov_in_abstract_coordinate}
\end{equation}
into the new coordinate, $\vec{{\cal Z}}=\left\{ t,x,w\right\} $, using the standard machineries of curvilinear coordinates as: 
\begin{equation}
J^{-1}\nabla_{{\cal Z}}\cdot\left(J\vec{\dot{{\cal Z}}}f\right)=0.\label{eq:vlasov_in_new_abstract_coordinate}
\end{equation}
Here, $\nabla_{Z}=\left\{ \partial_{t},\partial_{x},\partial_{v} \right\} $,
$\nabla_{{\cal Z}} = \left\{ \partial_{t},\partial_{x},\partial_{w}\right\} $,
$J=\det\left|{\mathbb{J}}^{\left(ij\right)}\right|^{-1}=v_{th}$
is the Jacobian of transformation, ${\mathbb{J}}^{\left(ij\right)}=\frac{\partial\vec{{\cal Z}}}{\partial\vec{Z}}$ is the contravariant Jacobian matrix, and 
\begin{eqnarray}
    \label{eq:new_{a}dvection_{v}ector}
    \vec{\dot{{\cal Z}}}={\mathbb{J}}^{\left(ij\right)}\cdot\vec{\dot{Z}}=\nonumber \\
    \left\{ 1,v_{th}w + u,-\frac{1}{v_{th}}\left[w\frac{\partial v_{th}}{\partial t}+\frac{\partial u}{\partial t}+\left(v_{th}w+u\right) \frac{\partial}{\partial x} (v_{th}w)+\left(v_{th} w + u \right) \frac{\partial u}{\partial x}-\frac{q}{m} E\right]
    \right\}
\end{eqnarray}
is the transformed advection velocity. By redefining $f:=Jf$, multiplying and dividing $f$ with $n$, using the chain rule and the Leibniz rule, using the continuity equation, 
\begin{equation}
    \label{eq:continuity_0}
    \left<1,\left({\cal V}\right)\right>_{v} = 
    \frac{\partial n}{\partial t} 
    + 
    \frac{\partial \left( n u \right)}{\partial x}=0,
\end{equation}
and by using Eq. \eqref{eq:conditional_distribution}, we obtain the Vlasov equation in the new coordinate system in terms of ${\cal F}$ as: 
\begin{eqnarray}
    \label{eq:conditional_vlasov}
    \frac{\partial{\cal F}}{\partial t} + \frac{\partial }{\partial x} \left\{ \left[v_{th} w + u \right]{\cal F}\right\} -\nonumber \\
    \frac{1}{v_{th}} \frac{\partial}{\partial w}\left\{ \left[ w \frac{\partial v_{th}}{\partial t} + \frac{\partial u}{\partial t} + \left(v_{th} w+u \right) \frac{\partial}{\partial x}(v_{th} w)+\left(v_{th} w + u \right) \frac{\partial u}{\partial x} - \frac{q}{m}E\right]{\cal F}\right\} =\nonumber \\
    v_{th} w \left\{ \frac{\partial \ln{n}}{\partial x} - \frac{\partial u}{\partial x}\right\} {\cal F}.
\end{eqnarray}
Rearranging the inertial terms in the $w$ space, we rewrite Eq. \eqref{eq:conditional_vlasov} as: 
\begin{eqnarray}
    \label{eq:conditional_vlasov_rearranged}
    \frac{\partial{\cal F}}{\partial t}+\frac{\partial}{\partial x}\left\{ \left[v_{th} w + u \right]{\cal F}\right\} 
    -
    \frac{1}{v_{th}}\frac{\partial}{\partial w} \left\{ \underbrace{\left[\frac{\partial u }{\partial t} + u \frac{\partial u}{\partial x} \right]}_{a}{\cal F}\right\} - \nonumber \\
    \frac{\partial}{\partial w} \left\{ w \underbrace{\left[\frac{1}{v_{th}}\frac{\partial v_{th}}{\partial t}+\frac{u \partial_x v_{th}}{v_{th}}\right]}_{b}{\cal F}\right\} 
    -
    \frac{\partial}{\partial w} \left\{ \underbrace{\left(w \partial_x u + w^2 \partial_x v_{th} \right)}_{c}{\cal F}\right\} -\nonumber \\
    +
    \frac{\partial}{\partial w}\left\{ \frac{q E }{mv_{th}}{\cal F}\right\} 
    =
    \left\{ v_{th}\vec{w}\cdot\nabla_{x}\ln{n}-\nabla_{x}\cdot\vec{u}\right\} {\cal F}.
\end{eqnarray}
We further manipulate the Vlasov equation by realizing the relationship of the different inertial terms in the $\vec{w}$ space with the hydrodynamic equations. We project Eq. \eqref{eq:vlasov_normalized} onto $\vec{\phi}=\left\{ v, \frac{m\widetilde{w}^{2}}{2}\right\} ^{T}$ with $\widetilde{w} = w v_{th}$ to derive the familiar momentum and internal energy equations: 
\begin{equation}
    \label{eq:momentum_equation}
    \left\langle v ,\left({\cal V}\right)\right\rangle_{v}
    =
    n\left(\frac{\partial u}{\partial t} + u \frac{\partial u}{\partial x} \right) + \frac{1}{m} \frac{\partial P}{\partial x}
    -
    \frac{q}{m}nE=0,
\end{equation}
\begin{equation}
    \label{eq:internal_energy_equation}
    m \left<\frac{\widetilde{w}^{2}}{2},\left({\cal V}\right)\right>_{v}
    =
    \frac{1}{2}n\left(\frac{\partial T}{\partial t}
    +
    u\frac{\partial T}{\partial x} \right)
    +
    P \frac{\partial u}{\partial x}
    +
    \frac{\partial Q}{\partial x}
    =
    0.
\end{equation}
Here, $P=nT$ is the scalar isotropic pressure, and $Q=\frac{m}{2}\left<\widetilde{w}^{3},f\right>_{v}$ is the heat flux. From a trivial manipulation of Eqs. \eqref{eq:momentum_equation} and \eqref{eq:internal_energy_equation}, we obtain: 
\begin{equation}
    \label{eq:momentum_equation_wh_rhs}
    \left(\frac{\partial u}{\partial t} + u \frac{\partial u}{\partial x} \right)
    =
    -\frac{1}{nm} \frac{\partial P}{\partial x} +\frac{q}{m}\vec{E}
\end{equation}
and 
\begin{equation}
    \label{eq:internal_energy_equation_wh_rhs}
    \frac{1}{2}\left(\frac{\partial T}{\partial t} + u \frac{\partial T}{\partial x} \right)
    =
    -T \frac{\partial u}{\partial x}
    -
    \frac{1}{n}\frac{\partial Q}{\partial x}
    =
    0.
\end{equation}
We recognize that from the term $b$ in Eq. \eqref{eq:conditional_vlasov_rearranged}, that $v_{th}^{-1}\partial_{t}v_{th}=\frac{\partial\ln v_{th}}{\partial t} = \frac{1}{2T}\frac{\partial T}{\partial t}$ and $v_{th}^{-1} u\partial_x v_{th} = \frac{1}{2T}u \partial_x T$. We thus relate these terms to Eq. \eqref{eq:internal_energy_equation_wh_rhs} as:
\begin{equation}
    \label{eq:vth_evolution_equation}
    \frac{1}{v_{th}}\left(\frac{\partial v_{th}}{\partial t}+u \partial_x v_{th} \right)
    =
    \frac{1}{2T}\left(\frac{\partial T}{\partial t}
    +
    u \frac{\partial T}{\partial x}\right)
    =
    -\frac{1}{P} \frac{\partial Q}{\partial x} 
    -
    \frac{\partial u}{\partial x}.
\end{equation}
We substitute Eqs. \eqref{eq:momentum_equation_wh_rhs} and \eqref{eq:vth_evolution_equation} into Eq. \eqref{eq:conditional_vlasov_rearranged} to obtain: 
\begin{equation}
    \label{eq:full_3d3v_derivation_conditional_vlasov}
    \left({\cal F}\right):\frac{\partial{\cal F}}{\partial t}
    +
    \frac{\partial}{\partial x}\left(\dot{x}{\cal F}\right)
    +
    \frac{\partial}{\partial w}\left(\dot{w}{\cal F}\right)
    =
    \lambda{\cal F}
\end{equation}
where $\dot{x}=v_{th} w + u$, $\dot{w} = \left[\frac{ \partial_x P}{nmv_{th}}+\frac{w \partial_x Q}{P} - w^2 \partial_x v_{th}\right]$, and $\lambda = v_{th}w \partial_x\ln{n} - \partial_x u$. The conditional distribution function supports pointwise linear invariants for the first three $\vec{w}$ moments of ${\cal F}$, 
\begin{equation}
    \label{eq:cdf_invariance}
    \vec{{\cal M}}_{w}\equiv\left\langle \vec{\phi}_{w},{\cal F}\right\rangle _{w}=\vec{C}\;\forall\;\left(t,x\right),
\end{equation}
where $\vec{\phi}_{w}=\left\{ 1,w,w^{2}\right\} ^{T}$ and $\vec{C} = \left\{ 1, 0, \frac{1}{2} \right\}$ is a vector of constants --this is trivially shown by projecting Eq. \eqref{eq:full_3d3v_derivation_conditional_vlasov} onto $\vec{\phi}_{w}$ and showing, 
\begin{equation}
    \frac{\partial}{\partial t}\left\langle \vec{\phi}_{w},{\cal F}\right\rangle _{w}=\vec{0}.\label{eq:cdf_time_dependent_invariance}
\end{equation}
As such, the hydrodynamic moments are parameterized out of the evolution of ${\cal F}$ and are evolved through the associated fluid moment equations,
\begin{eqnarray}
    \label{eq:moment_equation_1d}
    & \frac{\partial n}{\partial t}+\frac{\partial nu}{\partial x} = 0 \nonumber \\
    \left({\cal M}\right): 
    & m\frac{\partial nu}{\partial t} + \frac{\partial}{\partial x}\left[mnu^2+P\right]-qnE=0 & .\label{eq:moment_equations}\\
    & \frac{1}{2}\left[\frac{\partial nT}{\partial t}
    +
    \frac{\partial}{\partial x}\left(u n T\right)\right]
    +
    \frac{\partial Q}{\partial x} 
    +
    P\frac{\partial u}{\partial x}
    =
    0\nonumber 
\end{eqnarray}
Here, $Q=\frac{mnv_{th}^{3}}{2}\left\langle w^{3},{\cal F}\right\rangle _{w}$ is the heat flux in terms of ${\cal F}$, respectively, which is self-consistently evaluated from the conditional distribution function to close the subsystem. The original Vlasov-Amp{\`e}re dynamics is recovered by simultaneously evolving Eqs. \eqref{eq:full_3d3v_derivation_conditional_vlasov}
and \eqref{eq:moment_equations} for each species and \eqref{eq:ampere_nonamperean_current} for the electric field; and constitutes our new dynamical system for state variables, $\left\{ {\cal F}_{1},\cdots,{\cal F}_{N_{s}},\vec{{\cal M}}_{1},\cdots,\vec{{\cal M}}_{N_{s}},E\right\}$, where $\vec{{\cal M}}_{\alpha}=\left\{ n_{\alpha},n_{\alpha}u_{\alpha},n_{\alpha}T_{\alpha}\right\}$. In this new formulation, the conservation of mass, momentum, and energy, as well as the Gauss law involution constraint, are isolated in the moment-field equations for $\vec{{\cal M}}$ for each species and $E$ and \emph{relevant symmetries are independent of the conditional distribution function}. 

Observe that in transforming from $f$ to $\mathcal{F}$ we centered, scaled, and normalized the distribution function. Previous authors have recognized the utility of velocity centering. For example, Ramos used it in various studies of kinetic magnetohydrodynamics \cite{Ramos_2008,Ramos_2015, Ramos_2016}, while Tronci and Holm-Tronci used it to formulate variational and Hamiltonian formulations of hybrid fluid-kinetic models \cite{Holm_Tronci_2012, Tronci_2013, Tronci_2020}. 

%
%
\section{Fast-Slow Formulation of the Conditional System\label{sec:fast_slow_formulation}}
The Vlasov-Amp{\`e}re system supports the quasi-neutral reduced model for vanishing $\epsilon$. In this limit fast electron time scales and short length scales are analytically removed from the system and provide algebraic relations for the electric field and electron fluid quantities through the classic Ohm's law and quasi-neutrality conditions. We show that this asymptotic limiting model is recovered in an especially transparent manner using the conditional formulation. For this purpose, we leverage the slow manifold reduction as discussed in Ref. \cite{burby_cnsnm_slow_manifold_2020}. 

A fast-slow formulation in dynamical systems is given as: 
\begin{eqnarray}
    \epsilon\frac{dy}{dt}=f_{\epsilon}\left(x,y\right),\\
    \frac{dx}{dt}=g_{\epsilon}\left(x,y\right),
\end{eqnarray}
where $x\in X$ ($y\in Y$) is the slow (fast) variable, $\left(x,y\right) \in X \times Y$, $f_{\epsilon} : X \times Y \rightarrow Y$ is the fast forcing function, $g_{\epsilon}:X\times Y\rightarrow X$ is the slow forcing function, and $D_yf_0(x,y) : Y\rightarrow Y$ is assumed invertible for each $(x,y)$ such that $f_0(x,y)=0$. Here, $D_yf_0(x,y)$ denotes the Gateaux derivative of $f_0$ with respect to the fast variable. In particular $D_yf_0(x,y)[\delta y] = (d/d\lambda)\mid_0 f(x,y+\lambda\,\delta y)$. For small $\epsilon$, every fast-slow system admits a \emph{formal slow-manifold}, defined as a function $y^*_\epsilon:X\rightarrow Y$, given as a formal power series in $\epsilon$, such that the ``graph" $S_\epsilon = \{(x,y) y = y^*_\epsilon(x)\}$ is an invariant manifold to all orders in $\epsilon$. For solutions $(x(t),y(t))$ with initial conditions on the formal slow manifold, the fast variables can be written as functionals of the slow variables, $y(t) = y^*_\epsilon(x(t))$. 

In our conditional Vlasov-Amp{\`e}re system, we exchange the electron momentum equation in favor of the non-Amp{\`e}rean current equation to cast into the fast-slow formulation. We rewrite our dynamical system as follows: 
\begin{equation}
    \label{eq:fs_cdf_equation}
    \frac{\partial{\cal F}_{\alpha}}{\partial t}
    +
    \frac{\partial}{\partial x}\left(\dot{x}_{\alpha}{\cal F}_{\alpha}\right)
    +
    \frac{\partial}{\partial w}\left(\dot{w}_{\alpha}{\cal F}_{\alpha}\right)
    =
    \lambda_{\alpha}{\cal F}_{\alpha},
\end{equation}
\begin{equation}
    \label{eq:fs_continuity_equation}
    \frac{\partial n_{\alpha}}{\partial t} 
    + 
    \frac{\partial}{\partial x} \left(n_{\alpha}u_{\alpha}\right)=0,
\end{equation}
\begin{equation}
    \label{eq:fs_momentum_equation}
    m_{i}\frac{\partial n_{i}u_{i}}{\partial t}
    +
    \frac{\partial}{\partial x}\left[m_{i}n_{i}u^2_{i} + P_{i}\right]
    -
    q_{i}n_{i}E=0,
\end{equation}
\begin{equation}
    \label{eq:fs_internal_energy_equation}
    \frac{1}{2}\left(\frac{\partial n_{\alpha} T_{\alpha}}{\partial t} 
    + 
    \frac{\partial}{\partial x}\left[u_{\alpha} n_{\alpha} T_{\alpha} \right]\right)
    +
    \frac{\partial Q_{\alpha}}{\partial x}
    +
    P_{\alpha}\frac{\partial u_{\alpha}}{\partial x} =0,
\end{equation}
\begin{equation}
    \label{eq:fs_current_equation}
    \epsilon\frac{\partial\widetilde{j}}{\partial t}
    +
    \frac{\partial}{\partial x} \left[\sum_{i}^{N_{i}}q_{i}n_{i}u^2_{i} + q_{e}n_{e}u^2_{e} +\sum_{\alpha}^{N_{s}}\frac{q_{\alpha}}{m_{\alpha}}P_{\alpha}\right]
    -
    \left( \sum^{N_s}_{\alpha} \frac{q^2_{\alpha} n_{\alpha} }{m_{\alpha}} \right) E 
    = 0,
\end{equation}
\begin{equation}
    \label{eq:fs_ampere_equation}
    \epsilon\frac{\partial E}{\partial t}+\widetilde{j} = 0.
\end{equation}
Here, $\alpha \in \left\{1,\cdots, N_s \right\}$ denotes all species including electrons, $i \in \left\{1,\cdots,N_i \right\}$ denotes only ion species, $\dot{x}_{\alpha}=v_{th,\alpha}w+u_{\alpha}$, $\dot{w}_{\alpha}=\frac{\partial_{x}P_{\alpha}}{n_{\alpha}m_{\alpha}v_{th,\alpha}}+w\frac{\partial_{x}Q_{\alpha}}{P_{\alpha}}-w^{2}\partial_{x}v_{th,\alpha}$,
$\lambda_{\alpha}=v_{th,\alpha}w\partial_{x}\ln n_{\alpha}-\partial_{x}u_{\alpha}$, $Q_{\alpha}=\frac{m_{\alpha}n_{\alpha}v_{th,\alpha}^{3}}{2} \left< w^{3},{\cal F}_{\alpha} \right>_w$, and $u_{e} = \frac{\epsilon \widetilde{j} - \sum_{i}^{N_{i}}q_{i}n_{i} u_{i}}{q_{e}n_{e}}$ is the parameterized electron drift velocity in terms of $\left\{ n_{i,}n_{e},u_{i},\widetilde{j}\right\} $. The slow variables are $x=\left\{ {\cal F}_{\alpha},n_{\alpha},u_{i},T_{\alpha}\right\}$, while the fast variables are $y=\left\{ \widetilde{j},E\right\} $. It is seen from the Amp{\`e}re equation that when $\epsilon \rightarrow 0$, $\widetilde{j}=0$ or equivalently $j=0$, which is precisely the ambipolarity condition. Similarly, the Ohm's law for the electric field can be recovered from the current equation by additionally invoking the ion-to-electron mass ratio, $\frac{m_{i}}{m_{e}}\gg1$, as, 
\begin{equation}
\label{eq: explicit eq E epsilon 0}
    E=\frac{1}{\omega_{p}^{2}}\frac{\partial}{\partial x}\left[\sum_{i}^{N_{i}}q_{i}n_{i}u^2_{i} 
    + 
    q_{e}n_{e}u^2_{e}
    +
    \sum_{\alpha}^{N_{s}}\frac{q_{\alpha}}{m_{\alpha}}P_{\alpha}\right]
    \approx
    \frac{\partial_x P_{e}}{q_{e}n_{e}}.
\end{equation}
%
%
%
%
%
%
\section{A compatible and asymptotic preserving discretization
\label{sec:comparible_discretization}}
A compatible finite-difference discretization scheme is proposed for the conditional fast-slow formulation in Eqs. \eqref{eq:fs_cdf_equation} to \eqref{eq:fs_ampere_equation} that is, simultaneously conserving mass, momentum, and energy, while preserving solution monotonicity, quasi-neutrality asymptotic, and the Gauss law. We discretize in $x\in \Omega_x \in\mathbb{T}$ with $\Omega_x = \left[0,L_{x}\right)$ using a staggered dual mesh representation where scalar quantities $\left\{ {\cal F}_{\alpha},n_{\alpha},T_{\alpha}\right\} $ are defined on cell centers while vector quantities $\left\{ u_{i},\widetilde{j},E\right\} $ are defined on cell faces; refer to Figure \ref{fig:staggered_grid}.
\begin{figure}[th]
    \begin{centering}
    \includegraphics[width=0.6\textwidth]{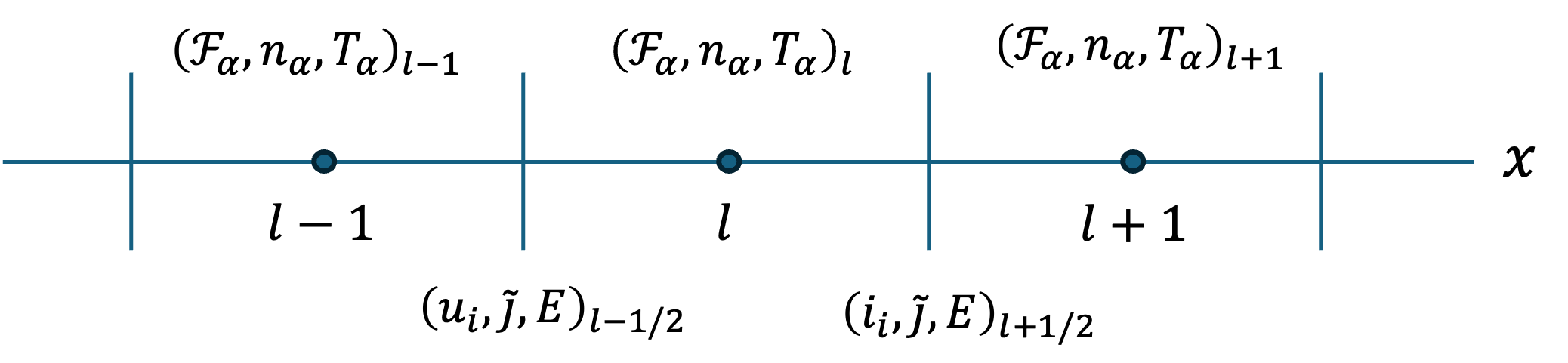}
    \par\end{centering}
    \caption{Staggered grid representation of solution variables. $\left\{ {\cal F}_{\alpha},n_{\alpha},T_{\alpha}\right\} $ are defined on cell centers, $l$, while $\left\{ u_{i},\widetilde{j},E\right\}$ are defined on cell faces, $l+\frac{1}{2}$, defined on half-integer points.
    \label{fig:staggered_grid}}
\end{figure}
In the velocity space, $w \in \Omega_w \subset \mathbb{R}$ where $\Omega_w = \left[-w_{max},w_{max}\right]$, and ${\cal F}_{\alpha}$ is discretized using a conservative finite difference with a zero flux boundary condition, $\left. \dot{w}{\cal F} \right|_{\partial \Omega_w} = 0$. The cell-centered $x$ grid is $\mathfrak{x}_{c} = \left\{ \left(l-\frac{1}{2}\right)\Delta\mathfrak{x}\right\} _{l=1}^{N_{x}} \in \mathbb{R}^{N_x}$ while the cell-face grid is $\mathfrak{x}_{f}=\left\{ \left(l-\frac{1}{2}\right)\Delta\mathfrak{x}\right\} _{l=1/2}^{N_{x}+1/2} \in \mathbb{R}^{N_x+1}$ and the $w$ grid is $\mathfrak{w}=\left\{ -w_{max}+\left(p-\frac{1}{2}\right)\Delta \mathfrak{w}\right\} _{p=1}^{N_{w}} \in \mathbb{R}^{N_w}$, where $N_{x}$ ($N_{w}$) is the number of cells in $x$ ($w$), $\Delta\mathfrak{x}=\frac{L_{x}}{N_{x}}$ ($\Delta\mathfrak{w}=\frac{2w_{max}}{N_{w}}$) is the cell intervals in $x$ ($w$), $L_{x}$ is the domain size in $x$, and $w_{max}$ is half the domain size in $w$. To compute the moments, we employ a midpoint quadrature and denote by $\left\langle A_{l}\left(w\right),B_{l}\left(w\right)\right\rangle _{\delta w} = \sum_{p=1}^{N_{w}} \Delta\mathfrak{w}A_{l} \left( \mathfrak{w}_{p} \right) B_{l} \left( \mathfrak{w}_{p} \right)$, where the subscript $l$ ($p$) denotes the real (velocity) space grid location. All vector quantities are collocated on real space cell faces to ensure that the discrete Jacobian of the fast system in the limit of $\epsilon\rightarrow0$ is invertible (to be discussed shortly), while the staggering in $n$ and $E$ is necessitated by the discrete Gauss law involution, similarly to the classical Yee scheme. The staggered formulation motivates the use of the internal energy formulation over the total energy to overcome the numerical challenges associated with the degeneracy of the drift velocities in staggered schemes for conservative representation of a fluid system. We propose an IMEX time integrator for the coupled system of equations where the conditional Vlasov equation,  \eqref{eq:fs_cdf_equation}, is evolved using a two-stage explicit Runge-Kutta (RK2) time integration with moment parameterization of $\left\{ \dot{x},\dot{w},\lambda\right\} $ evaluated using time centering, while the moment field subsystem is evolved using an implicit midpoint rule to satisfy discrete conservation symmetries (to be discussed shortly). For convenience, we define the solution vector for the coupled system of equations at time step $k$ as, 
$\vec{z}^{k} = \left\{\vec{\cal F},\vec{M} \right\}^{k}\in\mathbb{R}^{\left(N_s N_x N_w \right)+\left(3N_sN_x\right)+N_x}$, 
$\vec{\cal F} = \left\{ \vec{\cal F}_1 , \cdots, \vec{\cal F}_{N_s} \right\} \in \mathbb{R}^{N_s N_x N_w}$ 
is the conditional distribution function solution vector containing all species, $\vec{\cal F}_{\alpha} \in \mathbb{R}^{N_x N_w}$ is the conditional distribution function solution vector for species $\alpha$, $\vec{M} = \left\{\vec{\cal M} , \vec{E} \right\} \in \mathbb{R}^{3N_s N_x + N_x}$ is the solution vector containing the moment-field subsystem, $\vec{\cal M} = \left\{\vec{n}_1 , \cdots, \vec{n}_{N_s} , \vec{u}_1 , \cdots, \vec{u}_{N_i} , \cdots, \vec{T}_1 , \cdots \vec{T}_{N_s} , \vec{\widetilde{j}} \right\} \in \mathbb{R}^{3N_s N_x}$ is the moment solution vector for all species including the non-Amp{\`e}rean current, and $\vec{E} \in \mathbb{R}^{N_x}$ is the electric field solution vector. 

Consider the fully discretized conditional Vlasov equation, where the explicit species index, $\alpha$, is dropped for brevity but shall not be confused with the vector of conditional distribution function solution containing all species:
\begin{equation}
    \label{eq:discretized_cdf_vlasov}
    {\cal F}_{l,p}^{*}={\cal F}_{l,p}^{k}+\Delta tG\left(\vec{{\cal F}}^{k},\vec{{\cal M}}^{k+1},\vec{{\cal M}}^{k}\right)_{l,p},
\end{equation}
\begin{equation}
    \label{eq:discretized_cdf_vlasov_2}
    {\cal F}_{l,p}^{k+1}={\cal F}_{l,p}^{k}+\frac{\Delta t}{2}\left[G\left(\vec{{\cal F}}^{*},\vec{{\cal M}}^{k+1},\vec{{\cal M}}^{k}\right)_{l,p}+G\left(\vec{{\cal F}}^{k},\vec{{\cal M}}^{k+1},\vec{{\cal M}}^{k}\right)_{l,p}\right],
\end{equation}
where for an unspecified RK stage for ${\cal F}$, the discrete forcing function is given as
\begin{equation}
    G\left(\vec{{\cal F}},\vec{{\cal M}}^{k+1},\vec{{\cal M}}^{k}\right)_{l,p}=
    -\frac{\widehat{\dot{x}{\cal F}}_{l+\frac{1}{2},p}-\widehat{\dot{x}{\cal F}}_{l-\frac{1}{2},p}}{\Delta\mathfrak{x}}
    -\frac{\widehat{\dot{w}{\cal F}}_{l,p+\frac{1}{2}}-\widehat{\dot{w}{\cal F}}_{l,p-\frac{1}{2}}}{\Delta\mathfrak{w}}
    +\left( \lambda_{l,p} \right)^{k+\frac{1}{2}}{\cal F}_{l,p}.
\end{equation}
Here, $\widehat{\dot{x}{\cal F}}_{l+\frac{1}{2},p} = \left(\bar{\dot{x}}_{l+\frac{1}{2},p}\right)^{k+\frac{1}{2}} SMART \left( \left( \bar{\dot{x}}_{l+\frac{1}{2},p} \right)^{k+\frac{1}{2}},\vec{{\cal F}}_{p}\right)$ is the numerical flux for the real space direction at cell faces with ${\cal F}$ reconstructed using the monotone-preserving SMART discretization \cite{gaskell_ijnmf_1988_smart} and shown in \ref{app:smart_discretization} for completeness, $\left(\phi\right)^{k+\frac{1}{2}}=\frac{\phi^{k+1}+\phi^{k}}{2}$, is the temporal averaging procedure, $\widehat{\dot{w}{\cal F}}_{l,p+\frac{1}{2}} = \left( \bar{\dot{w}}_{l,p+\frac{1}{2}}\right)^{k+\frac{1}{2}} SMART\left(\left( \bar{ \dot{w}}_{l,p+\frac{1}{2}} \right)^{k+\frac{1}{2}},\vec{{\cal F}}_{l}\right)$, $\bar{\dot{x}}_{l+\frac{1}{2},p}=\frac{\dot{x}_{l+1,p} + \dot{x}_{l,p}}{2}$, $\dot{x}_{l,p}=v_{th,l}\mathfrak{w}_{p}+u_{l},$ $u_{l}=\frac{u_{l+\frac{1}{2}}+u_{l-\frac{1}{2}}}{2}$, $\dot{w}_{l,p+\frac{1}{2}}=\frac{\dot{w}_{l,p}+\dot{w}_{l,p+1}}{2}$,
\begin{equation}
    \dot{w}_{l,p}=\frac{P_{l+1,p}-P_{l-1,p}}{2n_{l}mv_{th,l}\Delta\mathfrak{x}}+\mathfrak{w}_{p}\frac{Q_{l+1}-Q_{l-1}}{2P_{l}}-\mathfrak{w}_{p}^{2}\frac{v_{th,l+1}-v_{th,l-1}}{2\Delta\mathfrak{x}},
\end{equation}
$\vec{{\cal F}_{p}}=\left\{ {\cal F}_{1,p},{\cal F}_{2,p},\cdots,{\cal F}_{N_{x},p}\right\} \in\mathbb{R}^{N_{x}}$
is a vector of the distribution function at a given velocity grid, $p$, $\vec{{\cal F}_{l}}=\left\{ {\cal F}_{l,1},{\cal F}_{l,2},\cdots,{\cal F}_{l,N_{w}}\right\} \in\mathbb{R}^{N_{w}}$ is a vector of the distribution function at a given real space grid,
$l$, and
\begin{equation}
    \lambda_{l,p}=v_{th,l}\mathfrak{w}_{p}\frac{\ln n_{l+1}-\ln n_{l-1}}{2\Delta\mathfrak{x}}-\frac{u_{l+1}-u_{l-1}}{2\Delta\mathfrak{x}}.
\end{equation}

On the other hand, for the moment-field subsystems in Eqs. \eqref{eq:fs_continuity_equation}-\eqref{eq:fs_ampere_equation}, we employ a consistent and conservative shock-capturing formulation in space while employing an implicit midpoint integration in time that satisfies the discrete energy conservation theorem (to be shown shortly). The fully discretized moment-field equations are given as:
\begin{equation}
    \label{eq:1d_discrete_continuity}
    R_{n_{\alpha},l+\frac{1}{2}} \left(\vec{M}^{k+1}; \vec{z}^{k} \right) \equiv
    \frac{n_{\alpha,l}^{k+1}-n_{\alpha,l}^{k}}{\Delta t}+\left(\frac{\widehat{F}_{n_{\alpha},l+\frac{1}{2}}-\widehat{F}_{n_{\alpha},l-\frac{1}{2}}}{\Delta\mathfrak{x}}\right)^{k+\frac{1}{2}}=
    0,
\end{equation}
\begin{eqnarray}
    \label{eq:1d_discrete_ion_momentum}
    R_{u_{i},l+\frac{1}{2}} \left(\vec{M}^{k+1} ; \vec{z}^{k} \right) \equiv
    \frac{\bar{n}_{i,l+\frac{1}{2}}^{k+1}u_{i,l+\frac{1}{2}}^{k+1}-\bar{n}_{i,l+\frac{1}{2}}^{k}u_{i,l+\frac{1}{2}}^{k}}{\Delta t}+\left(\frac{\widehat{F}_{u_{i},l+1}-\widehat{F}_{u_{i},l}}{\Delta\mathfrak{x}}\right)^{k+\frac{1}{2}}+\nonumber \\
    \left(\frac{P_{i,l+1}-P_{i,l}}{m_{i}\Delta\mathfrak{x}}\right)^{k+\frac{1}{2}} -\frac{q_{i}}{m_{i}}\left(\bar{n}_{i,l+\frac{1}{2}}E_{l+\frac{1}{2}}\right)^{k+\frac{1}{2}}=
    0,
\end{eqnarray}
\begin{eqnarray}
    \label{eq:1d_discrete_internal_energy}
    R_{T_{\alpha},l} \left(\vec{M}^{k+1}, \vec{Q}\left[\vec{\cal F}^{k+1} \right] ; \vec{z}^{k} \right) \equiv
    \frac{1}{2}\frac{n_{\alpha,l}^{k+1}T_{\alpha,l}^{k+1}-n_{\alpha,l}^{k}T_{\alpha,l}^{k}}{\Delta t}+\left(\frac{\widehat{F}_{T_{\alpha},l+\frac{1}{2}}-\widehat{F}_{T_{\alpha},l-\frac{1}{2}}}{\Delta\mathfrak{x}}\right)^{k+\frac{1}{2}}+\nonumber \\
    \left(\frac{\widehat{Q}_{\alpha,i+\frac{1}{2}}-\widehat{Q}_{\alpha,i-\frac{1}{2}}}{\Delta\mathfrak{x}}\right)^{k+\frac{1}{2}}+\left(P_{\alpha,l}\frac{u_{\alpha,l+\frac{1}{2}}-u_{\alpha,l-\frac{1}{2}}}{\Delta\mathfrak{x}}\right)^{k+\frac{1}{2}}+S_{\alpha,l}=
    0,
\end{eqnarray}
\begin{eqnarray}
    \label{eq:1d_discrete_current}
    R_{\widetilde{j},l+\frac{1}{2}}\left(\vec{M}^{k+1};\vec{z}^{k} \right)\equiv
    \epsilon\frac{\widetilde{j}_{l+\frac{1}{2}}^{k+1}-\widetilde{j}_{l+\frac{1}{2}}^{k}}{\Delta t}+
    \left(\frac{\widehat{F}_{\widetilde{j},l+1}-\widehat{F}_{\widetilde{j},l}}{\Delta\mathfrak{x}}\right)^{k+\frac{1}{2}}+\sum_{\alpha}^{N_{s}}\frac{q_{\alpha}}{m_{\alpha}}\left(\frac{P_{\alpha,l+1}-P_{\alpha,l}}{\Delta\mathfrak{x}}\right)^{k+\frac{1}{2}}-\nonumber \\
    \sum_{\alpha}^{N_{s}}\frac{q_{\alpha}^{2}}{m_{\alpha}}\left(\bar{n}_{\alpha,l+\frac{1}{2}}E_{l+\frac{1}{2}}\right)^{k+\frac{1}{2}}=
    0,
\end{eqnarray}
\begin{equation}
    \label{eq:1d_discrete_ampere}
    R_{E,l+\frac{1}{2}} \left( \vec{M}^{k+1} ; \vec{z}^{k} \right)\equiv
    \epsilon\frac{E_{l+\frac{1}{2}}^{k+1}-E_{l+\frac{1}{2}}^{k}}{\Delta t}+
    \left(\widehat{\widetilde{j}}_{l+\frac{1}{2}}\right)^{k+\frac{1}{2}}=
    0.
\end{equation}
Here, $\bar{n}_{l+\frac{1}{2}}=\frac{n_{l+1}+n_{l}}{2}$,
\begin{equation}
    S_{\alpha,l}=\frac{{\cal R}_{\alpha,l+\frac{1}{2}}+{\cal R}_{\alpha,l-\frac{1}{2}}}{2},
\end{equation}
is a numerical source term for the internal energy equation that is inspired by similar considerations as in Refs. \cite{goudon_fvca8_2017_staggered_euler_scheme_staggered, berthelin_2015_hal_barotropic_euler_model_staggered} that ensures total energy conservation theorem in a weak sense for the staggered formulation with ${\cal R}_{\alpha,l+\frac{1}{2}}$ defined as,
\begin{eqnarray}
    {\cal R}_{\alpha,l+\frac{1}{2}}=m_{\alpha}\left\{ \frac{\bar{n}_{\alpha,l+\frac{1}{2}}^{k+1}\left(u_{\alpha,l+\frac{1}{2}}^{k+1}\right)^{2}-\bar{n}_{\alpha,l+\frac{1}{2}}^{k}\left(u_{\alpha,l+\frac{1}{2}}^{k}\right)^{2}}{2\Delta t}+\left(\frac{\widehat{F}_{K_{\alpha},l+1}-\widehat{F}_{K_{\alpha},l}}{\Delta\mathfrak{x}}\right)^{k+\frac{1}{2}}+\right.\nonumber \\
    \left.\left(u_{\alpha,l+\frac{1}{2}}\frac{P_{\alpha,l+1}-P_{\alpha,l}}{m_{\alpha}\Delta\mathfrak{x}}\right)^{k+\frac{1}{2}}-\frac{q_{\alpha}}{m_{\alpha}}\left(\widehat{F}_{n_{\alpha},l+\frac{1}{2}}\right)^{k+\frac{1}{2}}\left(E_{l+\frac{1}{2}}\right)^{k+\frac{1}{2}}\right\} 
    \label{eq:1d_discrete_kinetic_energy_residual}
\end{eqnarray}
This term can be interpreted as the residual in the auxiliary evolution equation for the kinetic energy density, $\frac{nu^{2}}{2}$, that cleans the discrete total energy --the sum of kinetic and internal energy-- violation encountered in staggered schemes. We note that this term is strictly zero in the continuum and also acts as a kind of a Lagrange multiplier for the system that ensures discrete total energy conservation for the fluid in the absence of electric fields. 

We employ a monotone-preserving numerical approximation to the Euler fluxes for the mass, $\widehat{F}_{n_{\alpha},l+\frac{1}{2}}$, momentum, $\widehat{F}_{u_{i},l}$, internal energy, $\widehat{F}_{T_{\alpha},l+\frac{1}{2}}$, current, $\widehat{F}_{\widetilde{j},l}$, and auxiliary kinetic energy equations, $\widehat{F}_{K_{\alpha},l}$, from Ref. \cite{berthelin_2015_hal_barotropic_euler_model_staggered, goudon_fvca8_2017_staggered_euler_scheme_staggered, brunel_jsc_2024_3d_staggered_general_mesh_staggered} based on the kinetic-flux scheme and defined as:
\begin{equation}
    \label{eq:continuity_flux_avg}
    \widehat{F}_{n_{\alpha},l+\frac{1}{2}}=\widehat{F}_{n_{\alpha},l+\frac{1}{2}}^{-}+\widehat{F}_{n_{\alpha},l+\frac{1}{2}}^{+},
\end{equation}
\begin{equation}
    \label{eq:continuity_flux_pos_component}
    \widehat{F}_{n_{\alpha},l+\frac{1}{2}}^{-}\left(\bar{c}_{l+\frac{1}{2}},n_{\alpha,l},u_{\alpha,l+\frac{1}{2}}\right)
    =
    \left\{ \begin{array}{cc}
    0 & \text{if}\;u_{\alpha,l+\frac{1}{2}}\le-\bar{c}_{l+\frac{1}{2}}\\
    \frac{n_{\alpha,l}\left(u_{\alpha,l+\frac{1}{2}}+\bar{c}_{l+\frac{1}{2}}\right)^{2}}{4\bar{c}_{l+\frac{1}{2}}} & \text{if}\; -\bar{c}_{l+\frac{1}{2}}<u_{\alpha,l+\frac{1}{2}}+\bar{c}_{l+\frac{1}{2}}\\
    n_{\alpha,l}u_{\alpha,l+\frac{1}{2}} & \text{otherwise}
    \end{array}\right.,
\end{equation}
\begin{equation}
    \label{eq:continuity_flux_neg_component}
    \widehat{F}_{n_{\alpha},l+\frac{1}{2}}^{+}\left(\bar{c}_{l+\frac{1}{2}},n_{\alpha,l+1},u_{\alpha,l+\frac{1}{2}}\right)
    =
    \left\{ \begin{array}{cc}
    n_{\alpha,l+1}u_{\alpha,l+\frac{1}{2}} & \text{if}\;u_{\alpha,l+\frac{1}{2}}\le-\bar{c}_{l+\frac{1}{2}}\\
    -\frac{n_{\alpha,l+1}\left(u_{\alpha,l+\frac{1}{2}}-\bar{c}_{l+\frac{1}{2}}\right)^{2}}{4\bar{c}_{l+\frac{1}{2}}} & \text{if}\; - \bar{c}_{l+\frac{1}{2}}<u_{\alpha,l+\frac{1}{2}}+\bar{c}_{l+\frac{1}{2}}\\
    0 & \text{otherwise}
    \end{array}\right.,
\end{equation}
\begin{equation}
    \label{eq:speed_of_sound_averaging}
    \bar{c}_{l+\frac{1}{2}}=\frac{c_{l}+c_{l+1}}{2},
\end{equation}
\begin{equation}
    \label{eq:speed_of_sound}
    c_{l}=\sqrt{\frac{\sum_{\alpha}^{N_{s}}n_{\alpha,l}T_{\alpha,l}}{\sum_{\alpha}^{N_{\alpha}}m_{\alpha}n_{\alpha,l}},}
\end{equation}
\begin{equation}
    \label{eq:momentum_flux}
    \widehat{F}_{u_{i},l}=\frac{u_{i,l+\frac{1}{2}}\left(\widehat{F}_{n_{i},l-\frac{1}{2}}^{+}+\widehat{F}_{n_{i},l+\frac{1}{2}}^{+}\right)+u_{i,l-\frac{1}{2}}\left(\widehat{F}_{n_{i},l-\frac{1}{2}}^{-}+\widehat{F}_{n_{i},l+\frac{1}{2}}^{-}\right)}{2},
\end{equation}
\begin{equation}
    \label{eq:energy_flux}
    \widehat{F}_{T_{\alpha},l+\frac{1}{2}} =
    \frac{T_{\alpha,l}}{2} \widehat{F}_{n_{\alpha},l+\frac{1}{2}}^{-} +
    \frac{T_{\alpha,l+1}}{2}\widehat{F}_{n_{\alpha},l+\frac{1}{2}}^{+},
\end{equation}
\begin{equation}
    \label{eq:current_flux}
\widehat{F}_{\widetilde{j},l}=\sum_{i}^{N_{i}}q_{i}\widehat{F}_{u_{i},l}+q_{e}\widehat{F}_{u_{e},l},
\end{equation}
\begin{equation}
    \label{eq:electron_momentum_flux}
    \widehat{F}_{u_{e},l}=\frac{u_{e,l+\frac{1}{2}}\widehat{F}_{n_{e},l+\frac{1}{2}}+u_{e,l-\frac{1}{2}}\widehat{F}_{n_{e},l-\frac{1}{2}}}{2},
\end{equation}
\begin{equation}
    \label{eq:current_continuity_flux}
    \widehat{\widetilde{j}}_{l+\frac{1}{2}}=\frac{\sum_{i}^{N_{i}}q_{i}\widehat{F}_{n_{i},l+\frac{1}{2}} + q_{e}\widehat{F}_{n_{e},l+\frac{1}{2}}}{\epsilon},
\end{equation}
\begin{equation}
    \widehat{F}_{K_{\alpha},l}
    =
    \frac{\frac{u_{\alpha,l+\frac{1}{2}}^{2}}{2}\left(\widehat{F}_{n_{\alpha},l-\frac{1}{2}}^{+}+\widehat{F}_{n_{\alpha},l+\frac{1}{2}}^{+}\right)
    +
    \frac{u_{\alpha,l-\frac{1}{2}}^{2}}{2}\left(\widehat{F}_{n_{\alpha},l-\frac{1}{2}}^{-}+\widehat{F}_{n_{\alpha},l+\frac{1}{2}}^{-}\right)}{2},
\end{equation}
\begin{equation}
    \label{eq:discrete_electron_drift_velocity}
    u_{e,l+\frac{1}{2}}=\frac{\epsilon\widetilde{j}_{l+\frac{1}{2}}-\sum_{i}^{N_{i}}q_i \bar{n}_{i,l+\frac{1}{2}}u_{i,l+\frac{1}{2}}}{q_e \bar{n}_{e,l+\frac{1}{2}}},
\end{equation}
and we use a linear reconstruction for the heat flux, $\widehat{Q}_{\alpha,l+\frac{1}{2}}=\frac{Q_{\alpha,l+1}+Q_{\alpha,l}}{2}$ with $Q_{\alpha,l} = \frac{m_{\alpha}n_{\alpha,l} v_{th,\alpha,l}^{3}}{2}\left\langle w^{3}, \vec{\cal F}_{\alpha,l}\right\rangle _{\delta w}$. We note that an IMEX integrator for ${\cal F}$ was chosen to avoid the need to invert a large system of equations and a two-stage RK integrator was chosen over a single-stage forward Euler scheme to address the well-known numerical stability issues associated with solving hyperbolic PDEs. Although individual time integrators (RK2 for the kinetic system and implicit midpoint for the moment-field system) are second order accurate, the manner in which we combine them is not guaranteed to be so and we show in our numerical experiments that we observe first order accuracy in time.

In the following subsections, we prove that the proposed discretization satisfies the discrete conservation theorems and discretely preserves the Gauss law and the quasi-neutral limit as $\epsilon \rightarrow 0$. We stress that \emph{all of these properties are segregated away from the kinetic system and isolated in the moment-field subsystem}. 

%
%
\subsection{Gauss Law Preservation\label{subsubsec:discrete_gauss_law}}

To show that the Gauss law is satisfied in the discrete case, we start by exposing the continuum symmetries that give rise to the equivalence between the Amp{\`e}re equation and Gauss law. We begin by applying the divergence operator on Eq. \eqref{eq:fs_ampere_equation} to obtain,
\begin{equation}
    \epsilon\frac{\partial}{\partial t}\frac{\partial E}{\partial x}-\frac{\partial\widetilde{j}}{\partial x}=0.
\end{equation}
By multiplying the individual species continuity equation, Eq. \eqref{eq:fs_continuity_equation}, with their charge, and summing over all species, we obtain the charge continuity equation,
\begin{equation}
    \frac{\partial\rho}{\partial t}+\epsilon\frac{\partial\widetilde{j}}{\partial x}=0,
\end{equation}
to obtain
\begin{equation}
    \frac{\partial}{\partial t}\left(\epsilon^{2}\frac{\partial E}{\partial x}-\rho\right)=0.
\end{equation}

\begin{proposition}
    The proposed discretization satisfies the Gauss law.
\end{proposition}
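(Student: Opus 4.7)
The plan is to mirror at the discrete level the continuum derivation that led to \eqref{eq:gauss_law}, exploiting the staggered placement of $n_\alpha,\rho$ at cell centers and $E,\widetilde{j}$ at faces together with the numerical flux identity \eqref{eq:current_continuity_flux}. First I would form the discrete spatial divergence of the discrete Amp\`ere equation \eqref{eq:1d_discrete_ampere} by writing the relation at faces $l+1/2$ and $l-1/2$, subtracting, and dividing by $\Delta\mathfrak{x}$; this produces an update at cell center $l$ for the cell-centered quantity $(E_{l+1/2}-E_{l-1/2})/\Delta\mathfrak{x}$ driven by the centered divergence of the numerical current flux $\widehat{\widetilde{j}}$ at the implicit-midpoint time level.

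Second, I would multiply the discrete continuity equation \eqref{eq:1d_discrete_continuity} for each species by $q_\alpha$, sum over $\alpha$, and thereby obtain a discrete charge-continuity identity at cell center $l$. The key algebraic step is the identity \eqref{eq:current_continuity_flux}, namely $\epsilon\widehat{\widetilde{j}}_{l+1/2}=\sum_\alpha q_\alpha \widehat{F}_{n_\alpha,l+1/2}$, which shows that the summed mass-flux divergence in the charge-continuity identity is exactly $\epsilon$ times the current-flux divergence appearing in the differentiated Amp\`ere relation. Subtracting $\epsilon$ times the charge-continuity identity from the differentiated Amp\`ere relation then kills the flux terms exactly (their common $(\cdot)^{k+1/2}$ time-centering drops out under the subtraction), leaving the discrete conservation law
\begin{equation}
\frac{1}{\Delta t}\Bigl[\Bigl(\epsilon^{2}\tfrac{E_{l+1/2}-E_{l-1/2}}{\Delta\mathfrak{x}}-\rho_{l}\Bigr)^{k+1}-\Bigl(\epsilon^{2}\tfrac{E_{l+1/2}-E_{l-1/2}}{\Delta\mathfrak{x}}-\rho_{l}\Bigr)^{k}\Bigr]=0,
\end{equation}
the exact discrete analog of \eqref{eq:gauss_law}. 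Consequently, if the initial data satisfy $\epsilon^{2}(E_{l+1/2}-E_{l-1/2})/\Delta\mathfrak{x}=\rho_{l}$, the discrete Gauss law holds exactly at every subsequent time step.

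The only subtle point, and what I would flag as the main obstacle, is precisely the algebraic matching in the second step: had $\widehat{\widetilde{j}}$ been constructed from any reconstruction other than the charge-weighted sum of the continuity fluxes (for instance by a separate interpolation of $u_\alpha$ and $n_\alpha$ to faces, as in many collocated schemes), the cancellation would fail and a non-vanishing telescoping remainder would appear. The staggered-grid collocation of $\widehat{\widetilde{j}}_{l+1/2}$ with $E_{l+1/2}$, together with the common implicit-midpoint time rule used for both Amp\`ere and continuity, is what makes the cancellation exact; beyond this, everything reduces to linearity of the discrete divergence and of the time difference, which is why the kinetic subsystem and its numerical flux choices play no role in the argument.
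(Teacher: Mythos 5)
Your proposal is correct and follows essentially the same route as the paper: take the discrete divergence of the discrete Amp{\`e}re equation, form the charge-weighted sum of the discrete continuity equations, and let the flux identity $\epsilon\widehat{\widetilde{j}}_{l+\frac{1}{2}}=\sum_{\alpha}q_{\alpha}\widehat{F}_{n_{\alpha},l+\frac{1}{2}}$ cancel the divergence terms, yielding exactly Eq.~\eqref{eq:1d_discrete_gauss_from_ampere}. The only nit is a factor-of-$\epsilon$ bookkeeping slip in the prose (the combination that kills the fluxes is $\epsilon$ times the differenced Amp{\`e}re relation minus the charge-continuity identity, not the reverse), but your final displayed identity is the correct one and the argument is otherwise identical to the paper's.
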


\begin{proof}
In the discrete, the total charge continuity equation is derived from the Eq. \eqref{eq:1d_discrete_continuity} as,
\begin{equation}
    \sum_{\alpha}^{N_{\alpha}}\left\{ \frac{q_{\alpha}n_{\alpha,l}^{k+1}-q_{\alpha}n_{\alpha,l}^{k}}{\Delta t}+\left(\frac{q_{\alpha}\widehat{F}_{n_{\alpha},l+\frac{1}{2}}-q_{\alpha}\widehat{F}_{n_{\alpha},l-\frac{1}{2}}}{\Delta\mathfrak{x}}\right)^{k+\frac{1}{2}}\right\} \equiv\frac{\rho_{l}^{k+1}-\rho_{l}^{k}}{\Delta t}+\epsilon\left(\frac{\widehat{\widetilde{j}}_{l+\frac{1}{2}}-\widehat{\widetilde{j}}_{l-\frac{1}{2}}}{\Delta\mathfrak{x}}\right)^{k+\frac{1}{2}}=0.\label{eq:1d_discrete_charge_continuity}
\end{equation}
Taking the discrete divergence of Eq. \eqref{eq:1d_discrete_ampere},
\begin{equation}
    \epsilon\left\{ \left(\frac{E_{l+\frac{1}{2}}^{k+1}-E_{l-\frac{1}{2}}^{k+1}}{\Delta\mathfrak{x}\Delta t}\right)-\left(\frac{E_{l+\frac{1}{2}}^{k}-E_{l-\frac{1}{2}}^{k}}{\Delta\mathfrak{x}\Delta t}\right)\right\} +\left(\frac{\widehat{\widetilde{j}}_{l+\frac{1}{2}}-\widehat{\widetilde{j}}_{l-\frac{1}{2}}}{\Delta\mathfrak{x}}\right)^{k+\frac{1}{2}}=0,\label{eq:1d_discrete_ampere_divergence}
\end{equation}
and using Eq. \eqref{eq:1d_discrete_charge_continuity}, we obtain the discrete Gauss law:
\begin{equation}
    \label{eq:1d_discrete_gauss_from_ampere}
    \epsilon^{2}\frac{E_{l+\frac{1}{2}}^{k+1}-E_{l-\frac{1}{2}}^{k+1}}{\Delta\mathfrak{x}}-\rho_{l}^{k+1}=\epsilon^{2}\frac{E_{l+\frac{1}{2}}^{k}-E_{l-\frac{1}{2}}^{k}}{\Delta\mathfrak{x}}-\rho_{l}^{k} = 0.
\end{equation}
\end{proof}
This relationship states that if the initial field-density relationship satisfies the Gauss law, it remains satisfied at all times. Furthermore, from $\epsilon\rightarrow 0$, the discrete quasi-neutrality condition can be recovered:
\begin{equation}
    \label{eq:discrete_quasi_neutrality}
    n^{\epsilon \rightarrow 0}_{e,l} = -q^{-1}_e \sum^{N_i}_{i} q_i n_{i,l}.
\end{equation}

%
%
\subsection{Mass Conservation\label{subsubsec:discrete_mass_conservation}}

Mass conservation is trivially enforced because of the conservative form of the continuity equation. Due to the conservative flux discretization employed in this study, all terms involving divergence operators vanish with appropriate boundary conditions when integrated over space and thus will be dropped for brevity. Multiplying the per species continuity equation with $m_{\alpha}$ and discretely integrating in $x$, we obtain:
\[
\frac{M^{k+1}-M^{k}}{\Delta t}=\sum_{\alpha}^{N_{s}}m_{\alpha}\sum_{l}^{N_{x}}\Delta\mathfrak{x}\frac{n_{\alpha,l}^{k+1}-n_{\alpha,l}^{k}}{\Delta t}=0.
\]
%
%
%
%
%
%
\subsection{Momentum Conservation\label{subsubsec:momentum_conservation}}

To demonstrate a discrete total momentum conservation theorem, we begin the proof in the continuum to illustrate the key symmetries that must be mimicked in the discrete. The total momentum conservation theorem is defined as follows:
\begin{equation}
\frac{\partial}{\partial t}\int_{\Omega_{x}}dx\left[\sum_{i}^{N_{i}}m_{i}n_{i}u_{i}+m_{e}n_{e}u_{e}\right]=0.\label{eq:1d_total_mom_cons_theorem}
\end{equation}
Because in our modified moment-field subsystem we do not explicitly evolve the electron momentum, $m_{e}n_{e}u_{e}$, we must satisfy the involution constraint that is,
\begin{equation}
    \label{eq:electron_momentum_involution}
    \frac{\partial}{\partial t}\left(m_{e}n_{e}u_{e}\right)=\frac{m_{e}}{q_{e}}\left(\epsilon\frac{\partial\widetilde{j}}{\partial t}-\sum_{i}^{N_{i}}q_{i}\frac{\partial}{\partial t}\left(n_{i}u_{i}\right)\right)    
\end{equation} 
which yields for the total momentum conservation theorem the following relationship:
\begin{equation}
    \frac{\partial}{\partial t}\int_{\Omega_{x}}dx\left[\sum_{i}^{N_{i}}m_{i}n_{i}u_{i}+\frac{m_{e}}{q_{e}}\left(\epsilon\widetilde{j}-\sum_{i}^{N_{i}}q_{i}n_{i}u_{i}\right)\right]=0.
\end{equation}
We sum Eqs. \eqref{eq:fs_momentum_equation} over all ion species, the $m_{e}/q_{e}$ multiplied Eq. \eqref{eq:fs_current_equation} and the negative of $m_{e}q_{i}/q_{e}m_{i}$ multiplied sum of Eq. \eqref{eq:fs_momentum_equation} over all ion species to obtain:
\begin{eqnarray}
    \sum_{i}^{N_{i}}\left[\frac{\partial}{\partial t}\left(m_{i}n_{i}u_{i}\right)-q_{i}n_{i}E\right]+\frac{m_{e}}{q_{e}}\left(\epsilon\frac{\partial\widetilde{j}}{\partial t}-\sum_{\alpha}^{N_{s}}\frac{q_{\alpha}^{2}n_{\alpha}}{m_{\alpha}}E-\sum_{i}^{N_{i}}q_{i}\left[\frac{\partial}{\partial t}\left(n_{i}u_{i}\right)-\frac{q_{i}}{m_{i}}n_{i}E\right]\right) & = & 0,
\end{eqnarray}
which could be simplified as
\begin{eqnarray}
    \sum_{i}^{N_{i}}\left[\frac{\partial}{\partial t}\left(m_{i}n_{i}u_{i}\right)\right]+\frac{m_{e}}{q_{e}}\left(\epsilon\frac{\partial\widetilde{j}}{\partial t}-\sum_{i}^{N_{i}}q_{i}\frac{\partial}{\partial t}\left(n_{i}u_{i}\right)\right)-E\sum_{\alpha}^{N_{s}}q_{\alpha}n_{\alpha} & = & 0.
\end{eqnarray}
Since $\rho=\sum_{\alpha}^{N_{s}}q_{\alpha}n_{\alpha}$ and $\epsilon^{2}\frac{\partial E}{\partial x}=\rho$,
we obtain
\begin{eqnarray}
    \sum_{i}^{N_{i}}\left[\frac{\partial}{\partial t}\left(m_{i}n_{i}u_{i}\right)\right]+\frac{m_{e}}{q_{e}}\left(\epsilon\frac{\partial\widetilde{j}}{\partial t}-\sum_{i}^{N_{i}}q_{i}\frac{\partial}{\partial t}\left(n_{i}u_{i}\right)\right)-\frac{\epsilon^{2}}{2}\frac{\partial E^{2}}{\partial x} & = & 0.
\end{eqnarray}
Finally, by integrating over space the divergence of electrostatic stress tensor vanishes with appropriate boundary conditions to yield,
\begin{equation}
    \int_{\Omega_{x}}dx\left[\sum_{i}^{N_{i}}\frac{\partial}{\partial t}\left(m_{i}n_{i}u_{i}\right)+\frac{m_{e}}{q_{e}}\left(\epsilon\frac{\partial\widetilde{j}}{\partial t}-\sum_{i}^{N_{i}}q_{i}\frac{\partial}{\partial t}\left(n_{i}u_{i}\right)\right)\right]=0.
\end{equation}

\begin{proposition}
    The proposed discretization satisfies the discrete total-momentum conservation theorem.
\end{proposition}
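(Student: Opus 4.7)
The plan is to mimic the continuum derivation at the discrete level, forming the same linear combination of residuals that the authors used in the continuum (summing $m_i R_{u_i}$ over ions, adding $(m_e/q_e)R_{\tilde j}$, and subtracting $(m_e/q_e)\sum_i q_i R_{u_i}$), and then verifying that each of the three structural ingredients survives: (i) the ``involution'' that reconstructs the electron momentum from $\widetilde{j}$ and the ion momenta, (ii) the cancellation of pressure-gradient and flux contributions via telescoping on the periodic mesh, and (iii) the identification of the electrostatic self-force with a discrete total difference via the already-proven discrete Gauss law \eqref{eq:1d_discrete_gauss_from_ampere}. Throughout, I would work with the face-centered momentum $P_{\mathrm{tot},l+\frac{1}{2}}^{s} = \sum_i m_i \bar{n}_{i,l+\frac{1}{2}}^{s} u_{i,l+\frac{1}{2}}^{s} + m_e \bar{n}_{e,l+\frac{1}{2}}^{s} u_{e,l+\frac{1}{2}}^{s}$ and show that $\sum_l \Delta\mathfrak{x}\,(P_{\mathrm{tot},l+\frac{1}{2}}^{k+1}-P_{\mathrm{tot},l+\frac{1}{2}}^{k})/\Delta t = 0$.

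First, I would observe that the prescription $u_{e,l+\frac{1}{2}} = (\epsilon\widetilde{j}_{l+\frac{1}{2}} - \sum_i q_i \bar{n}_{i,l+\frac{1}{2}} u_{i,l+\frac{1}{2}})/(q_e \bar{n}_{e,l+\frac{1}{2}})$ of \eqref{eq:discrete_electron_drift_velocity} gives, after multiplication by $m_e/q_e$, an exact discrete analogue of the involution \eqref{eq:electron_momentum_involution}: the discrete time derivative of $m_e \bar{n}_e u_e$ equals $(m_e/q_e)[\epsilon \partial_t^{\Delta}\widetilde{j} - \sum_i q_i \partial_t^{\Delta}(\bar{n}_i u_i)]$ at every cell face. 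This is the algebraic key that lets me replace the missing electron-momentum equation by the combination $(m_e/q_e)R_{\tilde j} - (m_e/q_e)\sum_i q_i R_{u_i}$ of the evolved residuals.

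Second, I would sum $\sum_l \Delta\mathfrak{x}\,[\,\sum_i m_i R_{u_i,l+\frac{1}{2}} + (m_e/q_e) R_{\tilde j,l+\frac{1}{2}} - (m_e/q_e)\sum_i q_i R_{u_i,l+\frac{1}{2}}\,] = 0$. All Euler-flux divergences $\widehat{F}_{l+1}-\widehat{F}_l$ and all pressure divergences $(P_{l+1}-P_l)/\Delta\mathfrak{x}$ telescope to zero on the periodic domain. A routine bookkeeping check shows that the ion pressure contributions from $(m_e/q_e)R_{\tilde j}$ (namely $(m_e q_i/(q_e m_i))\,\partial^{\Delta}_x P_i$) exactly cancel those from $(m_e/q_e)\sum_i q_i R_{u_i}$, leaving only the electron pressure divergence in the current residual (which again telescopes). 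What remains from the Lorentz-force terms is, with the appropriate signs, the cell-summed face quantity $-\sum_l \Delta\mathfrak{x}\,(\bar{\rho}_{l+\frac{1}{2}} E_{l+\frac{1}{2}})^{k+\frac{1}{2}}$ with $\bar{\rho}_{l+\frac{1}{2}} = \sum_\alpha q_\alpha \bar{n}_{\alpha,l+\frac{1}{2}}$.

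Third, and this is the main obstacle, I would show the electrostatic self-force sums to zero separately at each time level $s\in\{k,k+1\}$, whence the time-average also does. Using the discrete Gauss law $\rho_l^{s} = \epsilon^2(E_{l+\frac{1}{2}}^{s}-E_{l-\frac{1}{2}}^{s})/\Delta\mathfrak{x}$ and the averaging $\bar{\rho}_{l+\frac{1}{2}}^{s} = (\rho_l^{s}+\rho_{l+1}^{s})/2$ (the same stencil as the $\bar{n}_{l+\frac{1}{2}}$ in the momentum equation), I get
\begin{equation}
    \sum_l \Delta\mathfrak{x}\,\bar{\rho}_{l+\frac{1}{2}}^{s} E_{l+\frac{1}{2}}^{s}
    = \frac{\epsilon^2}{2}\sum_l E_{l+\frac{1}{2}}^{s}\bigl(E_{l+\frac{3}{2}}^{s}-E_{l-\frac{1}{2}}^{s}\bigr),
\end{equation}
which is a discrete version of $\int \partial_x(E^2/2)\,dx$ and vanishes by a single index shift under periodicity. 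Combining this with the telescoped flux and pressure terms and the involution of Step 1, the residual identity reduces to $\sum_l \Delta\mathfrak{x}\,(P_{\mathrm{tot},l+\frac{1}{2}}^{k+1}-P_{\mathrm{tot},l+\frac{1}{2}}^{k})/\Delta t = 0$, which is the claimed discrete conservation. The delicate part is not any individual manipulation, but verifying that the \emph{same} face-averaging $\bar{n}_{l+\frac{1}{2}}$ used in $R_{u_i}$ and $R_{\tilde j}$ is the one compatible with the cell-centered discrete Gauss law — this compatibility is what allows the $\bar{\rho}E$ term to be recognized as a discrete total difference rather than a spurious source.
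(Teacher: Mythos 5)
Your proposal is correct and follows essentially the same route as the paper's proof: the identical linear combination $\sum_i m_i R_{u_i} + (m_e/q_e)R_{\widetilde{j}} - (m_e/q_e)\sum_i q_i R_{u_i}$, the discrete electron-momentum involution implied by Eq.~\eqref{eq:discrete_electron_drift_velocity}, telescoping of the conservative flux and pressure terms, and the substitution $\bar{\rho}_{l+\frac{1}{2}} = \epsilon^{2}(E_{l+\frac{3}{2}}-E_{l-\frac{1}{2}})/(2\Delta\mathfrak{x})$ from the discrete Gauss law so that the $\bar{\rho}E$ term telescopes under periodicity. Your remarks on the stencil compatibility of $\bar{n}_{l+\frac{1}{2}}$ with the cell-centered Gauss law and on treating the two time levels separately are explicit versions of steps the paper leaves implicit, not a different argument.
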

\begin{proof}
We show that our discretization mimics key continuum symmetries, namely 
\begin{equation*}
    \frac{\partial}{\partial t} \left(m_{e}n_{e}u_{e} \right) = \frac{m_{e}}{q_{e}} \left( \epsilon \frac{\partial \widetilde{j}}{\partial t} - \sum_{i}^{N_{i}} q_{i} \frac{\partial}{\partial t} \left( n_{i} u_{i} \right) \right)
\end{equation*}
and $\rho E = \frac{ \epsilon^{2}}{2} \frac{\partial E^{2}}{\partial x}$ to prove the discrete total momentum conservation theorem. From Eqs.
\eqref{eq:1d_discrete_ion_momentum} and \eqref{eq:1d_discrete_current}, we obtain:
\begin{eqnarray}
    \sum_{i}^{N_{i}}\left(m_{i}\frac{\bar{n}_{i,l+\frac{1}{2}}^{k+1}u_{i,l+\frac{1}{2}}^{k+1}-\bar{n}_{i,l+\frac{1}{2}}^{k+1}u_{i,l+\frac{1}{2}}^{k+1}}{\Delta t}-q_{i}\left(\bar{n}_{i,l+\frac{1}{2}}E_{l+\frac{1}{2}}\right)^{k+\frac{1}{2}}\right)+\nonumber \\
    +\frac{m_{e}}{q_{e}}\left(\epsilon\frac{\widetilde{j}_{l+\frac{1}{2}}^{k+1}-\widetilde{j}_{l+\frac{1}{2}}^{k}}{\Delta t}-\sum_{\alpha}^{N_{s}}\frac{q_{\alpha}^{2}}{m_{\alpha}}\left(\bar{n}_{\alpha,l+\frac{1}{2}}E_{l+\frac{1}{2}}\right)^{k+\frac{1}{2}}\right)-\nonumber \\
    \frac{m_{e}}{q_{e}}\left(\sum_{i}^{N_{i}}q_{i}\frac{\bar{n}_{i,l+\frac{1}{2}}^{k+1}u_{i,l+\frac{1}{2}}^{k+1}-\bar{n}_{i,l+\frac{1}{2}}^{k}u_{i,l+\frac{1}{2}}^{k}}{\Delta t}-\frac{q_{i}^{2}}{m_{i}}\left(\bar{n}_{i,l+\frac{1}{2}}E_{l+\frac{1}{2}}\right)^{k+\frac{1}{2}}\right)=0.
\end{eqnarray}
It is trivial to show that this relationship can be simplified to,
\begin{eqnarray}
    \sum_{i}^{N_{i}}\left(m_{i}\frac{\bar{n}_{i,l+\frac{1}{2}}^{k+1}u_{i,l+\frac{1}{2}}^{k+1}-\bar{n}_{i,l+\frac{1}{2}}^{k+1}u_{i,l+\frac{1}{2}}^{k+1}}{\Delta t}\right)-\nonumber \\
    +\frac{m_{e}}{q_{e}}\left(\epsilon\frac{\widetilde{j}_{l+\frac{1}{2}}^{k+1}-\widetilde{j}_{l+\frac{1}{2}}^{k}}{\Delta t}-\sum_{i}^{N_{i}}q_{i}\frac{\bar{n}_{i,l+\frac{1}{2}}^{k+1}u_{i,l+\frac{1}{2}}^{k+1}-\bar{n}_{i,l+\frac{1}{2}}^{k}u_{i,l+\frac{1}{2}}^{k}}{\Delta t}\right)-\left(\bar{\rho}_{\alpha,l+\frac{1}{2}}E_{l+\frac{1}{2}}\right)^{k+\frac{1}{2}}=0.
\end{eqnarray}
Finally, using Eq. \eqref{eq:1d_discrete_gauss_from_ampere}, we write $\bar{\rho}_{l+\frac{1}{2}}=\epsilon^{2}\frac{\left(E_{l+\frac{3}{2}}\right)-\left(E_{l-\frac{1}{2}}\right)}{2\Delta \mathfrak{x}}$ to obtain:
\begin{eqnarray}
    \sum_{i}^{N_{i}}\left(m_{i}\frac{\bar{n}_{i,l+\frac{1}{2}}^{k+1}u_{i,l+\frac{1}{2}}^{k+1}-\bar{n}_{i,l+\frac{1}{2}}^{k+1}u_{i,l+\frac{1}{2}}^{k+1}}{\Delta t}\right)+\frac{m_{e}}{q_{e}}\left(\epsilon\frac{\widetilde{j}_{l+\frac{1}{2}}^{k+1}-\widetilde{j}_{l+\frac{1}{2}}^{k}}{\Delta t}-\sum_{i}^{N_{i}}q_{i}\frac{\bar{n}_{i,l+\frac{1}{2}}^{k+1}u_{i,l+\frac{1}{2}}^{k+1}-\bar{n}_{i,l+\frac{1}{2}}^{k}u_{i,l+\frac{1}{2}}^{k}}{\Delta t}\right)\nonumber \\
    -\epsilon^{2}\frac{\left(E_{l+\frac{3}{2}}E_{l+\frac{1}{2}}\right)^{k+\frac{1}{2}}-\left(E_{l-\frac{1}{2}}^{k+1}E_{l+\frac{1}{2}}\right)^{k+\frac{1}{2}}}{2\Delta \mathfrak{x}}=0.
\end{eqnarray}
Summing over space and telescoping the last term, we show the discrete total momentum conservation theorem:
\begin{equation}
    \sum_{l}^{N_{x}}\Delta\mathfrak{x}\left\{ \sum_{i}^{N_{i}}m_{i}\frac{\bar{n}_{i,l+\frac{1}{2}}^{k+1}u_{i,l+\frac{1}{2}}^{k+1}-\bar{n}_{i,l+\frac{1}{2}}^{k+1}u_{i,l+\frac{1}{2}}^{k+1}}{\Delta t}+\frac{m_{e}}{q_{e}}\left(\epsilon\frac{\widetilde{j}_{l+\frac{1}{2}}^{k+1}-\widetilde{j}_{l+\frac{1}{2}}^{k}}{\Delta t}-\sum_{i}^{N_{i}}q_{i}\frac{\bar{n}_{i,l+\frac{1}{2}}^{k+1}u_{i,l+\frac{1}{2}}^{k+1}-\bar{n}_{i,l+\frac{1}{2}}^{k}u_{i,l+\frac{1}{2}}^{k}}{\Delta t}\right)\right\} =0.
\end{equation}
\end{proof}
%
%
%
%
%
%
\subsection{Energy Conservation\label{subsubsec:energy_conservation}}
By summing the $u_i$ multiplied Eq. \eqref{eq:fs_momentum_equation} for all ion species, the $u_e$ multiplied Eq. \eqref{eq:electron_momentum_involution}, and Eq. \eqref{eq:fs_internal_energy_equation} for all species and using chain rules and continuity equations, we obtain:
\begin{eqnarray}
    \sum^{N_s}_{\alpha} \left\{
    \frac{\partial}{\partial t} \left( \frac{m_{\alpha} n_{\alpha} u^2_{\alpha}}{2}\right) + 
    \frac{\partial}{\partial x}\left(\frac{m_{\alpha} n_{\alpha} u^2_{\alpha}}{2} \right) +
    u_{\alpha} \frac{\partial P_{\alpha}}{\partial x} -
    q_{\alpha} n_{\alpha}u_{\alpha} E     
    \right. +
    \nonumber \\
    \left.
    \frac{1}{2} \frac{\partial}{\partial t}\left(n_{\alpha} T_{\alpha} \right) 
    +
    \frac{1}{2} \frac{\partial}{\partial x} \left(u_{\alpha} n_{\alpha} T_{\alpha} \right)
    +
    P_{\alpha} \frac{\partial u_{\alpha}}{\partial x}
    \right\} = 0.
    \label{eq:continuum_energy_conservation_proof}
\end{eqnarray}
Here, non-conservative terms corresponding to compressive heating and $PdV$ work combine to form $u_{\alpha} \partial_x P_{\alpha} + P_{\alpha} \partial_x u_{\alpha} = \partial_x \left( u_{\alpha} P_{\alpha} \right)$. Using Eq. \eqref{eq:fs_ampere_equation} and \eqref{eq:nonamperean_current} such that $\sum^{N_s}_{\alpha} q_{\alpha} n_{\alpha} u_{\alpha} E = -\frac{\epsilon^2}{2}\frac{\partial E^2}{\partial t}$, and integrating over space with appropriate boundary conditions, we obtain the total energy conservation theorem:
\begin{equation}
    \label{eq:continuum_energy_conservation_theorem}
    \frac{\partial}{\partial t} \int_{\Omega_x} dx
    \left[
    \sum^{N_s}_{\alpha} \left(\frac{m_{\alpha}n_{\alpha}u^2_{\alpha}}{2} + \frac{n_{\alpha} T_{\alpha}}{2}\right)
    +
    \frac{\epsilon^2 E^2}{2}
    \right] = 0.
\end{equation}

\begin{proposition}
    The proposed discretization scheme satisfies the discrete total energy conservation theorem.
\end{proposition}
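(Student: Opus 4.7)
The plan is to mirror the continuum derivation of the total energy theorem at the discrete level, exploiting the fact that the numerical source $S_{\alpha,l}$ in the internal energy residual \eqref{eq:1d_discrete_internal_energy} was engineered to be exactly the staggered average of the auxiliary kinetic energy residual $\mathcal{R}_{\alpha,l+\frac{1}{2}}$ defined by \eqref{eq:1d_discrete_kinetic_energy_residual}. My starting point is to sum $R_{T_\alpha,l}=0$ over species and cells:
\[
\sum_\alpha \sum_{l=1}^{N_x} \Delta \mathfrak{x}\, R_{T_\alpha,l}\bigl(\vec M^{k+1},\vec Q[\vec{\cal F}^{k+1}]; \vec z^k\bigr) = 0.
\]
Periodic boundary conditions let the divergences of the internal-energy flux $\widehat F_{T_\alpha}$ and the heat flux $\widehat Q_\alpha$ telescope to zero. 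The key reindexing step is $\sum_l S_{\alpha,l} = \tfrac{1}{2}\sum_l (\mathcal{R}_{\alpha,l+\frac{1}{2}}+\mathcal{R}_{\alpha,l-\frac{1}{2}}) = \sum_{l} \mathcal{R}_{\alpha,l+\frac{1}{2}}$, so that the source contributions convert cleanly into a sum of kinetic-energy residuals at faces.

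Next, I substitute the definition of $\mathcal{R}_{\alpha,l+\frac{1}{2}}$. The $\widehat F_{K_\alpha}$ flux differences telescope away, leaving three groups of terms: the time difference of the kinetic energy density $\tfrac{1}{2} m_\alpha \bar n_{\alpha,l+\frac{1}{2}}u^2_{\alpha,l+\frac{1}{2}}$, a pressure-work term $u_{\alpha,l+\frac{1}{2}}(P_{\alpha,l+1}-P_{\alpha,l})/\Delta\mathfrak{x}$, and an electric source $-q_\alpha(\widehat F_{n_\alpha,l+\frac{1}{2}})^{k+\frac{1}{2}}(E_{l+\frac{1}{2}})^{k+\frac{1}{2}}$. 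The pressure-work term pairs with the cell-centered $P_{\alpha,l}(u_{\alpha,l+\frac{1}{2}}-u_{\alpha,l-\frac{1}{2}})/\Delta\mathfrak{x}$ from the internal-energy residual, and a discrete summation by parts on the staggered grid,
\[
\sum_l P_{\alpha,l}\bigl(u_{\alpha,l+\frac{1}{2}}-u_{\alpha,l-\frac{1}{2}}\bigr) + \sum_l u_{\alpha,l+\frac{1}{2}}\bigl(P_{\alpha,l+1}-P_{\alpha,l}\bigr) = 0,
\]
which is the discrete analog of $u\partial_x P + P\partial_x u = \partial_x(uP)$, cancels these contributions under periodic BCs.

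The third step converts the electric source into the time derivative of field energy. Summing over species and invoking \eqref{eq:current_continuity_flux}, $\sum_\alpha q_\alpha \widehat F_{n_\alpha,l+\frac{1}{2}} = \epsilon\, \widehat{\widetilde j}_{l+\frac{1}{2}}$, turns the pointwise source into $-\epsilon (\widehat{\widetilde j}_{l+\frac{1}{2}})^{k+\frac{1}{2}}(E_{l+\frac{1}{2}})^{k+\frac{1}{2}}$. The discrete Ampere equation \eqref{eq:1d_discrete_ampere} eliminates $(\widehat{\widetilde j})^{k+\frac{1}{2}}$ in favor of $-\epsilon(E^{k+1}-E^k)/\Delta t$, and the algebraic identity $(E^{k+1}-E^k)(E^{k+1}+E^k)=(E^{k+1})^2-(E^k)^2$ --- the signature of the implicit midpoint rule --- yields exactly $\tfrac{\epsilon^2}{2\Delta t}\bigl[(E^{k+1}_{l+\frac{1}{2}})^2 - (E^k_{l+\frac{1}{2}})^2\bigr]$, i.e., the time difference of the discrete electrostatic field energy. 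Assembling the three pieces gives the discrete analog of \eqref{eq:continuum_energy_conservation_theorem}.

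The main obstacle is the discrete summation by parts on the staggered mesh: one must verify that the cell-centered $P_{\alpha,l}$ and face-centered $u_{\alpha,l+\frac{1}{2}}$ recombine exactly, with no mesh-width factors or boundary terms surviving the periodic identification. A secondary subtlety concerns electrons, since $u_{e,l+\frac{1}{2}}$ is algebraically parameterized via \eqref{eq:discrete_electron_drift_velocity} rather than evolved by its own momentum equation; however, the same electron-momentum involution used in the momentum-conservation proof ensures that $\mathcal{R}_{e,l+\frac{1}{2}}$ is consistently defined and enters the summation on equal footing with the ion contributions, so the argument closes simultaneously for all species.
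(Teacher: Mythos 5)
Your proposal is correct and follows essentially the same route as the paper's proof: summing the discrete internal-energy residuals over species and space, unpacking $S_{\alpha,l}$ into the kinetic-energy residuals $\mathcal{R}_{\alpha,l+\frac{1}{2}}$, cancelling the pressure-work terms by staggered summation by parts, and converting the electric source into the field-energy time difference via Eq.~\eqref{eq:current_continuity_flux}, the discrete Amp{\`e}re equation, and the implicit-midpoint identity \eqref{eq:implicit_midpoint_chain_rule}. No substantive differences.
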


\begin{proof}
By summing Eqs. \eqref{eq:1d_discrete_internal_energy} for all species, summing over all space and using the definition of ${\cal R}_{\alpha,l+\frac{1}{2}}$ from Eq. \eqref{eq:1d_discrete_kinetic_energy_residual} we obtain:
\begin{eqnarray}
    \sum^{N_x}_{l} \Delta\mathfrak{x}
    \sum_{\alpha}^{N_{s}}
    \left\{ 
        m_{\alpha}\frac{\bar{n}_{\alpha,l+\frac{1}{2}}^{k+1}\left(u_{\alpha,l+\frac{1}{2}}^{k+1}\right)^{2}
        -
        \bar{n}_{\alpha,l+\frac{1}{2}}^{k}\left(u_{\alpha,l+\frac{1}{2}}^{k}\right)^{2}}{2\Delta t}
        -
        q_{\alpha}\left( \widehat{F}_{n_{\alpha},l+\frac{1}{2}}\right)^{k+\frac{1}{2}}\left(E_{l+\frac{1}{2}}\right)^{k+\frac{1}{2}}
    \right.+\nonumber \\
    \left.
        \left(u_{\alpha,l+\frac{1}{2}}\frac{P_{\alpha,l+1}-P_{\alpha,l}}{\Delta\mathfrak{x}}\right)^{k+\frac{1}{2}}+\frac{n_{\alpha,l}^{k+1}T_{\alpha,l}^{k+1}-n_{\alpha,l}^{k}T_{\alpha,l}^{k}}{2\Delta t}+\left(P_{\alpha,l+1}\frac{u_{\alpha,l+\frac{1}{2}}-u_{\alpha,l-\frac{1}{2}}}{\Delta\mathfrak{x}}\right)^{k+\frac{1}{2}}
    \right\} =0.
\end{eqnarray}
It is trivially shown that, when integrated over space, the non-conservative terms cancel out to yield:
\begin{equation}
    \frac{{\cal E}_{p}^{k+1}-{\cal E}_{p}^{k}}{\Delta t}
    -
    \sum_{l}^{N_{x}}\Delta\mathfrak{x}\sum_{\alpha}^{N_{s}}\left\{ q_{\alpha}\left(\widehat{F}_{n_{\alpha},l+\frac{1}{2}}\right)^{k+\frac{1}{2}}\left(E_{l+\frac{1}{2}}\right)^{k+\frac{1}{2}}\right\} 
    = 0,
\end{equation}
where ${\cal E}_{p}=\sum_{l}^{N_{x}}\Delta\mathfrak{x}\sum_{\alpha}^{N_{s}}\left\{ m_{\alpha}\frac{\bar{n}_{\alpha,l+\frac{1}{2}}u_{\alpha,l+\frac{1}{2}}^{2}}{2}+\frac{n_{\alpha,l}T_{\alpha,l}}{2}\right\} $ is the total plasma energy in our staggered formulation. From Eq. \eqref{eq:1d_discrete_ampere},
\begin{equation}
    \sum_{\alpha}^{N_{s}}q_{\alpha}\left( \widehat{F}_{n_{\alpha},l+\frac{1}{2}} \right)^{k+\frac{1}{2}}=-\epsilon^{2}\frac{E_{l+\frac{1}{2}}^{k+1}-E_{l+\frac{1}{2}}^{k}}{\Delta t},
\end{equation}
and because
\begin{equation}
    \label{eq:implicit_midpoint_chain_rule}
    \frac{E^{k+1}_{l+\frac{1}{2}} + E^{k}_{l+\frac{1}{2}}}{2}
    \frac{E^{k+1}_{l+\frac{1}{2}} - E^{k}_{l+\frac{1}{2}}}{\Delta t}
    =
    \frac{\left( E^{k+1}_{l+\frac{1}{2}} \right)^2 - \left( E^{k}_{l+\frac{1}{2}} \right)^2}{2\Delta t},
\end{equation}
we obtain:
\begin{equation}
    \frac{{\cal E}^{k+1}-{\cal E}^{k}}{\Delta t}\equiv\frac{{\cal E}_{p}^{k+1}-{\cal E}_{p}^{k}}{\Delta t}+\frac{\epsilon^{2}}{2}\sum_{l}^{N_{x}}\Delta \mathfrak{x} \left\{ \frac{\left(E_{l+\frac{1}{2}}^{k+1}\right)^{2}-\left(E_{l+\frac{1}{2}}^{k}\right)^{2}}{\Delta t}\right\} =0,
\end{equation}    
the discrete total energy conservation theorem.
\end{proof}
%
%
%
%
%
%
\subsection{Existence of the Discrete Formal Slow Manifold and the quasi-neutral Asymptotic Preserving Property \label{subsubsec:discrete_qn_ap}}
%
%
%
%
In Section \ref{sec:fast_slow_formulation} we showed that Eqs.~(\ref{eq:fs_cdf_equation}) -- (\ref{eq:fs_ampere_equation}) 
form a fast-slow system with slow variables $\left\{ {\cal F}_{\alpha},n_{\alpha},u_{i},T_{\alpha}\right\}$ and fast variables $\left\{ \widetilde{j},E\right\}$. We also showed that in the limit $\epsilon\to 0$ the fast variables are functions of slow variables $\left\{ \widetilde{j},E\right\} = Y_{0}({\cal F}_{\alpha},n_{\alpha},u_{i},T_{\alpha})$ given explicitly by Eq.~(\ref{eq: explicit eq E epsilon 0}) and $\widetilde{j}=0$. In this section, we will show that Eqs.~(\ref{eq:discretized_cdf_vlasov}), (\ref{eq:discretized_cdf_vlasov_2}), (\ref{eq:1d_discrete_continuity}) -- (\ref{eq:1d_discrete_ampere}) represent a discrete-time fast slow system that implements a slow manifold integrator for (\ref{eq:fs_cdf_equation}) -- (\ref{eq:fs_ampere_equation}). We prove this for two scenarios: 1) a general multi-ion case \emph{without} the kinetic-flux discretization used in Eq. \eqref{eq:continuity_flux_avg} but instead using a linear reconstruction (to be discussed shortly), and 2) using the kinetic-flux reconstruction but for a single-ion limit.
We will begin by introducing the following technical result.

\begin{proposition} \label{proposition:equilvalence_of_numerical_current_and_solution_current_1}
    Let the linear reconstruction flux $\widehat{F}_{n_{\alpha},l+\frac{1}{2}} = u_{\alpha,l+\frac{1}{2}} \bar{n}_{\alpha,l+\frac{1}{2}}$ be used in the numerical scheme instead of 
    Eqs.~(\ref{eq:continuity_flux_pos_component}), (\ref{eq:continuity_flux_neg_component}). Then, Eqs.~(\ref{eq:discretized_cdf_vlasov}), (\ref{eq:discretized_cdf_vlasov_2}), (\ref{eq:1d_discrete_continuity}) -- (\ref{eq:1d_discrete_ampere}) take the form
   $(x^{k+1},y^{k+1}) = (x^{k},\psi_{x}(y^{k}))$  in the limit $\Delta t\to 0$, $\frac{\epsilon}{\Delta t}\to 0$ for a smooth function $\psi_{x}(y)$, where $x=\left\{ {\cal F}_{\alpha,l,p},{\cal F}^{\ast}_{\alpha,l,p}, n_{\alpha,l},\right.$  $\left. u_{i,l+\frac{1}{2}},T_{\alpha,l}\right\}$ and $y=\left\{ \widetilde{j}_{l+\frac{1}{2}},E_{l+\frac{1}{2}}\right\}$. 
\end{proposition}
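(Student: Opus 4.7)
The plan is to exploit the linear-reconstruction assumption to reduce the discretized non-Amp{\`e}rean current to an algebraic identity, and then analyze the Amp{\`e}re and current equations in the singular limit $\epsilon/\Delta t \to 0$ to obtain the fast map, while using consistency of the slow updates in $\Delta t \to 0$ to freeze the slow variables.

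First, I would show the single algebraic identity that unlocks the rest of the argument: using $\widehat{F}_{n_\alpha,l+\frac{1}{2}} = u_{\alpha,l+\frac{1}{2}}\bar{n}_{\alpha,l+\frac{1}{2}}$ together with the definition of the electron drift velocity in Eq.~\eqref{eq:discrete_electron_drift_velocity} (which gives $q_e\bar{n}_{e,l+\frac{1}{2}} u_{e,l+\frac{1}{2}} = \epsilon\widetilde{j}_{l+\frac{1}{2}} - \sum_i q_i\bar{n}_{i,l+\frac{1}{2}}u_{i,l+\frac{1}{2}}$), we obtain $\sum_\alpha q_\alpha \widehat{F}_{n_\alpha,l+\frac{1}{2}} = \epsilon\widetilde{j}_{l+\frac{1}{2}}$, so by Eq.~\eqref{eq:current_continuity_flux}, $\widehat{\widetilde{j}}_{l+\frac{1}{2}} = \widetilde{j}_{l+\frac{1}{2}}$.

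Next, I would dispatch the slow variables. The explicit RK2 updates \eqref{eq:discretized_cdf_vlasov}--\eqref{eq:discretized_cdf_vlasov_2} yield ${\cal F}^{\ast}-{\cal F}^{k}$ and ${\cal F}^{k+1}-{\cal F}^{k}$ equal to $\Delta t$ times bounded forcings, while the implicit-midpoint residuals \eqref{eq:1d_discrete_continuity}--\eqref{eq:1d_discrete_internal_energy} give $x^{k+1}-x^{k} = -\Delta t \, F^{k+\frac{1}{2}}$ with $F$ a smooth algebraic combination of slow variables and of $E^{k+\frac{1}{2}}$. Once the fast step below demonstrates that $E^{k+\frac{1}{2}}$ remains bounded, all these increments vanish in the limit, giving $x^{k+1}=x^{k}$.

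For the fast variables, plug the Step~1 identity into Eq.~\eqref{eq:1d_discrete_ampere} to get $\epsilon(E^{k+1}_{l+\frac{1}{2}}-E^{k}_{l+\frac{1}{2}})/\Delta t + (\widetilde{j}^{k+1}_{l+\frac{1}{2}}+\widetilde{j}^{k}_{l+\frac{1}{2}})/2 = 0$, and pass $\epsilon/\Delta t \to 0$ to conclude $\widetilde{j}^{k+1}_{l+\frac{1}{2}} = -\widetilde{j}^{k}_{l+\frac{1}{2}}$, i.e.\ $\widetilde{j}^{k+\frac{1}{2}}=0$. Applying the same scaling to Eq.~\eqref{eq:1d_discrete_current} kills the time-derivative term and leaves the algebraic balance
\begin{equation*}
\Bigl(\tfrac{\widehat{F}_{\widetilde{j},l+1}-\widehat{F}_{\widetilde{j},l}}{\Delta\mathfrak{x}}\Bigr)^{k+\frac{1}{2}} + \sum_{\alpha}\tfrac{q_{\alpha}}{m_{\alpha}}\Bigl(\tfrac{P_{\alpha,l+1}-P_{\alpha,l}}{\Delta\mathfrak{x}}\Bigr)^{k+\frac{1}{2}} = \sum_{\alpha}\tfrac{q_{\alpha}^{2}}{m_{\alpha}}\bigl(\bar{n}_{\alpha,l+\frac{1}{2}}E_{l+\frac{1}{2}}\bigr)^{k+\frac{1}{2}}.
\end{equation*}
Since $\widetilde{j}^{k+\frac{1}{2}}=0$, the electron drift $u_{e}^{k+\frac{1}{2}}$ (and therefore $\widehat{F}_{\widetilde{j}}^{k+\frac{1}{2}}$) is a rational function of slow variables only; combined with $x^{k+\frac{1}{2}} = x^{k}$ this equation is linear in $E^{k+\frac{1}{2}}_{l+\frac{1}{2}}$ at each face with coefficient equal to the strictly positive discrete plasma frequency $\sum_{\alpha} q_{\alpha}^{2}\bar{n}_{\alpha,l+\frac{1}{2}}/m_{\alpha}$, so $E^{k+\frac{1}{2}} = \Phi(x^{k})$ for a smooth rational $\Phi$, hence $E^{k+1} = 2\Phi(x^{k}) - E^{k}$. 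Collecting gives $\psi_{x^{k}}(y^{k}) = \bigl(-\widetilde{j}^{k},\, 2\Phi(x^{k})-E^{k}\bigr)$, as required.

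The main obstacle is the nonlinear $u_{e}$-coupling inside $\widehat{F}_{\widetilde{j}}$, which threatens to entangle $E^{k+\frac{1}{2}}$ and $\widetilde{j}^{k+\frac{1}{2}}$ inside the current equation. This is circumvented by handling the Amp{\`e}re equation \emph{first} to pin $\widetilde{j}^{k+\frac{1}{2}}=0$; with that single scalar known, the current equation becomes a face-local linear equation in $E^{k+\frac{1}{2}}$, and smoothness of $\psi_{x}$ with respect to $x$ follows immediately because every operation involved is a rational combination of slow variables with a uniformly positive denominator.
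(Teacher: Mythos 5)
Your proposal is correct and follows essentially the same route as the paper: establish the flux identity $\widehat{\widetilde{j}}_{l+\frac{1}{2}}=\widetilde{j}_{l+\frac{1}{2}}$ from the linear reconstruction and the definition of $u_{e,l+\frac{1}{2}}$, freeze the slow variables by multiplying the slow residuals by $\Delta t$, read off $\widetilde{j}^{k+1}=-\widetilde{j}^{k}$ from the Amp{\`e}re equation, and solve the now-algebraic current equation (with strictly positive coefficient $\sum_{\alpha}q_{\alpha}^{2}\bar{n}_{\alpha,l+\frac{1}{2}}/m_{\alpha}$) for $E^{k+1}$ in terms of $x^{k}$ and $E^{k}$. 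The only differences are cosmetic: your identity is exact where the paper carries an $O(\epsilon)$ remainder, and you take the limit before solving for $E$ rather than after.
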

\begin{proof}
    From Eq. \eqref{eq:current_continuity_flux} and \eqref{eq:discrete_electron_drift_velocity}, and assuming a linear reconstruction for $\widehat{F}_{n_{\alpha},l+\frac{1}{2}} = u_{\alpha,l+\frac{1}{2}} \bar{n}_{\alpha,l+\frac{1}{2}}$, we can write $\widehat{\widetilde{j}}_{l+\frac{1}{2}}$ as
    \begin{equation}
        \label{eq:multi_ion_jhattilde}
        \widehat{\widetilde{j}}_{l+\frac{1}{2}} = 
        \frac{\left(\epsilon \tilde{j}_{l+\frac{1}{2}} - \sum^{N_i}_{i} q_i \bar{n}_{i+\frac{1}{2}} u_{i,l+\frac{1}{2}} \right) + \sum^{N_i}_{i}q_i \widehat{F}_{n_i,l+\frac{1}{2}}}{\epsilon} + O(\epsilon) = \tilde{j}_{l+\frac{1}{2}}+O(\epsilon).
    \end{equation}
    Also, Eq.~\eqref{eq:1d_discrete_current} yeilds
    \begin{eqnarray}
    \label{eq: Solving for E with O(epsilon)}
    E_{l+\frac{1}{2}}^{k+1} &=&
    -\left(\sum_{\alpha}^{N_{s}} \frac{q_{\alpha}^{2}}{m_{\alpha}} \bar{n}^{k+1}_{\alpha,l+\frac{1}{2}}\right)^{-1}
    \left(\sum_{\alpha}^{N_{s}} \frac{q_{\alpha}^{2}}{m_{\alpha}} \bar{n}^{k}_{\alpha,l+\frac{1}{2}}\right) E_{l+\frac{1}{2}}^{k} \nonumber \\
    & & {} + 
    2\left(\sum_{\alpha}^{N_{s}} \frac{q_{\alpha}^{2}}{m_{\alpha}} \bar{n}^{k+1}_{\alpha,l+\frac{1}{2}}\right)^{-1}
    \left( \left(\frac{\widehat{F}_{\widetilde{j},l+1}-\widehat{F}_{\widetilde{j},l}}{\Delta\mathfrak{x}}\right)^{k+\frac{1}{2}}+\sum_{\alpha}^{N_{s}}\frac{q_{\alpha}}{m_{\alpha}}\left(\frac{P_{\alpha,l+1}-P_{\alpha,l}}{\Delta\mathfrak{x}}\right)^{k+\frac{1}{2}}
    \right) + O(\epsilon). 
\end{eqnarray}
    By multiplying Eqs.~(\ref{eq:1d_discrete_continuity}) -- (\ref{eq:1d_discrete_internal_energy}) by $\Delta t$ and taking the limit $\Delta t\to 0$, $\frac{\epsilon}{\Delta t}\to 0$ in Eqs.~(\ref{eq:discretized_cdf_vlasov}), (\ref{eq:discretized_cdf_vlasov_2}), (\ref{eq:1d_discrete_continuity}) -- (\ref{eq:1d_discrete_ampere}) we obtain the following asymptotic relations:
\begin{gather}
    \label{eq: Limiting case slow variables}
    {\cal F}_{l,p}^{k+1}={\cal F}_{l,p}^{*}={\cal F}_{l,p}^{k} \quad , \quad n^{k+1}_{\alpha, l}=n^{k}_{\alpha, l}\quad , \quad 
     u_{\alpha,l+\frac{1}{2}}^{k+1} = u_{\alpha,l+\frac{1}{2}}^{k} \quad , \quad T^{k+1}_{\alpha,l}=T^{k}_{\alpha,l} \quad , \\
    \label{eq: Limiting case fast variable j-hat-tilde} 
    \left(\widehat{\widetilde{j}}^{\epsilon\rightarrow 0}_{l+\frac{1}{2}}\right)^{k+\frac{1}{2}}=
    0.
\end{gather}
    In view of Eq. (\ref{eq:multi_ion_jhattilde}), the latter equation yields 
    \begin{equation}
    \label{eq: linear limiting case j-tilde}
    \tilde{j}^{k+1}_{l+\frac{1}{2}} = -\tilde{j}^{k}_{l+\frac{1}{2}}. 
    \end{equation}
    Also, Eqs.~\eqref{eq: Solving for E with O(epsilon)} and \eqref{eq: Limiting case slow variables} imply that 
    \begin{equation}
    \label{eq: limiting case E explicit}
    E_{l+\frac{1}{2}}^{k+1} = - E_{l+\frac{1}{2}}^{k} + 2\left( \sum_{\alpha}^{N_{s}}\frac{q_{\alpha}^{2}}{m_{\alpha}}\bar{n}^{k}_{\alpha,l+\frac{1}{2}}\right)^{-1} \left[ \left( \frac{\widehat{F}^{\epsilon\rightarrow 0,k}_{\widetilde{j},l+1}-\widehat{F}^{\epsilon\rightarrow 0,k}_{\widetilde{j},l}}{\Delta\mathfrak{x}}\right)+\sum_{\alpha}^{N_{s}}\frac{q_{\alpha}}{m_{\alpha}}\left(\frac{P^{k}_{\alpha,l+1}-P^{k}_{\alpha,l}}{\Delta\mathfrak{x}}\right)\right],
    \end{equation}
   Thus, in the limiting case  $\Delta t\to 0$, $\frac{\epsilon}{\Delta t}\to 0$, Eqs.~(\ref{eq:discretized_cdf_vlasov}), (\ref{eq:discretized_cdf_vlasov_2}), (\ref{eq:1d_discrete_continuity}) -- (\ref{eq:1d_discrete_ampere}) take the form $(x^{k+1},y^{k+1}) = \Phi_{0,0}(x^{k},y^{k})$, where $\Phi_{0,0} (x,y)=(x,\psi_{x}(y))$, and the variables are given by $x=\left\{ {\cal F}_{\alpha,l,p},{\cal F}^{\ast}_{\alpha,l,p}, n_{\alpha,l},u_{i,l+\frac{1}{2}},T_{\alpha,l}\right\}$ and $y=\left\{ \widetilde{j}_{l+\frac{1}{2}},E_{l+\frac{1}{2}}\right\}$, and the components of $\psi_{x}(y)$ are defined by Eqs.~(\ref{eq: linear limiting case j-tilde}), 
   (\ref{eq: limiting case E explicit}).
\end{proof}
\begin{proposition}\label{proposition:equilvalence_of_numerical_current_and_solution_current_2}
   Suppose that discrete continuity flux (\ref{eq:continuity_flux_pos_component}), (\ref{eq:continuity_flux_neg_component}) be used  in the numerical scheme. Then, Eqs.~(\ref{eq:discretized_cdf_vlasov}), (\ref{eq:discretized_cdf_vlasov_2}), (\ref{eq:1d_discrete_continuity}) -- (\ref{eq:1d_discrete_ampere}) take the form
   $(x^{k+1},y^{k+1}) = (x^{k},\psi_{x}(y^{k}))$  in the limit $\Delta t\to 0$, $\frac{\epsilon}{\Delta t}\to 0$ in the case of single-ion species. \\

\end{proposition}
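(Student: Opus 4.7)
The plan is to mirror the structure of Proposition~\ref{proposition:equilvalence_of_numerical_current_and_solution_current_1}: multiplying the slow-variable updates by $\Delta t$ and taking $\Delta t \to 0$ with $\epsilon/\Delta t \to 0$ freezes the slow variables, yielding $\mathcal{F}_{l,p}^{k+1} = \mathcal{F}_{l,p}^{\ast} = \mathcal{F}_{l,p}^{k}$, $n_{\alpha,l}^{k+1} = n_{\alpha,l}^{k}$, $u_{i,l+\frac{1}{2}}^{k+1} = u_{i,l+\frac{1}{2}}^{k}$, $T_{\alpha,l}^{k+1} = T_{\alpha,l}^{k}$, and the Amp{\`e}re equation \eqref{eq:1d_discrete_ampere} collapses to $(\widehat{\widetilde{j}})^{k+\frac{1}{2}} = 0$, i.e., $\widehat{\widetilde{j}}^{k+1} = -\widehat{\widetilde{j}}^{k}$. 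The only substantive new work is to verify that, with the kinetic flux \eqref{eq:continuity_flux_pos_component}--\eqref{eq:continuity_flux_neg_component} replacing the linear reconstruction, this constraint together with the current equation \eqref{eq:1d_discrete_current} still determines $\widetilde{j}^{k+1}$ and $E^{k+1}$ uniquely and smoothly as functions of $(x^{k}, y^{k})$.

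My approach is to establish an expansion of the form
\begin{equation*}
\widehat{\widetilde{j}}_{l+\frac{1}{2}} = \widetilde{j}_{l+\frac{1}{2}}\bigl(1 + \delta_{l+\frac{1}{2}}\bigr) + O(\epsilon),
\end{equation*}
where $\delta_{l+\frac{1}{2}}$ depends only on slow variables. I would decompose the subsonic-regime kinetic flux as $\widehat{F}_{n_{\alpha},l+\frac{1}{2}} = u_{\alpha,l+\frac{1}{2}}\bar{n}_{\alpha,l+\frac{1}{2}} + (n_{\alpha,l}-n_{\alpha,l+1})(u_{\alpha,l+\frac{1}{2}}^{2}+\bar{c}_{l+\frac{1}{2}}^{2})/(4\bar{c}_{l+\frac{1}{2}})$. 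The advective portion sums exactly to $\epsilon\widetilde{j}_{l+\frac{1}{2}}$ thanks to the identity $q_{i}u_{i}\bar{n}_{i} + q_{e}u_{e}\bar{n}_{e} = \epsilon\widetilde{j}$ built into the definition \eqref{eq:discrete_electron_drift_velocity} of $u_{e}$. The single-ion hypothesis is essential for the stabilization piece: it couples the discrete Gauss law \eqref{eq:1d_discrete_gauss_from_ampere}, which gives $q_{i}n_{i,l} + q_{e}n_{e,l} = O(\epsilon^{2})$, with the companion estimate $u_{e,l+\frac{1}{2}} - u_{i,l+\frac{1}{2}} = -\epsilon\widetilde{j}_{l+\frac{1}{2}}/(q_{i}\bar{n}_{i,l+\frac{1}{2}}) + O(\epsilon^{2})$ obtained by substituting quasi-neutrality into \eqref{eq:discrete_electron_drift_velocity}. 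Together, and using $u_{i}^{2}-u_{e}^{2} = (u_{i}-u_{e})(u_{i}+u_{e})$, the sum of electron and ion stabilization contributions collapses to a term proportional to $\widetilde{j}$ times a bounded function of slow variables, plus an $O(\epsilon)$ remainder. The supersonic branches of \eqref{eq:continuity_flux_pos_component}--\eqref{eq:continuity_flux_neg_component} admit the same structural cancellation, since the advective piece $u\bar{n}$ persists and the remaining upwind-bias term again pairs electron and ion densities with opposite charge under quasi-neutrality.

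Given the expansion, the limiting constraint $\widehat{\widetilde{j}}^{k+1} = -\widehat{\widetilde{j}}^{k}$ at frozen slow variables reads $\widetilde{j}^{k+1}(1+\delta^{k}) = -\widehat{\widetilde{j}}^{k} + O(\epsilon)$, which is invertible for $\widetilde{j}^{k+1}$ whenever $1+\delta^{k}\neq 0$; this holds for smooth, subsonic solutions since $\delta^{k}$ vanishes with the density gradient. The electric field $E^{k+1}$ then follows from \eqref{eq:1d_discrete_current} by the same algebra used in \eqref{eq: Solving for E with O(epsilon)}, whose invertibility requires only that $\sum_{\alpha}(q_{\alpha}^{2}/m_{\alpha})\bar{n}_{\alpha}>0$. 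Assembling these pieces yields the required form $(x^{k+1}, y^{k+1}) = (x^{k}, \psi_{x}(y^{k}))$. The principal obstacle is the piecewise-quadratic character of the kinetic flux at the regime boundaries $u_{\alpha}=\pm\bar{c}$: while each subregime admits the cancellation sketched above, a careful case analysis is needed to conclude that $\psi_{x}$ is globally smooth. It is precisely this cancellation that fails in the multi-ion setting, because distinct ion velocities prevent $u_{e}-u_{i}$ from being uniformly $O(\epsilon)$ across species, which is why the proposition is restricted to a single ion species.
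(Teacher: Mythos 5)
Your proposal follows essentially the same route as the paper's proof: invoke the discrete Gauss law \eqref{eq:1d_discrete_gauss_from_ampere} to get $q_i \bar{n}_{i,l+\frac{1}{2}} + q_e \bar{n}_{e,l+\frac{1}{2}} = O(\epsilon^2)$, combine it with the definition \eqref{eq:discrete_electron_drift_velocity} to get $u_{e,l+\frac{1}{2}} - u_{i,l+\frac{1}{2}} = O(\epsilon)$ in the single-ion case (so both species select the same branch of the piecewise flux), and then show $\widehat{\widetilde{j}}_{l+\frac{1}{2}} = (1+\delta_{l+\frac{1}{2}})\widetilde{j}_{l+\frac{1}{2}} + O(\epsilon)$ with $\delta$ depending only on slow variables, so that the frozen-slow-variable limit forces $\widetilde{j}^{k+1} = -\widetilde{j}^{k}$ exactly as in the linear-reconstruction proposition. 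Your organization of the algebra into an advective piece $u\bar{n}$ (which sums exactly to $\epsilon\widetilde{j}$) plus a stabilization piece that collapses under quasi-neutrality via $u_i^2-u_e^2=(u_i-u_e)(u_i+u_e)$ is a presentational variant of the paper's direct case-by-case simplification in Eq.~\eqref{eq: j hat tilde expressed using flux - single ion}, and it reproduces the same subsonic coefficient $1 + \frac{n_{i,l}-n_{i,l+1}}{n_{i,l}+n_{i,l+1}}\frac{u_{i,l+\frac{1}{2}}}{\bar{c}_{l+\frac{1}{2}}}$.

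One justification should be tightened. You argue that $1+\delta^{k}\neq 0$ ``for smooth, subsonic solutions since $\delta^{k}$ vanishes with the density gradient.'' That reasoning would not cover the regimes this scheme is designed for (shocks, where $n_{i,l}-n_{i,l+1}$ is not small), and it is also unnecessary: for positive densities one has $\left|\frac{n_{i,l}-n_{i,l+1}}{n_{i,l}+n_{i,l+1}}\right|<1$, and in the subsonic branch $\left|u_{i,l+\frac{1}{2}}/\bar{c}_{l+\frac{1}{2}}\right|<1$, so $|\delta^{k}|<1$ and $1+\delta^{k}>0$ unconditionally; in the supersonic branches the coefficient is $n_{i,l+1}/\bar{n}_{i,l+\frac{1}{2}}$ or $n_{i,l}/\bar{n}_{i,l+\frac{1}{2}}$, again strictly positive. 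This is the bound the paper uses, and with it your argument goes through without any smoothness or small-gradient hypothesis. Your remaining caveat about smoothness of $\psi_x$ across the regime boundaries $u_{\alpha}=\pm\bar{c}$ is a fair observation, but the paper does not resolve it either, so it is not a defect relative to the target proof.
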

\begin{proof}
   From Eqs.~\eqref{eq:current_continuity_flux}, \eqref{eq:continuity_flux_avg}, \eqref{eq:continuity_flux_neg_component}, \eqref{eq:continuity_flux_pos_component}, we can write $\widehat{\widetilde{j}}_{l+\frac{1}{2}}$ as,
     \begin{eqnarray}    
        \widehat{\widetilde{j}}_{l+\frac{1}{2}} &=& 
        \frac{1}{\epsilon} \left\{\sum_{i}^{N_{i}} q_{i} \left[ 
            \begin{array}{ccc}
                n_{i,l+1}u_{i,l+\frac{1}{2}} &  
                \textnormal{if} & 
                u_{i,l+\frac{1}{2}} \le -\bar{c}_{l+\frac{1}{2}} \\[2mm]
                \displaystyle
                (n_{i,l} - n_{i,l+1})\frac{u_{i,l+\frac{1}{2}}^2 + \bar{c}^2_{l+\frac{1}{2}}}{4\bar{c}_{l+\frac{1}{2}}} + \bar{n}_{i,l+\frac{1}{2}} u_{i,l+\frac{1}{2}} &
                \textnormal{if} & 
                -\bar{c}_{l+\frac{1}{2}}<u_{i,l+\frac{1}{2}} < \bar{c}_{l+\frac{1}{2}}\; \\[4mm]
                n_{i,l} u_{i,l+\frac{1}{2}} &  
                \textnormal{if} & 
                \bar{c}_{l+\frac{1}{2}} \le u_{i,l+\frac{1}{2}}  
            \end{array}
        \right] \right.  \nonumber \\[4mm]
        \label{eq: j hat tilde expressed using flux}
        & & \hspace*{8mm} \left. + q_e \left[ 
            \begin{array}{ccc}
                n_{e,l+1}u_{e,l+\frac{1}{2}} &  
                \textnormal{if} & 
                u_{e,l+\frac{1}{2}} \le -\bar{c}_{l+\frac{1}{2}} \\[2mm]
                \displaystyle
                (n_{e,l} - n_{e,l+1})\frac{u_{e,l+\frac{1}{2}}^2 + \bar{c}^2_{l+\frac{1}{2}}}{4\bar{c}_{l+\frac{1}{2}}} + \bar{n}_{e,l+\frac{1}{2}} u_{e,l+\frac{1}{2}} &
                \textnormal{if} & 
                -\bar{c}_{l+\frac{1}{2}} < u_{e,l+\frac{1}{2}} < \bar{c}_{l+\frac{1}{2}}\; \\[4mm]
                n_{e,l} u_{e,l+\frac{1}{2}} &  
                \textnormal{if} & 
                \bar{c}_{l+\frac{1}{2}} \le u_{e,l+\frac{1}{2}}  
            \end{array}
        \right] \right\}
    \end{eqnarray}
    In addition, the discrete Gauss law~\eqref{eq:1d_discrete_gauss_from_ampere} can be re-stated as 
    \begin{equation}
    \label{eq: discrete gauss law with epsilon -- one cell}
    \sum_{i}^{N_i} q_i n_{i,l} + q_e n_{e,l} = O(\epsilon^2). 
    \end{equation}
    By writing the latter equation on cells $l$ and $l+1$ and taking the average we obtain
    \begin{equation}
    \label{eq: discrete gauss law with epsilon -- cell average}
    \sum_{i}^{N_i} q_i \bar{n}_{i,l+\frac{1}{2}} + q_e \bar{n}_{e,l+\frac{1}{2}} = O(\epsilon^2). 
    \end{equation}
    Combining the last equation with Eq.~\eqref{eq:discrete_electron_drift_velocity} we have
    \begin{equation}
        \label{eq: electron velocity finite epsilon}
        u_{e,l+\frac{1}{2}} = -\left( q_e \bar{n}_{e,l+\frac{1}{2}}\right)^{-1} \sum_{i}^{N_{i}}  q_i \bar{n}_{i,l+\frac{1}{2}} u_{i,l+\frac{1}{2}}+\epsilon \left( q_e \bar{n}_{e,l+\frac{1}{2}}\right)^{-1} \tilde{j}_{l+\frac{1}{2}}\, .
    \end{equation} 
    In general, it is not trivial to discuss properties of \eqref{eq: j hat tilde expressed using flux} because we expect an incoherent behavior of $u_{\alpha,l+\frac{1}{2}}$ for ion and electron species resulting in different stencils used for different species. However, a simple treatment is possible in the case of $N_{i}=1$, that is, when single ion specie is present. In this case, Eqs.~\eqref{eq: discrete gauss law with epsilon -- one cell} and \eqref{eq: electron velocity finite epsilon} simplify to the following 
    after a simple manipulation:%
    \begin{gather}
    \label{eq: discrtete Gauss law single ion specie}
    q_i n_{i,l} + q_e n_{e,l} = O(\epsilon^2), \\
    \label{eq: electron velocity finite epsilon single ion specie}
        u_{e,l+\frac{1}{2}} = u_{i,l+\frac{1}{2}}+\epsilon \left( q_e \bar{n}_{e,l+\frac{1}{2}}\right)^{-1} \tilde{j}_{l+\frac{1}{2}}+O(\epsilon^2) .
    \end{gather}%
    The last equation states that for sufficiently small $\epsilon$, the velocities $u_{i,l+\frac{1}{2}}$ and $u_{i,l+\frac{1}{2}}$ start to behave coherently. In particular, \eqref{eq: j hat tilde expressed using flux} simplifies to 
    \begin{eqnarray}    
     \widehat{\widetilde{j}}_{l+\frac{1}{2}} &=& 
        \frac{1}{\epsilon} \left\{  
            \begin{array}{ccc}
                q_i n_{i,l+1}u_{i,l+\frac{1}{2}} + q_e n_{e,l+1}u_{e,l+\frac{1}{2}} &  
                \textnormal{if} & 
                u_{i,l+\frac{1}{2}} \le -\bar{c}_{l+\frac{1}{2}} \\[2mm]
                \displaystyle
                q_i \bar{n}_{i,l+\frac{1}{2}} u_{i,l+\frac{1}{2}} + 
                q_e \bar{n}_{e,l+\frac{1}{2}} u_{e,l+\frac{1}{2}} + 
                \epsilon \frac{n_{e,l} - n_{e,l+1}}{\bar{n}_{e,l+\frac{1}{2}}}\frac{u_{i,l+\frac{1}{2}}}{2\bar{c}_{l+\frac{1}{2}}} \tilde{j}_{l+\frac{1}{2}}+
                O(\epsilon^2) &
                \textnormal{if} & 
                -\bar{c}_{l+\frac{1}{2}}<u_{i,l+\frac{1}{2}} < \bar{c}_{l+\frac{1}{2}}\; \\[4mm]
                q_i n_{i,l}u_{i,l+\frac{1}{2}} + q_e n_{e,l}u_{e,l+\frac{1}{2}} &  
                \textnormal{if} & 
                \bar{c}_{l+\frac{1}{2}} \le u_{i,l+\frac{1}{2}}  
            \end{array}
        \right\}  \nonumber \\[3mm]
        \label{eq: j hat tilde expressed using flux - single ion}
        &=& 
           \left\{  
            \begin{array}{ccc}
                \frac{n_{i,l+1}}{\bar{n}_{i,l+\frac{1}{2}}} \tilde{j}_{l+\frac{1}{2}}+O(\epsilon) &  
                \textnormal{if} & 
                u_{i,l+\frac{1}{2}} \le -\bar{c}_{l+\frac{1}{2}} \\[2mm]
                \displaystyle
                \left(1 + \frac{n_{i,l} - n_{i,l+1}}{n_{i,l} + n_{i,l+1}} \frac{u_{i,l+\frac{1}{2}}}{\bar{c}_{l+\frac{1}{2}}} \right) \tilde{j}_{l+\frac{1}{2}}+
                O(\epsilon) &
                \textnormal{if} & 
                -\bar{c}_{l+\frac{1}{2}}<u_{i,l+\frac{1}{2}} < \bar{c}_{l+\frac{1}{2}}\; \\[4mm]
                \frac{n_{i,l}}{\bar{n}_{i,l+\frac{1}{2}}} \tilde{j}_{l+\frac{1}{2}}+O(\epsilon) &  
                \textnormal{if} & 
                \bar{c}_{l+\frac{1}{2}} \le u_{i,l+\frac{1}{2}}  
            \end{array}
        \right\} 
    \end{eqnarray}
    It is trivial to show that the coefficient in front of $\widetilde{j}_{l+\frac{1}{2}}$ in the second condition is strictly positive. Taking the limit $\Delta t\to 0$, $\frac{\epsilon}{\Delta t}\to 0$ in Eqs.~(\ref{eq:discretized_cdf_vlasov}), (\ref{eq:discretized_cdf_vlasov_2}), (\ref{eq:1d_discrete_continuity}) -- (\ref{eq:1d_discrete_ampere}) we obtain
    $\left(\widehat{\widetilde{j}}^{\epsilon\rightarrow 0}_{l+\frac{1}{2}}\right)^{k+\frac{1}{2}}=0$ and $n^{k+1}_{\alpha, l}=n^{k}_{\alpha, l}$ and
    $u_{\alpha,l+\frac{1}{2}}^{k+1} = u_{\alpha,l+\frac{1}{2}}^{k}$. In view of (\ref{eq: j hat tilde expressed using flux - single ion}) and using similar considerations as in Proposition~\ref{proposition:equilvalence_of_numerical_current_and_solution_current_1}, the latter equation yields $\tilde{j}^{k+1}_{l+\frac{1}{2}} = -\tilde{j}^{k}_{l+\frac{1}{2}}$ in all three cases of the flux. The rest of the proposition follows similarly to Proposition~\ref{proposition:equilvalence_of_numerical_current_and_solution_current_1}.

\end{proof}
\begin{proposition}\label{proposition:invertibility_of_jacobian}
Eqs.~\eqref{eq:1d_discrete_current} and \eqref{eq:1d_discrete_ampere} implicitly define $\widetilde{j}^{k+1}_{l+1/2}$ and $E^{k+1}_{l+1/2}$ as smooth functions of the rest of the discrete variables in the limiting case $\epsilon\to 0$ for the special cases of 1) using a linear reconstruction for $\widehat{F}_{n_{\alpha},l+\frac{1}{2}}$ or 2) using the kinetic-flux reconstruction but for a single ion species. 
\end{proposition}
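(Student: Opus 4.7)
The plan is to apply the implicit function theorem to the system $R_{\widetilde{j},l+\frac{1}{2}}=0$, $R_{E,l+\frac{1}{2}}=0$ in Eqs.~\eqref{eq:1d_discrete_current}--\eqref{eq:1d_discrete_ampere}, viewed as $2N_x$ equations for the $2N_x$ unknowns $(\widetilde{j}^{k+1}_{l+\frac{1}{2}},E^{k+1}_{l+\frac{1}{2}})_{l=1}^{N_x}$ parameterized smoothly by the remaining discrete state and by $\epsilon$. Since the residuals are smooth in their arguments (piecewise, in case 2), it suffices to establish that the partial Jacobian $J=\partial(R_{\widetilde{j}},R_E)/\partial(\widetilde{j}^{k+1},E^{k+1})$ is invertible at $\epsilon=0$; smoothness of the implicit function in a neighborhood of the limiting slow-manifold values constructed in Propositions~\ref{proposition:equilvalence_of_numerical_current_and_solution_current_1}--\ref{proposition:equilvalence_of_numerical_current_and_solution_current_2} then follows.

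Writing $J$ in block form with row blocks $(R_{\widetilde{j}},R_E)$ and column blocks $(\widetilde{j}^{k+1},E^{k+1})$, I would identify the $\epsilon$-scaling of each block. Both diagonal blocks pick up an $\epsilon/\Delta t$ contribution from the discrete time derivatives in Eqs.~\eqref{eq:1d_discrete_current}--\eqref{eq:1d_discrete_ampere}, and the only other $\widetilde{j}^{k+1}$-dependence of $R_{\widetilde{j}}$ enters through $\widehat{F}_{\widetilde{j},l}$ via the electron drift $u_{e,l\pm\frac{1}{2}}$, which couples to $\widetilde{j}^{k+1}$ with a prefactor $\epsilon/(q_e\bar{n}_e)$ by Eq.~\eqref{eq:discrete_electron_drift_velocity}. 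Hence both diagonal blocks are $O(\epsilon)$ and vanish in the limit. The off-diagonal $J_{\widetilde{j}E}$ is diagonal with entries $-\tfrac{1}{2}\sum_{\alpha}(q_\alpha^2/m_\alpha)\bar{n}_{\alpha,l+\frac{1}{2}}^{k+1/2}$, strictly negative for positive densities. The off-diagonal $J_{E\widetilde{j}}$ is diagonal because $\widehat{\widetilde{j}}_{l+\frac{1}{2}}$ depends on $\widetilde{j}^{k+1}$ only through its value at the single face $l+\frac{1}{2}$, with entry $\tfrac{1}{2}\,\partial\widehat{\widetilde{j}}_{l+\frac{1}{2}}/\partial\widetilde{j}^{k+1}_{l+\frac{1}{2}}$; using Eq.~\eqref{eq:multi_ion_jhattilde} in case 1 this is exactly $1/2$, while in case 2 it equals $1/2$ times the coefficient of $\widetilde{j}_{l+\frac{1}{2}}$ appearing piecewise in Eq.~\eqref{eq: j hat tilde expressed using flux - single ion}, which is strictly positive in each of the three regimes by the sign analysis already established in Proposition~\ref{proposition:equilvalence_of_numerical_current_and_solution_current_2}. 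The determinant of the anti-block-diagonal $J$ is then (up to sign) the product of these two diagonal-block determinants and is nonzero, so the implicit function theorem yields the desired smooth representation.

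The principal obstacle is case 2: the kinetic flux is only $C^0$ in $u_{i,l+\frac{1}{2}}$ at the sonic thresholds $u_{i,l+\frac{1}{2}}=\pm\bar{c}_{l+\frac{1}{2}}$, so strictly speaking the implicit function theorem must be invoked within each of the three smooth branches of Eq.~\eqref{eq: j hat tilde expressed using flux - single ion} and the local graphs glued using continuity of $\widehat{\widetilde{j}}_{l+\frac{1}{2}}$ at the thresholds. The key ingredient that makes this gluing nontrivial is the uniform strict positivity of the $\widetilde{j}_{l+\frac{1}{2}}$ coefficient across all three branches, including the subsonic branch where one uses $|u_{i,l+\frac{1}{2}}|/\bar{c}_{l+\frac{1}{2}}\le 1$ together with $|(n_{i,l}-n_{i,l+1})|/(n_{i,l}+n_{i,l+1})<1$; this rules out a vanishing diagonal entry of $J_{E\widetilde{j}}$ at the sonic transitions and thus yields a uniformly invertible $J$ throughout the relevant state space.
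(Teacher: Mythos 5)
Your proposal is correct and follows essentially the same route as the paper: compute the partial (Gateaux) derivatives of $R_{\widetilde{j},l+1/2}$ and $R_{E,l+1/2}$ with respect to $\widetilde{j}^{k+1}_{l+1/2}$ and $E^{k+1}_{l+1/2}$ in the limit $\epsilon\to 0$, observe that the diagonal blocks vanish while the off-diagonal blocks are diagonal matrices with nonzero entries (the density-weighted plasma-frequency factor on one side and $\tfrac{1}{2}\,\partial\widehat{\widetilde{j}}_{l+1/2}/\partial\widetilde{j}^{k+1}_{l+1/2}$ on the other, strictly positive in all three flux regimes), and conclude invertibility. Your write-up is somewhat more explicit than the paper's in two respects that are worth keeping: the anti-block-diagonal determinant argument, and the acknowledgment that the kinetic flux is only $C^0$ at the sonic thresholds so the implicit function theorem must be applied branchwise and the graphs glued using the uniform positivity bound $\bigl|\tfrac{n_{i,l}-n_{i,l+1}}{n_{i,l}+n_{i,l+1}}\,\tfrac{u_{i,l+1/2}}{\bar{c}_{l+1/2}}\bigr|<1$ — a point the paper passes over.
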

\begin{proof}
Taking the limit $\epsilon \rightarrow 0$ of Eqs.~\eqref{eq:1d_discrete_current} and \eqref{eq:1d_discrete_ampere} and taking the Gateaux derivatives of the resulting equations with respect to $\widetilde{j}^{k+1}_{l+1/2}$ and $E^{k+1}_{l+1/2}$, we obtain the following for the linear reconstruction case:
\begin{equation}
    \label{eq:gateaux_E_discrete_1d_current}
    \frac{\partial R^{\epsilon\rightarrow 0}_{\tilde{j},l+1/2}}{\partial E^{k+1}_{l+1/2}}
    =
    \sum^{N_s}_{\alpha} \frac{q^2_{\alpha}}{m_{\alpha}}\frac{\bar{n}^{k+1}_{\alpha,l+1/2}}{2}
\end{equation}
\begin{equation}
    \label{eq:gateau_j_discrete_1d_current}
    \frac{\partial R^{\epsilon\rightarrow 0}_{\tilde{j}_{l+1/2}}}{\partial \widetilde{j}^{k+1}_{l+1/2}}
    =
    0,
\end{equation}
\begin{equation}
    \label{eq:gateaux_e_discrete_1d_ampere}
    \frac{\partial R^{\epsilon\rightarrow 0}_{E,l+1/2}}{\partial E^{k+1}_{l+1/2}}  
    =
    0,
\end{equation}
\begin{equation}
    \label{eq:gateaux_j_discrete_1d_ampere}
    \frac{\partial R^{\epsilon\rightarrow 0}_{E,l+1/2}}{\partial \widetilde{j}^{k+1}_{l+1/2}} =
    \frac{1}{2},
\end{equation}
and for the kinetic-flux reconstruction case, the only difference is in $\partial R^{\epsilon\rightarrow 0}_{\widetilde{j}_{l+1/2}}/\partial \widetilde{j}^{k+1}_{l+1/2}$ given as
\begin{equation}
    \label{eq:gateaux_j_discrete_1d_ampere2}
    \frac{\partial R^{\epsilon\rightarrow 0}_{E,l+1/2}}{\partial \widetilde{j}^{k+1}_{l+1/2}} =
           \frac{1}{2}\left\{  
            \begin{array}{ccc}
                \frac{n^{k+1}_{i,l+1}}{\bar{n}^{k+1}_{i,l+\frac{1}{2}}} &  
                \textnormal{if} & 
                u^{k+1}_{i,l+\frac{1}{2}} \le -\bar{c}^{k+1}_{l+\frac{1}{2}} \\[2mm]
                \displaystyle
                1 + \frac{n^{k+1}_{i,l} - n^{k+1}_{i,l+1}}{n^{k+1}_{i,l} + n^{k+1}_{i,l+1}} \frac{u^{k+1}_{i,l+\frac{1}{2}}}{\bar{c}^{k+1}_{l+\frac{1}{2}}} &
                \textnormal{if} & 
                -\bar{c}^{k+1}_{l+\frac{1}{2}}<u^{k+1}_{i,l+\frac{1}{2}} < \bar{c}^{k+1}_{l+\frac{1}{2}}\; \\[4mm]
                \frac{n^{k+1}_{i,l}}{\bar{n}^{k+1}_{i,l+\frac{1}{2}}}  &  
                \textnormal{if} & 
                \bar{c}^{k+1}_{l+\frac{1}{2}} \le u^{k+1}_{i,l+\frac{1}{2}}  
            \end{array}
        \right\} 
\end{equation}
Hence, for non-vanishing $n_i$, the Jacobian is diagonal and invertible.
\end{proof}
\begin{proposition}
Eqs.~(\ref{eq:discretized_cdf_vlasov}), (\ref{eq:discretized_cdf_vlasov_2}), (\ref{eq:1d_discrete_continuity}) -- (\ref{eq:1d_discrete_ampere}) represent a discrete-time fast-slow system for the cases of 1) using a linear reconstruction for $\widehat{F}_{n_{\alpha},l+\frac{1}{2}}$ or 2) using the kinetic-flux reconstruction but for a single-ion species. Specifically, let $X$ and $Y$ be finite-dimensional spaces such that $x=\left\{ {\cal F}_{\alpha,l,p}, n_{\alpha,l},u_{i,l+\frac{1}{2}},T_{\alpha,l}\right\}\in X$ and $y=\left\{ \widetilde{j}_{l+\frac{1}{2}},E_{l+\frac{1}{2}}\right\}\in Y$.
Then 
\begin{itemize}
\item[(1)]{} There is a neighborhood of $\{ 0, 0 \}$ in the parameter variable $\{ \Delta t, \frac{\epsilon}{\Delta t} \}$ such that a Eqs.~\eqref{eq:discretized_cdf_vlasov}, \eqref{eq:discretized_cdf_vlasov_2}, \eqref{eq:1d_discrete_continuity} -- 
\eqref{eq:1d_discrete_ampere}  define a family of smooth mappings $\Phi_{\Delta t, \frac{\epsilon}{\Delta t}}:X \times Y \to 
X\times Y$ with respect to parameters $\{\Delta t, \frac{\epsilon}{\Delta t} \}$ such that $\{ x^{k+1},y^{k+1} \} = \Phi_{\Delta t, \frac{\epsilon}{\Delta t}}(x^{k},y^{k})$.
\item[(2)]{} There is a family of smooth mappings $\psi_{x}(y):Y\to Y$ smoothly parameterized by $x\in X$ such that
$\Phi_{0,0}=\{x,\psi_{x}(y)\}$.
\item[(3)]{} For each $x\in X$, the mapping $\psi_{x}(y)$ has a unique non-degenerate fixed point $y=y^{\ast}_{0}(x)$. In particular, 
$D_{y}\psi_{x}(y^{\ast}_0(x))-\mathrm{id}_{Y}$, is an invertible linear map, where $\mathrm{id}_{Y}$ is the identity transformation on $Y$. 
\end{itemize}
\end{proposition}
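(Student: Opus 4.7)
The plan is to derive claim (1) from the implicit function theorem applied to the coupled residual system, and to read off claims (2) and (3) from the explicit limiting analyses already carried out in Propositions~\ref{proposition:equilvalence_of_numerical_current_and_solution_current_1}--\ref{proposition:invertibility_of_jacobian}. All regularity needed for the implicit function theorem is already present in the definitions of the numerical fluxes, provided one stays in open regimes where densities, temperatures, and the plasma-frequency coefficient $\sum_{\alpha} q_{\alpha}^{2}\bar{n}_{\alpha}/m_{\alpha}$ are bounded away from zero and, in the kinetic-flux case, on either side of the sonic transitions.

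For claim (1), I would assemble Eqs.~\eqref{eq:discretized_cdf_vlasov}, \eqref{eq:discretized_cdf_vlasov_2}, and \eqref{eq:1d_discrete_continuity}--\eqref{eq:1d_discrete_ampere} into a single residual system $\mathbf{F}(z^{k+1}; z^{k}, \Delta t, \epsilon/\Delta t) = 0$ (augmented with the Runge--Kutta intermediate stage $\mathcal{F}^{\ast}$), writing $\epsilon$ as the product $\Delta t \cdot (\epsilon/\Delta t)$ of the two independent small parameters, and rescale the moment residuals by multiplying through by $\Delta t$ to eliminate their $1/\Delta t$ singularities. After this rescaling every component of $\mathbf{F}$ is smooth in a neighborhood of $(\Delta t, \epsilon/\Delta t) = (0,0)$. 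Computing the Gateaux derivative $D_{z^{k+1}}\mathbf{F}$ at the limit point yields a block-triangular matrix whose slow-variable diagonal block is the identity (because the rescaled continuity, momentum, internal-energy, and Vlasov equations all collapse to $x^{k+1} = x^{k}$), and whose fast-variable diagonal block decomposes across cells into $2 \times 2$ anti-diagonal sub-blocks with strictly positive entries, as computed in Proposition~\ref{proposition:invertibility_of_jacobian}. Both diagonal blocks are therefore invertible, so the implicit function theorem produces the required smooth family $\Phi_{\Delta t, \epsilon/\Delta t}$ in a neighborhood of $(0,0)$.

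Claim (2) is essentially the content of Propositions~\ref{proposition:equilvalence_of_numerical_current_and_solution_current_1} and \ref{proposition:equilvalence_of_numerical_current_and_solution_current_2}: at $(\Delta t, \epsilon/\Delta t) = (0,0)$ the slow variables are frozen and the fast variables evolve via an explicit affine map $\psi_{x}(\widetilde{j}, E) = \bigl(-\widetilde{j},\ -E + 2h(x)\bigr)$, where $h(x)$ is the bracketed forcing term in Eq.~\eqref{eq: limiting case E explicit} divided by $\sum_{\alpha} q_{\alpha}^{2}\bar{n}_{\alpha}/m_{\alpha}$; the kinetic-flux single-ion case has the same affine structure by Proposition~\ref{proposition:equilvalence_of_numerical_current_and_solution_current_2}. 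For claim (3), I would solve the fixed-point equation $y^{\ast} = \psi_{x}(y^{\ast})$ component-wise: $\widetilde{j}^{\ast} = -\widetilde{j}^{\ast}$ forces $\widetilde{j}^{\ast} = 0$, which is the discrete ambipolarity condition, and $E^{\ast} = -E^{\ast} + 2h(x)$ gives $E^{\ast} = h(x)$, matching the continuum Ohm's law \eqref{eq: explicit eq E epsilon 0}. Uniqueness is automatic because $\psi_{x}$ is affine in $y$, and non-degeneracy follows by direct inspection of the explicit formulas: $D_{y}\psi_{x}(y^{\ast}) = -\mathrm{id}_{Y}$, so $D_{y}\psi_{x}(y^{\ast}) - \mathrm{id}_{Y} = -2\,\mathrm{id}_{Y}$, which is trivially invertible.

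The only real obstacle is the piecewise nature of the kinetic-flux splitting in Eq.~\eqref{eq: j hat tilde expressed using flux - single ion}: the map $\Phi_{\Delta t, \epsilon/\Delta t}$ is only smooth on each of the three open sonic regimes, so the implicit function theorem has to be applied separately on each piece and the size of the neighborhood of $(0,0)$ produced may depend on which regime the initial state lies in. In the linear-reconstruction case this localization is unnecessary and the argument is uniform on the admissible phase space.
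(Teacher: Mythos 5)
Your proposal is correct, and Parts (2) and (3) follow the paper's route exactly: the limiting map $\Phi_{0,0}$ is read off from Propositions~\ref{proposition:equilvalence_of_numerical_current_and_solution_current_1} and \ref{proposition:equilvalence_of_numerical_current_and_solution_current_2}, and the fixed point $\widetilde{j}^{\ast}=0$, $E^{\ast}=h(x)$ is the discrete ambipolarity/Ohm's-law pair. Where you differ is in the details of Parts (1) and (3), and in both cases your version is the more explicit one. For Part (1) the paper argues by cases, treating $\epsilon\gg\Delta t$ (where it asserts the system Jacobian tends to the identity as $\Delta t\to 0$) separately from $\Delta t\gg\epsilon$ (deferred to Propositions~\ref{proposition:equilvalence_of_numerical_current_and_solution_current_1}--\ref{proposition:equilvalence_of_numerical_current_and_solution_current_2}); you instead rescale the residuals by $\Delta t$, write $\epsilon=\Delta t\cdot(\epsilon/\Delta t)$, and apply the implicit function theorem once at the corner $(\Delta t,\epsilon/\Delta t)=(0,0)$, using the block-triangular Jacobian whose fast block is the anti-diagonal $2\times 2$ structure of Proposition~\ref{proposition:invertibility_of_jacobian}. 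This single-application argument actually delivers a genuine two-parameter neighborhood of $(0,0)$, which the paper's two asymptotic regimes only cover informally, so it is a strengthening rather than a detour. For Part (3) the paper cites Proposition~\ref{proposition:invertibility_of_jacobian} together with Eqs.~\eqref{eq: linear limiting case j-tilde} and \eqref{eq: limiting case E explicit}; your direct observation that $\psi_x$ is affine with linear part $-\mathrm{id}_Y$, so that $D_y\psi_x(y^{\ast})-\mathrm{id}_Y=-2\,\mathrm{id}_Y$, is the cleaner and entirely equivalent verification (the frozen slow variables ensure the coefficient $c^{k+1}=c^{k}$ in the kinetic-flux case, so the $-\mathrm{id}_Y$ linear part survives there too). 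One small caveat on your closing remark: the piecewise-smoothness issue you flag for the kinetic-flux splitting also afflicts the SMART limiter in the kinetic update \eqref{eq:discretized_cdf_vlasov}--\eqref{eq:discretized_cdf_vlasov_2} via its median functions, so strictly speaking "smooth" should be read as "smooth on each open regime" in both cases 1) and 2); the paper glosses over this as well, so it is not a gap relative to the published argument.
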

\begin{proof} To establish Part (1), we observe that in the case $\epsilon\gg \Delta t>0$, 
Eqs.~\eqref{eq:discretized_cdf_vlasov}, \eqref{eq:discretized_cdf_vlasov_2},  
\eqref{eq:1d_discrete_continuity} -- \eqref{eq:1d_discrete_ampere} define a smooth function $\Phi_{\Delta t, \frac{\epsilon}{\Delta t}}(x^{k},y^{k})$ implicitly for sufficiently small $\Delta t$. Indeed, in this case, the system's Jacobian approach identity as $\Delta t\to 0$. The case $\Delta t \gg \epsilon >0$ is covered by Propositions~\ref{proposition:equilvalence_of_numerical_current_and_solution_current_1} and
\ref{proposition:equilvalence_of_numerical_current_and_solution_current_2}. Indeed, in this case, 
Eqs.~\eqref{eq:discretized_cdf_vlasov}, \eqref{eq:discretized_cdf_vlasov_2}, \eqref{eq:1d_discrete_continuity}--\eqref{eq:1d_discrete_internal_energy} implicitly define the slow components of $\Phi_{\Delta t, \frac{\epsilon}{\Delta t}}(x^{k},y^{k})$ to be $Id_{x}+O(\Delta t)$. The expressions for the fast components are provided by Eqs.~\eqref{eq:multi_ion_jhattilde}, \eqref{eq: Solving for E with O(epsilon)}, and \eqref{eq: j hat tilde expressed using flux - single ion}. 

Part (2) is covered by Propositions \ref{proposition:equilvalence_of_numerical_current_and_solution_current_1} and \ref{proposition:equilvalence_of_numerical_current_and_solution_current_2}. Finally, Part (3) follows from Proposition \ref{proposition:invertibility_of_jacobian} and Eqs.~\eqref{eq: limiting case E explicit} and \eqref{eq: linear limiting case j-tilde}. 
\end{proof}

Existence of non-degenerate fixed point $y_{0}(x)$ of $\psi_{x}(y)$ such that 
$D_{y}\psi_{x}(y^{\ast}_0(x))-\mathrm{id}_{Y}$ is invertible implies existence of a unique formal slow manifold for the discrete fast-slow system \eqref{eq:discretized_cdf_vlasov}, \eqref{eq:discretized_cdf_vlasov_2},  
\eqref{eq:1d_discrete_continuity} -- \eqref{eq:1d_discrete_ampere} (see Theorem~5 in Section~7 of \cite{burby_cnsnm_slow_manifold_2020}). Moreover, the system \eqref{eq:discretized_cdf_vlasov}, \eqref{eq:discretized_cdf_vlasov_2},  \eqref{eq:1d_discrete_continuity} -- \eqref{eq:1d_discrete_ampere} implements a slow manifold integrator in the sense that for fixed $\epsilon$ and small $\Delta t$, $(x^{K+1},y^{k+1})=\Phi_{\Delta t, \frac{\epsilon}{\Delta t}} (x^{k},y^{k})$ approximates the time $\Delta t$ flow of the fast slow system \eqref{eq:fs_cdf_equation}--\eqref{eq:fs_ampere_equation} and that $(x^{k+1},y^{k+1}) = \Phi_{\Delta t, \frac{\epsilon}{\Delta t}} (x^{k},y^{k})$ is a discrete-time fast slow system. It was shown in \cite{burby_cnsnm_slow_manifold_2020} that slow manifold integrators have favorable numerical properties, including fast convergence of Picard iterations in the case of 
small $\Delta t$ and $\epsilon$.

With the existence of the discrete slow manifold, we show that the limiting form of the discrete fast solutions of our scheme is consistent with the continuum analogues. Taking the limit of $\epsilon \rightarrow0$, from Eqs. \eqref{eq:1d_discrete_current}, we obtain:
\begin{equation}
    \left(E^{\epsilon\rightarrow 0}_{l+\frac{1}{2}}\right)^{k+\frac{1}{2}}=
    \frac{\left(\frac{\widehat{F}^{\epsilon\rightarrow 0}_{\widetilde{j},l+1}-\widehat{F}^{\epsilon\rightarrow 0}_{\widetilde{j},l}}{\Delta\mathfrak{x}}\right)^{k+\frac{1}{2}}
    +
    \sum_{\alpha}^{N_{s}}\frac{q_{\alpha}}{m_{\alpha}}\left(\frac{P_{\alpha,l+1}-P_{\alpha}}{\Delta\mathfrak{x}}\right)^{k+\frac{1}{2}}}{\sum_{\alpha}^{N_{s}}\frac{q_{\alpha}^{2}\left(\bar{n}^{\epsilon\rightarrow 0}_{\alpha,l+\frac{1}{2}}\right)^{k+\frac{1}{2}}}{m_{\alpha}}}.
    \label{eq:1d_generalized_ohms_law}
\end{equation}
We recover the well-known Ohm's law for the electric field expressed in terms of the electron pressure gradient by invoking the $m_{i}/m_{e}\gg1$ ordering,
\begin{equation}
    \label{eq:1d_discrete_ohms_law}
    \left(E_{l+\frac{1}{2}}\right)^{k+\frac{1}{2}}\approx\frac{\left(P_{e,l+1}-P_{e,l}\right)^{k+\frac{1}{2}}}{q_{e}\left(\bar{n}^{\epsilon\rightarrow 0}_{e,l+\frac{1}{2}}\right)^{k+\frac{1}{2}}\Delta\mathfrak{x}}.
\end{equation}
We close this section by noting that, for the multi-ion case with the kinetic-flux reconstruction, symmetries between electrons and ions are lost in the definition of the numerical flux reconstruction. As a consequence, terms containing $\epsilon^{-1}$ do not vanish, and an explicit invertibility of the limiting Jacobian is difficult to prove. However, we demonstrate numerically in Sec. \ref{subsec:multi_ion_case} that quasi-neutrality and ambipolarity limits are still robustly recoverable.
%
%
%
%
%
%
\subsection{Invariance Preservation of ${\cal F}$\label{subsec:discrete_invariance_preservation_cdf}}

The invariance for ${\cal F}$ in Eq. \eqref{eq:cdf_invariance} must be upheld discretely to ensure consistency with the original Vlasov-Amp{\`e}re system for each species. In the discrete, Eq. \eqref{eq:cdf_time_dependent_invariance} is not generally satisfied because the discrete product rule and the chain rule are not exactly preserved. As such, in a single step of the Runge-Kutta stage, the discrete relationship of Eq. \eqref{eq:cdf_time_dependent_invariance} is only satisfied to some discretization truncation error. Hereon, unless otherwise mentioned, for brevity, we drop the explicit species index in $\vec{\cal F}$, that is, $\vec{\cal F} \equiv \vec{\cal F}_{\alpha}$.  Consider the error accumulated in the invariance measure from the first stage of an RK2 integrator (that is, a forward Euler step):
\begin{equation}
    \left\langle \left\{ 1,\mathfrak{w},\mathfrak{w}^{2}\right\} ,\vec{{\cal F}}_{l}^{\dagger}-\vec{{\cal F}}_{l}^{k}\right\rangle _{\delta w}=\Delta t\vec{\mathbb{E}}_{l}\approx{\cal O}\left(\Delta t,\Delta\mathfrak{x}^{\beta},\Delta\mathfrak{w}^{\eta}\right).
\end{equation}
Here, $\vec{\mathbb{E}}_{l}=\left\{ \mathbb{E}_{n_{l}},\mathbb{E}_{u_{l}},\mathbb{E}_{T_{l}}\right\} =\left\langle \left\{ 1,\mathfrak{w},\mathfrak{w}^2\right\} ,\vec{G}_{l}\right\rangle _{\delta w}\in\mathbb{R}^{3}$ is the discrete error in the invariances, $\mathfrak{w}^2 \equiv \mathfrak{w} \odot \mathfrak{w} \in \mathbb{R}^{N_w}$ where $\odot$ denotes an element-wise multiplication operation, $\vec{G}_{l}=\left\{ G_{l,1},\cdots,G_{l,N_{w}}\right\} \in \mathbb{R}^{N_w}$, and $\beta$ ($\eta$) is the order of discretization error in $x$ ($w$). This discrete error is projected out through an additive correction of the intermediate solution per RK stage, $\vec{{\cal F}}_{l}^{\dagger}$, for a given spatial point,
\begin{equation}
    \vec{{\cal F}}_{l}^{*} =\vec{{\cal F}}_{l}^{\dagger}+\delta\vec{{\cal F}}_{l},
\end{equation}
such that $\left\langle \left\{ 1,\mathfrak{w},\mathfrak{w}^{2}\right\} ,\vec{{\cal F}}_{l}^{*}-\vec{{\cal F}}_{l}^{k}\right\rangle _{\delta w}=\vec{0}$. Here, 
\begin{equation}
    \delta\vec{{\cal F}}_{l}=\left(c_{0,l}+\mathfrak{w}c_{1,l}+\mathfrak{w}^{2}c_{2,l}\right)\odot\vec{{\cal F}}_{l}^{\dagger}
\end{equation}
is the additive correction and $\left\{c_{0,l}, c_{1,l}, c_{2,l} \right\}$ are the projection coefficients determined by solving the following local optimization problem:
\begin{equation}
    \min_{\vec{c}_l}\vec{c}_l^{T}\cdot \mathfrak{M}\cdot\vec{c}_{l}-\vec{\lambda}\cdot\left[\vec{\mathbb{E}}_{l}-\left\langle \left\{ 1,\mathfrak{w},\mathfrak{w}^{2}\right\} ,\delta\vec{{\cal F}}_{l}\right\rangle _{\delta w}\right],
\end{equation}
$\vec{c}_{l}=\left\{ c_{0,l},c_{1,l},c_{2,l}\right\} $, and $\mathfrak{M}\in\mathbb{R}^{3\times3}$ is a diagonal penalty matrix with elements defined as $\mathfrak{M}_{ij}=\delta_{ij}w_{max}^{i}$ and is designed to penalize contributions of the linear and quadratic terms in the correction heavier, and $\vec{\lambda}\in\mathbb{R}^{3}$ is a vector of Lagrange multipliers to ensure that the additive correction projects the invariance errors from the intermediate step. This projection operation is performed at the end of each RK step and is denoted as, $\vec{\cal F}^* = {\cal P}\left( \vec{\cal F}^{\dagger}, \vec{\cal F}^{k}, \right)$ where $\vec{\cal F}^k$ is appropriately replaced with the solution at the previous RK stage. We conclude this section by noting that $\delta \vec{\cal F}_l \ll \vec{\cal F}_l$, for a sufficiently small truncation error in $x$, $w$, \emph{or a sufficiently small} $\Delta t$. As such, for explicit time integrators used for the kinetic subsystem, the solution remains positive even for problems supporting shocks where the SMART discretization asymptotically goes to an upwind discretization (first-order accurate in $x$), as will be demonstrated later. 

%
%
\section{Solver\label{sec:solver}}

The fully discretized equations, \eqref{eq:discretized_cdf_vlasov}-\eqref{eq:1d_discrete_ampere}, form a coupled nonlinear system of equations which we solve using a fixed-point iteration scheme,
\begin{equation}
    \vec{z}^{\left(s+1\right)}={\cal G}\left(\vec{z}^{\left(s\right)}\right).\label{eq:fixed_point_solver}
\end{equation}
Here, $\vec{z}=\left\{ \vec{{\cal F}},\vec{{\cal M}},\vec{E}\right\} \in\mathbb{R}^{N_z}$ is the discrete solution vector as defined earlier, $N_z = N_{sp}N_x N_w + 3N_{sp}N_x + N_x$ is the total number of unknowns, $s$ in parentheses denotes the fixed point iteration index, and ${\cal G}:\mathbb{R}^{N_z} \rightarrow \mathbb{R}^{N_z}$ is the fixed-point map. For this study, we employ a two-stage block solver for the fixed-point map, which solves the kinetic equation and the moment-field subsystem in a Picard linearized manner. The fixed-point iteration continues until $\frac{\left|\vec{z}^{\left(s\right)}-\vec{z}^{\left(s-1\right)}\right|}{N_z}\le\epsilon_{z}$; and unless otherwise specified, $\epsilon_{z}=10^{-6}$ is the relative convergence tolerance for the fixed-point solver. Note that for the first fixed-point iteration, the solution is initialized from the previous time-step solution, $\vec{z}^{\left(0\right)}=\vec{z}^{k}$.

For the moment-field subsystem, a quasi-Newton (QN) solver is used to solve the nonlinear system, 
\begin{equation}
    \label{eq:moment_field_residual_vector}
    \vec{R} = \left\{ \vec{R}_{n},\vec{R}_{u},\vec{R}_{\widetilde{j}},\vec{R}_{T},\vec{R}_{E}\right\} =\vec{0},     
\end{equation}
where $\vec{R}_{n}=\left\{ \vec{R}_{n_{1}},\cdots,\vec{R}_{N_{s}}\right\} \in\mathbb{R}^{N_{s}N_{x}}$ is the nonlinear residual vector for continuity equations, $\vec{R}_{u}=\left\{ \vec{R}_{u_{1}},\cdots,\vec{R}_{u_{N_{i}}}\right\} \in\mathbb{R}^{N_{i}N_{x}}$ is for the ion momentum equations, $\vec{R}_{\widetilde{j}} \in \mathbb{R}^{N_{x}}$ is for the current equation, $\vec{R}_{T}=\left\{ \vec{R}_{T_{1}},\cdots,\vec{R}_{T_{N_{s}}}\right\} \in\mathbb{R}^{N_{s}N_{x}}$ is for the internal energy equations, and $\vec{R}_{E}\in\mathbb{R}^{N_{x}}$ is for Amp{\`e}re equation. With $\vec{M}_{\left(0\right)}\equiv\left\{ \vec{{\cal M}},\vec{E}\right\} ^{\left(s\right)}$ provided as the initial guess for the QN solver and the heat flux Picard linearized from the previous fixed-point iteration, $\vec{Q}^{\left(s\right)}=\vec{Q}\left[\vec{{\cal F}}^{\left(s\right)}\right]\in\mathbb{R}^{N_{s}N_{x}}$, the $r^{th}$ iteration QN solve is defined as:
\begin{equation}
    \vec{M}_{\left(r+1\right)}=\vec{M}_{\left(r\right)}+\delta\vec{M}_{(r)}.
\end{equation}
Here, $\delta\vec{M}_{(r)}$ is the QN update, obtained by solving the following quasi-linearized system at each grid point:
\begin{equation}
    \label{eq:1d_quasi_linear_continuity}
    \frac{\delta n_{\alpha,l,(r)}}{\Delta t}=-R_{n_{\alpha},l}\left(\vec{M}_{\left(r\right)} ; \vec{z}^{k}\right)\;\;\forall\;\; \alpha\in\left\{1,\cdots,N_s\right\},
\end{equation}
\begin{equation}
    \label{eq:1d_quasi_linear_momentum}
    \frac{\bar{n}_{i,l+\frac{1}{2},(r)}\delta u_{i,l+\frac{1}{2},(r)}}{\Delta t}=-R_{u_{i},l+\frac{1}{2}}\left(\vec{M}_{\left(r\right)} ; \vec{z}^{k}\right) \;\;\forall\;\; i\in\left\{1,\cdots,N_i\right\},
\end{equation}
\begin{equation}
    \label{eq:1d_quasi_linear_energy}
    \frac{n_{\alpha,l,(r)}\delta T_{\alpha,l,(r)}}{2\Delta t}=-R_{T_{\alpha},l}\left(\vec{M}_{\left(r\right)},\vec{Q}^{\left(s\right)} \left[ \vec{\cal F}^{\left(s\right)} \right] ; \vec{z}^{k}\right) \;\;\forall\;\; \alpha\in\left\{1,\cdots,N_s\right\} ,
\end{equation}
\begin{equation}
    \label{eq:qn_current}
    \epsilon\frac{\delta\widetilde{j}_{l+\frac{1}{2},(r)}}{\Delta t}-\frac{\delta E_{l+\frac{1}{2},(r)}}{4}\sum_{\alpha}^{N_{s}}\frac{q_{\alpha}^{2}\left(\bar{n}_{\alpha,l+\frac{1}{2},(r)}+\bar{n}_{\alpha,l+\frac{1}{2}}^{k}\right)}{m_{\alpha}}=
    -R_{\widetilde{j},l+\frac{1}{2}}\left(\vec{M}_{\left(r\right)} ; \vec{z}^{k}\right),
\end{equation}
\begin{equation}
    \label{eq:qn_ampere}
    \epsilon\frac{\delta E_{l+\frac{1}{2},(r)}}{\Delta t}+\frac{\delta\widetilde{j}_{l+\frac{1}{2},(r)}}{2}=-R_{E,l+\frac{1}{2}}\left(\vec{M}_{\left(r\right)} ; \vec{z}^{k} \right).
\end{equation}
The specific choice of quasi-linearization is motivated by the existence of the discrete slow manifold discussed in Section \ref{subsubsec:discrete_qn_ap}, which recovers the quasi-neutral asymptotic limit when $\epsilon\rightarrow 0$. We observe that the quasi-linearized operator for the fast subsystem in Eqs. \eqref{eq:qn_current} and \eqref{eq:qn_ampere} closely resemble the Jacobian of the limiting form in Eqs. \eqref{eq:gateaux_E_discrete_1d_current}
to \eqref{eq:gateaux_j_discrete_1d_ampere} and shares similarity to the so-called \emph{physics-based preconditioners} in Refs. \cite{taitano_jcp_2015_conservative_vfpa_1, taitano_sisc_2013, chen_jcp_2011_implicit_pic, chen_jcp_2014_pc_es_pic} which captures the stiff electron plasma waves to improve the efficiency of the underlying Newton-Krylov solvers. We demonstrate the effectiveness of this quasi-linearization in controlling the number of QN iterations in Sec. \ref{sec:numerical_results}. Specifically, we demonstrate that, despite the lack of an explicit proof of discrete slow manifold for the multi-ion case using the kinetic-flux reconstruction, the choice of quasi-linearization leads to a robust moment-field solver.

The QN iteration for the moment-field subsystem is continued until $\left| \vec{R}_{(r)}\right| \le \epsilon_{rel} \left| \vec{R}_{(0)} \right|$, where $\epsilon_{rel} = 10^{-6}$ is the relative convergence tolerance used in this study unless otherwise specified. After the moment-field system is solved for $\vec{M}^{\left(s+1\right)} \equiv \left\{ \vec{{\cal M},}\vec{E}\right\} ^{\left(s+1\right)} := \vec{M}_{\left(r\right)}$, the updated moments are used to update the kinetic solution using a second order Runge-Kutta integrator as $\vec{\cal F}^{*} = G\left(\vec{{\cal F}}^{k},\vec{{\cal M}}^{\left(s+1\right)},\vec{{\cal M}}^{k} \right)$ and $\vec{\cal F}^{\left(s+1\right)} = G\left(\vec{{\cal F}}^{*},\vec{{\cal M}}^{\left(s+1\right)},\vec{{\cal M}}^{k}\right)$ with the invariance-preserving projection discussed in Sec. \ref{subsec:discrete_invariance_preservation_cdf} performed at each step. A flow diagram illustrating the evaluation of the fixed-point map for a given iteration, $s$, is shown in Figure \ref{fig:fixed_point_solver}.
\begin{figure}[th]
    \begin{centering}
    \includegraphics[width=0.8\textwidth]{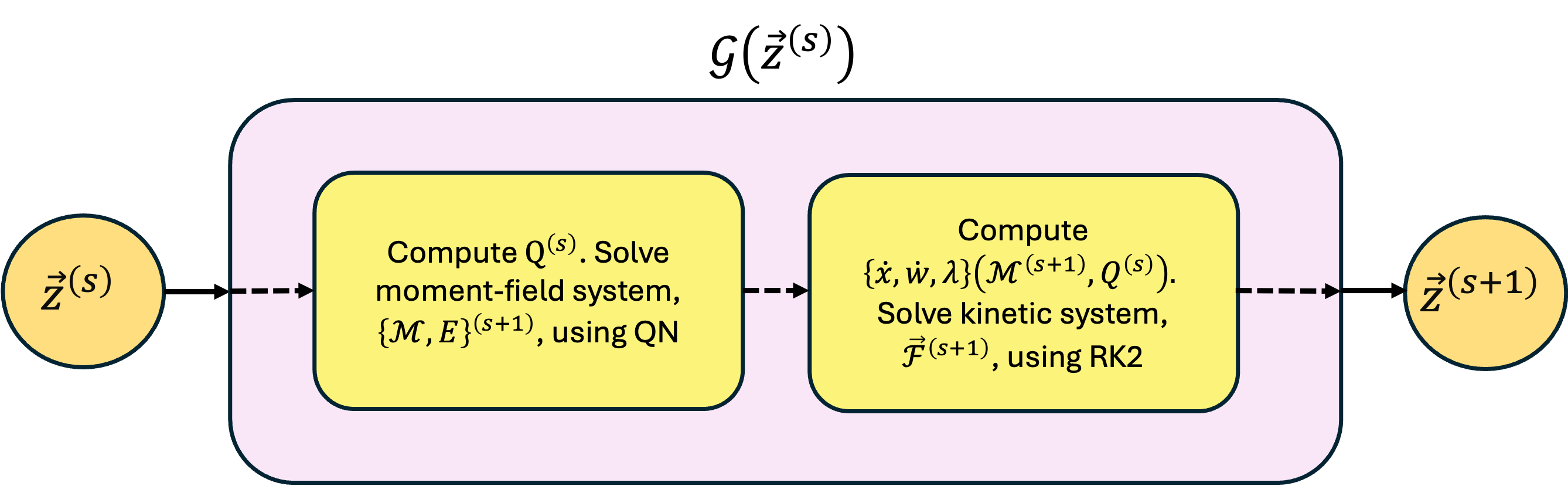}
    \par\end{centering}
    \caption{Illustration of the fixed point solver for iteration $s$. The moment-field subsystem is solved using a QN solver. With the updated moment-field system, the kinetic subsystem is updated with RK2.}
    \label{fig:fixed_point_solver}
\end{figure}
%
%
%
%
%
%
\section{Numerical Results\label{sec:numerical_results}}

The proposed discretization and solver, along with their efficacy, are demonstrated on a series of canonical test problems with increasing complexity. Unless otherwise stated, the following parameters are used for all problems: $\epsilon=1$, $N_{x}=32$, $N_{w}=128$, $L_{x}=4\pi$, $w_{max} = 7$, $n_{\alpha}^{0}\equiv n_{\alpha} \left(t=0,x\right) = 1+\delta_{n_{\alpha}}\sin\left(k_{x}x\right)$, $u_{\alpha}^{0}\equiv u_{\alpha} \left(t=0,x\right) = \widetilde{u}_{\alpha} + \delta_{u_{\alpha}} \sin\left(k_{x}x\right)$, $T_{\alpha}^{0}\equiv T_{\alpha}\left(t=0,x\right)=\widetilde{T}_{\alpha}+\delta_{T_{\alpha}}\sin\left(k_{x}x\right)$, $k_{x}=\frac{2\pi}{L_{x}}$, and $\widetilde{j}^0 \equiv \widetilde{j}\left(t=0,x\right) = \epsilon^{-1} \sum^{N_s}_{\alpha}q_{\alpha}n^0_{\alpha} u^0_{\alpha}$. Here, the tilde denotes (except for $\widetilde{j}$) the initial amplitude, $\delta_{n_{\alpha}}$, $\delta_{u_{\alpha}}$, $\delta_{T_{\alpha}}$ denotes the initial perturbation amplitude for the moment quantities for species $\alpha$. The initial electric field is evaluated using the Poisson equation,
\begin{equation}
    -\epsilon^{2}\frac{\partial^{2}\phi}{\partial x^{2}}=\sum_{\alpha}^{N_{s}}q_{\alpha}n_{\alpha}^{0}.
\end{equation}
Here, $\phi\left(x\right)$ is the electrostatic potential. For discrete purpose, $\phi$ is solved on cell centers and the electric field is computed using a backward difference on cell faces: $E_{l+\frac{1}{2}}^{0}\equiv E_{l+\frac{1}{2}}\left(t=0\right)=-\frac{\left(\phi_{l+1}-\phi_{l}\right)}{\Delta \mathfrak{x}}$. Finally, the initial distribution function is defined as a Gaussian everywhere, ${\cal F}_{\alpha}^{0}\equiv{\cal F}_{\alpha}\left(t=0,x,w\right)=\frac{\exp\left(-w^{2}\right)}{\sqrt{\pi}}$.
\subsection{Free Streaming Test \label{subsec:neutral_gas}}
A single species, neutral, free-streaming gas is considered with the discrete-moment equation identical to Eq. \eqref{eq:1d_discrete_continuity}-\eqref{eq:1d_discrete_kinetic_energy_residual}, but without electrostatic acceleration, Amp{\`e}re's equation, nor non-Amp{\`e}rean current contribution. The purpose of this problem is to highlight the critical importance of ensuring the invariance of ${\cal F}$ in the discrete, Eq. \eqref{eq:cdf_time_dependent_invariance}. For
this problem, we consider the following parameters $m=1$, $L_{x}=1$, $w_{max}=6$, $N_{x}=128$, $N_{w}=128$, $\delta_{n_{\alpha}}=\delta_{u_{\alpha}}=0.2$,
$\delta_{T_{\alpha}}=0$, a simulation time of $t_{max}=0.1$ and a time step size, $\Delta t=5\times10^{-4}$. In Figure \ref{fig:free_streaming_neutral_pdf_wh_projection},
we show the reconstructed distribution function (in $\left\{ x,v\right\} $) at $t=0.025$, $0.05,$ and $0.1$ with the projection technique discussed
in sec. \eqref{subsec:discrete_invariance_preservation_cdf}. 
\begin{figure}[th]
    \begin{centering}
    \includegraphics[width=0.33\textwidth]{Figures/neutral_gas_kinetics/wh_proj_t=0_025}\includegraphics[width=0.33\textwidth]{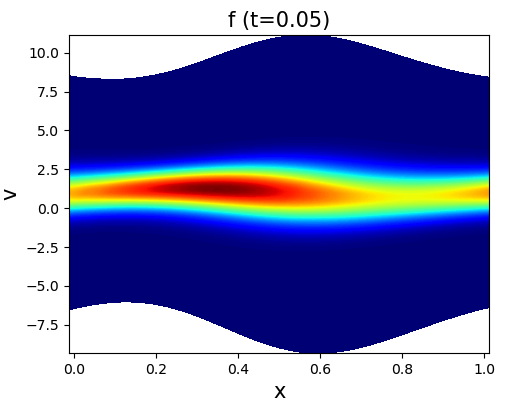}\includegraphics[width=0.33\textwidth]{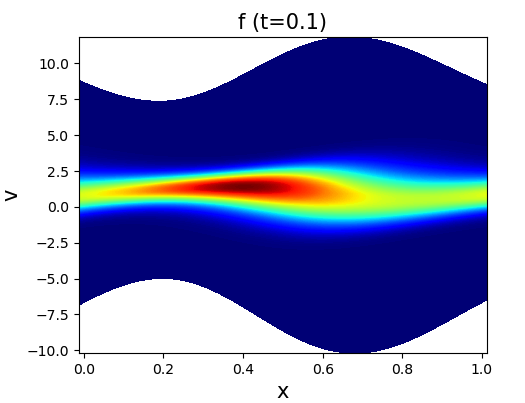}
    \par\end{centering}
    \caption{Free streaming test: The reconstructed PDF in $\left\{ x,v\right\} $ at times $t=0.025$ (left), $t=0.05$ (center), and $t=0.1$ (right) with the use of invariance enforcing projection.\label{fig:free_streaming_neutral_pdf_wh_projection}}
\end{figure}
As can be seen, the expected solution based on a free streaming flow is recovered. In Figure \ref{fig:invariance_error_vs_time}, an integrated measure of the discrete error in the invariances as a function of time,
\begin{equation}
    {\cal E}_{0}^{k}=\frac{\sum_{l=1}^{N_{x}}\Delta\mathfrak{x}\left\langle 1,\vec{{\cal F}}_{l}^{k}-\vec{{\cal F}}_{l}^{0}\right\rangle _{\delta w}}{\sum_{l=1}^{N_{x}}\Delta\mathfrak{x}\left\langle 1,\vec{{\cal F}}\right\rangle _{\delta w}},
\end{equation}
\begin{equation}
    {\cal E}_{1}^{k}=\sum_{l=1}^{N_{x}}\Delta\mathfrak{x}\left\langle \mathfrak{w},\vec{{\cal F}}_{l}^{k}-\vec{{\cal F}}_{l}^{0}\right\rangle _{\delta w},
\end{equation}
\begin{equation}
    {\cal E}_{2}^{k}=\frac{\sum_{l=1}^{N_{x}}\Delta\mathfrak{x}\left\langle \mathfrak{w}^{2},\vec{{\cal F}}_{l}^{k}-\vec{{\cal F}}_{l}^{0}\right\rangle _{\delta w}}{\sum_{l=1}^{N_{x}}\Delta\mathfrak{x}\left\langle \mathfrak{w}^{2},\vec{{\cal F}}_{l}^{0}\right\rangle _{\delta w}}.
\end{equation}
is shown and that the invariances are ensured to machine precision using the projection technique.
\begin{figure}[th]
    \begin{centering}
    \includegraphics[width=1\textwidth]{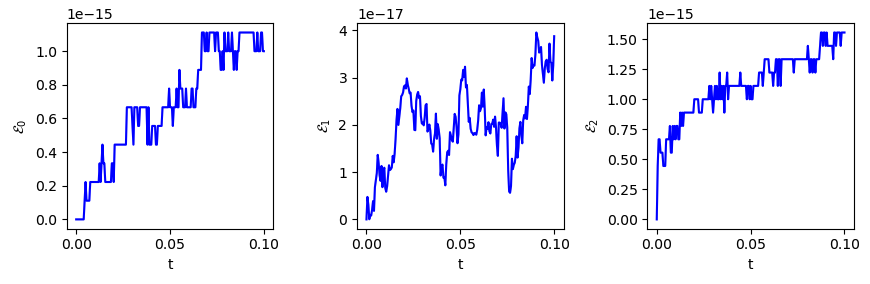}
    \par\end{centering}
    \caption{Free streaming test: ${\cal E}_{0}$, ${\cal E}_{1}$, ${\cal E}_{2}$ as a function of time. 
    \label{fig:invariance_error_vs_time}}
\end{figure}
These invariances ensure the nonlinear consistency between the dynamics of the kinetic system in the original and transformed representation and is critical to maintain long time stability of our solution. In Figure \ref{fig:free_streaming_neutral_pdf_no_projection}, we show the same quantity, but without projection operation. 
\begin{figure}[th]
    \begin{centering}
        \includegraphics[width=0.33\textwidth]{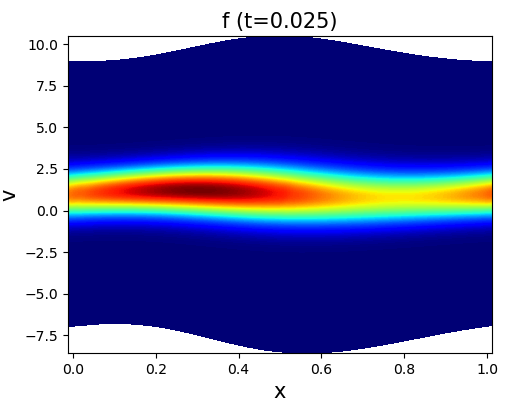}\includegraphics[width=0.33\textwidth]{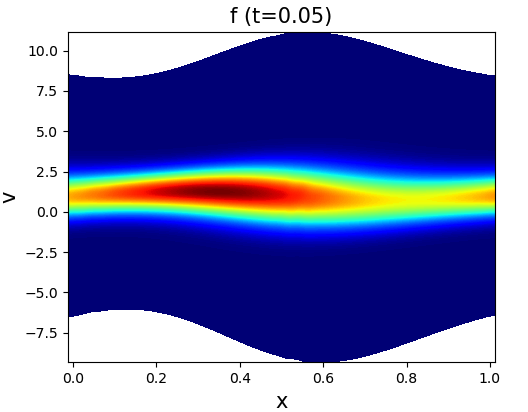}\includegraphics[width=0.33\textwidth]{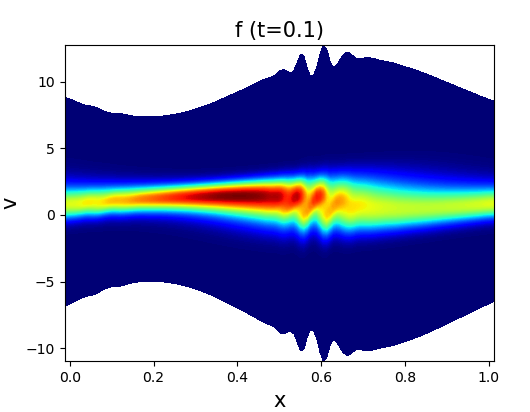}
    \par\end{centering}
    \caption{Free streaming test: The PDF reconstructed in $\left\{ x,v\right\} $ at times $t=0.025$ (left), $t=0.05$ (center), and $t=0.1$ (right)
    without the use of projection that enforces the invariance. 
    \label{fig:free_streaming_neutral_pdf_no_projection}}
\end{figure}
As can be seen, if the discrete invariance of ${\cal F}$ is not upheld, nonlinear inconsistency grows and unphysical modes are excited, which eventually leads to catastrophic numerical instabilities. Due to the highly nonlinear nature of the transformed equations --even though the original equation is perfectly linear-- the analysis is challenging and is left for future work.

%
%
\subsection{Landau Damping\label{subsec:weak_landau_damping}}

The weak Landau damping case is used to study collisionless damping of electric field energy. The conversion rate of field energy to plasma wave energy is determined through the linear dispersion relationship,
\begin{equation}
1+\frac{1}{k^{2}}\left[1+\frac{\omega}{\sqrt{2}k}Z\left(\frac{\omega}{\sqrt{2}k}\right)\right]=0.
\end{equation}
Here, $k$ is the wave vector of the perturbation, $\omega$ is the wave frequency, and $Z$ is the dispersion function of Fried and Conte. Solving for $\omega$, a complex solution is obtained in terms of the oscillatory (real) $\widetilde{\omega}$ and decaying (imaginary) $\gamma$ components of the frequency. The purpose of this test problem is to test the ability of the proposed approach to recover highly sensitive resonant behaviors and to demonstrate accuracy properties. We consider a proton-electron plasma with $m_{i}=1836$, $m_{e}=1$, $\epsilon=1$, $N_{x}=128,$ $N_{w}=256,$ $L_{x}=4\pi$, $\widetilde{u}_{i} = \widetilde{u}_{e}=0$, $\widetilde{T}_{i} = \widetilde{T}_{e}=1$, $\delta_{n_{i}} = \delta_{u_{i}} = \delta_{u_{e}}=\delta_{T_{i}}=\delta_{T_{e}}=0$, $\delta_{n_{e}}=0.01$ $\Delta t=10^{-2}$ and $t_{max}=60$. For this setup, the analytical damping rate of the electric field energy, ${\cal E}_{E} = \sum_{l=1}^{N_{x}} \Delta \mathfrak{x} \frac{\epsilon^{2}E_{l+\frac{1}{2}}^{2}}{2}$, is given as $2\gamma=-0.310$ and the simulation results are shown in Ref. \ref{fig:weak_landau_damping_rate}. 
\begin{figure}
    \begin{centering}
        \includegraphics[width=0.4\textwidth]{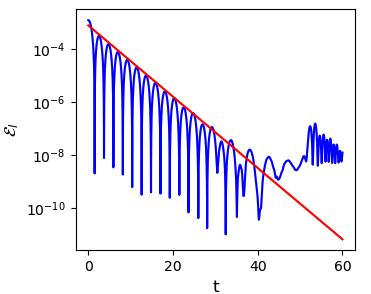}
    \par\end{centering}
    \caption{Weak Landau Damping: The energy of the electric field as a function of time.    The blue line denotes the simulation results and the red line denotes the theoretical damping rate $2\gamma=-0.31$. \label{fig:weak_landau_damping_rate}}
\end{figure}
As can be seen, the simulated damping rate agrees excellently with the analytical theory. 

We demonstrate the order accuracy of the proposed approach by performing a grid and time convergence study. For the spatial grid convergence study, we fix $\Delta t=10^{-3}$, $t_{max}=1$, and $N_{w}=256$ while varying $N_{x}=16,32,64,128,256,512$ with a reference solution obtained from $N_{x}=1024$. For the velocity grid convergence study, we fix $\Delta t=10^{-3}$, $t_{max}=1$, and $N_{x}=32$ while varying $N_{w}=16,32,64,128,256,512$ with a reference solution obtained from
$N_{w}=1024$. For the time-convergence study, we fix $N_{x}=32$, $N_{w}=32$, $t_{max}=1$, while varying $\Delta t = 10^{-2}, 4\times10^{-3}, 2\times10^{-3}, 10^{-3},5\times10^{-4}$ and a reference solution obtained from $\Delta t=10^{-4}$. The error is measured on the 2-norm of the temperatures,
\begin{equation}
    {\cal E}^{\Delta\mathfrak{x}}=\sum_{\alpha}^{N_{s}}\sqrt{\sum_{l}^{N_{x}}\Delta\mathfrak{x}\left(T_{\alpha,l}^{\Delta\mathfrak{x}}-T_{\alpha,l}^{\Delta\mathfrak{x},ref}\right)^{2}},
\end{equation}
\begin{equation}
    {\cal E}^{\Delta\mathfrak{w}}=\sum_{\alpha}^{N_{s}}\sqrt{\sum_{l}^{N_{x}}\Delta\mathfrak{x}\left(T_{\alpha,l}^{\Delta\mathfrak{w}}-T_{\alpha,l}^{\Delta\mathfrak{w},ref}\right)^{2}},
\end{equation}
\begin{equation}
    {\cal E}^{\Delta t}=\sum_{\alpha}^{N_{s}}\sqrt{\sum_{l}^{N_{x}}\Delta\mathfrak{x}\left(T_{\alpha,l}^{\Delta t}-T_{\alpha,l}^{\Delta t,ref}\right)^{2}},
\end{equation}
where superscript $ref$ denotes the reference solution obtained for each convergence study and superscripts $\Delta \mathfrak{x}$, $\Delta \mathfrak{w}$, $\Delta t$ represent the coarse grid (time-step) cases. In Figure \ref{fig:weak_landau_damping_convergence_study}, the results of the convergence test are summarized. 
\begin{figure}[th]
    \begin{centering}
        \includegraphics[width=0.33\textwidth]{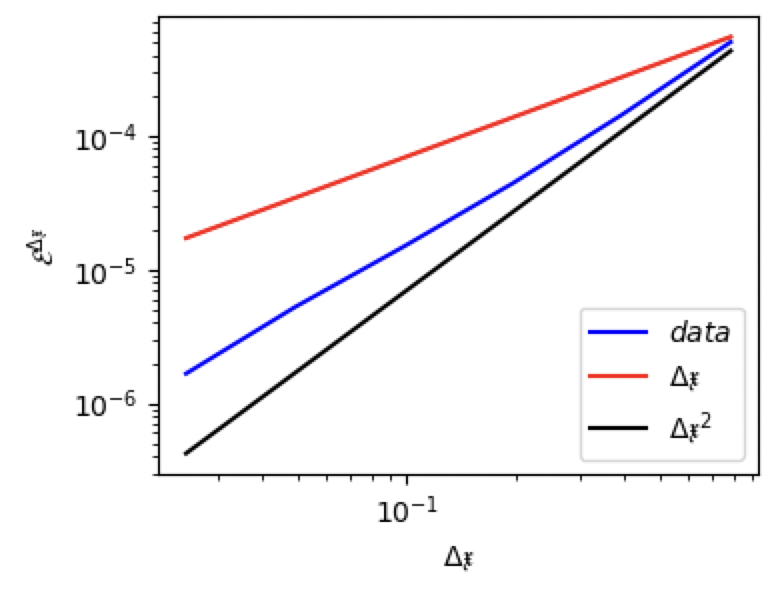}\includegraphics[width=0.33\textwidth]{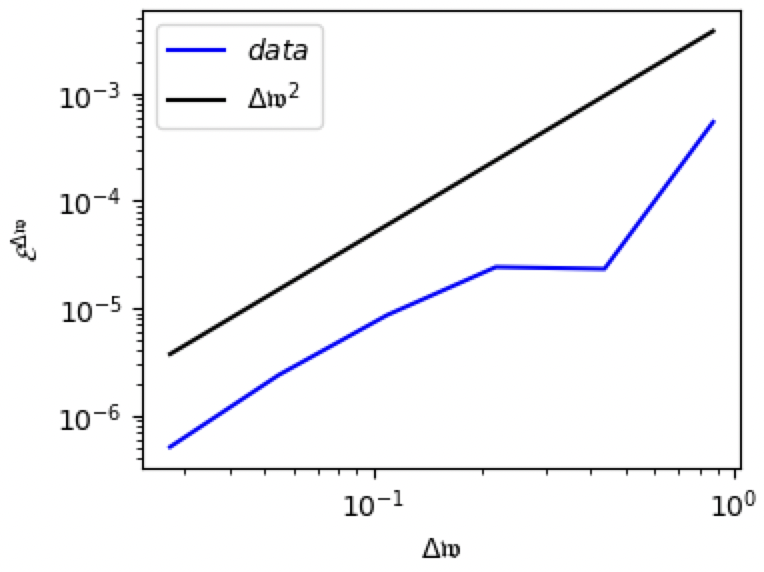}\includegraphics[width=0.33\textwidth]{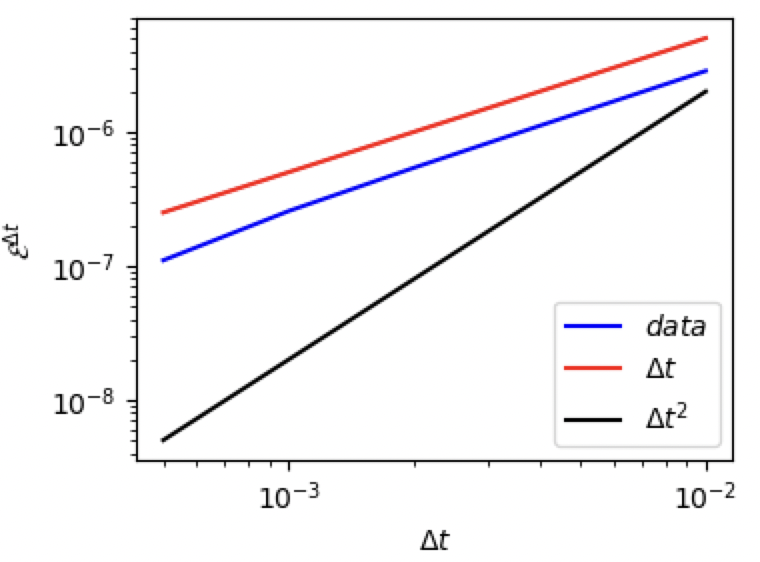}
    \par\end{centering}
    \caption{Weak Landau Damping: The results of the convergence study for $\Delta\mathfrak{x}$ (left), $\Delta\mathfrak{w}$ (center) and $\Delta t$ (right) are shown.
    \label{fig:weak_landau_damping_convergence_study}}
\end{figure}
As can be seen, second-order convergence is recovered in $w$, but only first-order convergence is recovered in $t$ and $x$. The first order accuracy in $t$ is expected as the mixed integration of RK2 for ${\cal F}$ with time-centered coefficients of $\vec{{\cal M}}$ does not guarantee a second-order accuracy. A lower order accuracy in $x$ is also expected since the flux reconstruction technique for the moment equations is adopted from Ref. \cite{goudon_fvca8_2017_staggered_euler_scheme_staggered} and relies on a first-order shock capture scheme that dominates the error. The low order time accuracy can be remedied by using a consistent and high-order accurate time integration scheme such as the implicit midpoint rule for both subsystems. However, such an integrator requires a non-linearly implicit treatment of the high-dimensional kinetic equation that can deal with stiff advection timescales. Similarly, high-order accuracy can be achieved in space by employing a high-order conservative and monotone-preserving staggered differencing scheme for the fluid subsystem, and both aspects are left for future study. 

A strong non-linear Landau damping case is tested to study the accuracy of the solver with strong perturbations. We follow the same setup as the weak Landau damping case, but with $N_{x}=128$, $N_{w}=512$, $w_{max}=7$, and $\delta_{n_{e}}=0.5$. In Figure \ref{fig:strong_landau_damping_rate}, the energy of the electric field is shown as a function of time. 
\begin{figure}[th]
    \begin{centering}
        \includegraphics[width=0.4\textwidth]{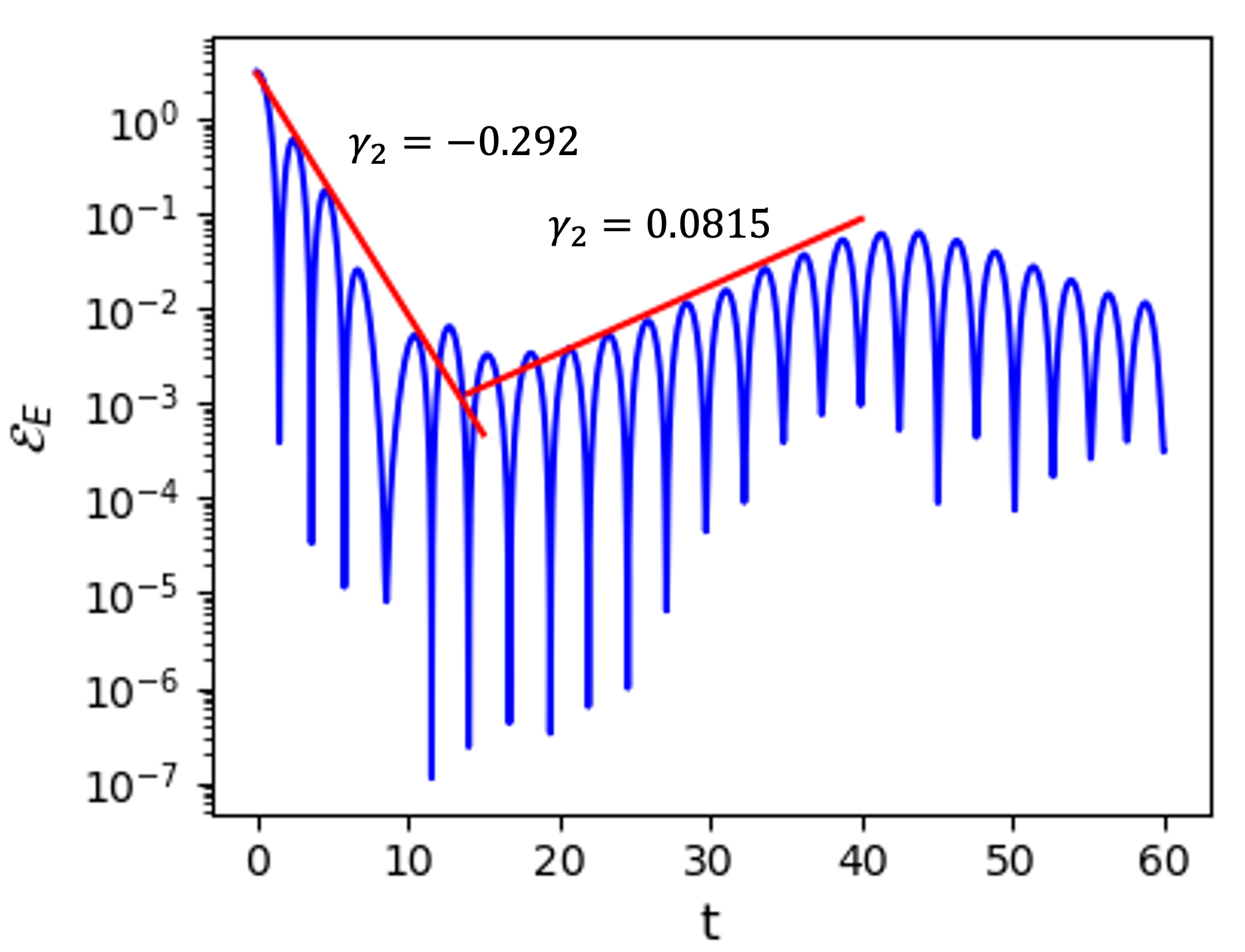}
    \par\end{centering}
    \caption{Strong Landau Damping: Field energy as a function of time.
    \label{fig:strong_landau_damping_rate} }
\end{figure}
As can be seen, the distinct damping and growth phases are captured with rates consistent with those reported in Refs. \cite{taitano_jcp_2015_conservative_vfpa_1, rossmanith_jcp_2011_positive_ho_sl_dg_vp}. 

%
%
\subsection{Ion Acoustic Shock Wave (IASW)\label{subsec:ion_acoustic_shockwave_problem}}

We model the IASW problem \cite{shay2007_jcp_ef_pi_vp, chen_jcp_2011_implicit_pic, taitano_jcp_2015_conservative_vfpa_1} to demonstrate the integrated capability of the proposed algorithm to simultaneously satisfy 1) conservation of mass, momentum, and energy; 2) Gauss law, 3) quasi-neutral asymptotic preservation, and 4) positivity of the PDF on a challenging collisionless electrostatic shock problem. For this problem, the parameters are chosen as $m_{i}=1$, $m_{e}=1/1836$, $L_{x}=144\lambda_{D}$, $\epsilon=\frac{1}{36}$, $w_{max}=8.5$, $N_{x}=1024$, $N_{w}=256$,
$\Delta t=\frac{\epsilon}{\sqrt{1836}}=6.48\times10^{-4}$, $t_{max}=5000\omega_{p,e}^{-1}$. $\delta_{n_{i}}=0.2$, $\delta_{n_{e}}=0.2\left(1-k^{2}\epsilon^{2}\right)$,
$\widetilde{u}_{e}=\widetilde{u}_{i}=-1$, $\delta_{u_{i}}=\delta_{u_{e}}=0.2$, $\widetilde{T}_{i}=0.05$, $\widetilde{T}_{e}=1$, $\delta_{i}=\delta_{e}=0$.
Here, $k_{x}=\frac{2\pi}{L_{x}}$ is the wave vector of the perturbation and the problem is initialized such that the bulk motion of the plasma is evolved in the reference of the propagation speed of the acoustic shock front of ions, $c_{s}\approx\sqrt{\widetilde{T}}\approx1$. Furthermore, the ion temperature is set much lower than that of the electrons to prevent ion Landau damping from dissipating the wave.  In Figure \ref{fig:iasw_density_field}, we show the number densities and the electric field at $t_{max}$.
\begin{figure}[th]
    \begin{centering}
        \includegraphics[width=0.7\textwidth]{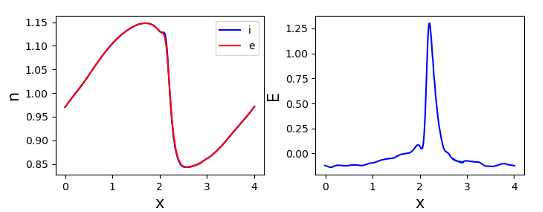}
    \par\end{centering}
    \caption{IASW: number densities for ions/electrons, $i$/$e$, (left) and the
    electric field (right) at the $t=t_{max}$.
    \label{fig:iasw_density_field}}
\end{figure}
As can be seen, the initial sinusoidal density profile leads to the formation of a shock structure with charge separation observed near the shock front ($x\approx2.25)$ on the scale of $\lambda_{D}=\epsilon$. A strong field forms near the shock front, followed by the formation of a soliton-like structure that propagates away from the front. Such strong electric fields lead to a rapid acceleration of ions, and classic wave-breaking features are seen in the ion distribution function, as reported similarly in the literature; refer to Figure \ref{fig:iasw_pdf}. 
\begin{figure}[th]
    \begin{centering}
    \includegraphics[width=0.8\textwidth]{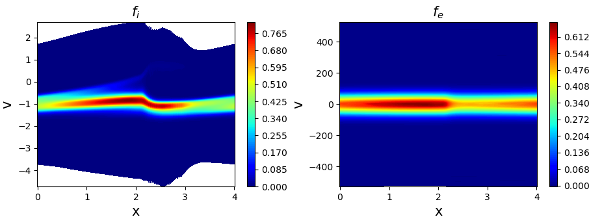}
     \par\end{centering}
    \caption{IASW: Distribution function of ions (left) and electrons (right).
    \label{fig:iasw_pdf}}
\end{figure}
In Figure \ref{fig:iasw_conservation_gauss} we show the simulation error in number density conservation,
\begin{equation}
    \mathbb{E}_{n}^{k}=\frac{\left|M^{k}-M^{0}\right|}{M^{0}},
\end{equation}
momentum conservation,
\begin{equation}
    \mathbb{E}_{\mathbb{P}}^{k}=\frac{\left|\mathbb{P}^{k}-\mathbb{P}^{0}\right|}{\left|\mathbb{P}^{0}\right|}
\end{equation}
energy conservation,
\begin{equation}
    \mathbb{E}_{{\cal E}}^{k}=\frac{\left|{\cal E}^{k}-{\cal E}^{0}\right|}{{\cal E}^{0}},
\end{equation}
and quality of Gauss law preservation,
\begin{equation}
    \mathbb{E}_{GL}^{k}=\left|\sum_{l}^{N_{x}}\left(\epsilon^{2}\frac{E_{l+\frac{1}{2}}^{k}-E_{l-\frac{1}{2}}^{k}}{\Delta \mathfrak{x}}-\sum_{\alpha}^{N_{s}}q_{\alpha}n_{\alpha,l}^{k}\right)\right|,
\end{equation}
where $M^{k}=\sum_{\alpha}^{N_{s}}\sum_{l}^{N_{x}}\Delta \mathfrak{x}m_{\alpha}n_{\alpha,l}^{k}$ is the total mass at time step $k$, $\mathbb{P}^{k}=\sum_{\alpha}^{N_{s}}m_{\alpha}\sum_{l}^{N_{x}}\Delta \mathfrak{x}\bar{n}_{\alpha,l+\frac{1}{2}}^{k}u_{\alpha,l+\frac{1}{2}}^{k}$ is the total momentum, and ${\cal E}^{k}=\sum_{\alpha}^{N_{s}}\left[\sum_{l}^{N_{x}}\Delta \mathfrak{x}\left(\frac{m_{\alpha}}{2}\bar{n}_{\alpha,l+\frac{1}{2}}^{k}\left(u_{\alpha,l+\frac{1}{2}}^{k}\right)^{2}+\frac{n_{\alpha,l}^{k}T_{\alpha,l}^{k}}{2}\right)\right]+\frac{\epsilon^{2}}{2}\sum_{l}^{N_{x}}\Delta \mathfrak{x}\left(E_{l+\frac{1}{2}}^{k}\right)^{2}$
is the total energy.
\begin{figure}[th]
    \begin{centering}
        \includegraphics[width=0.7\textwidth]{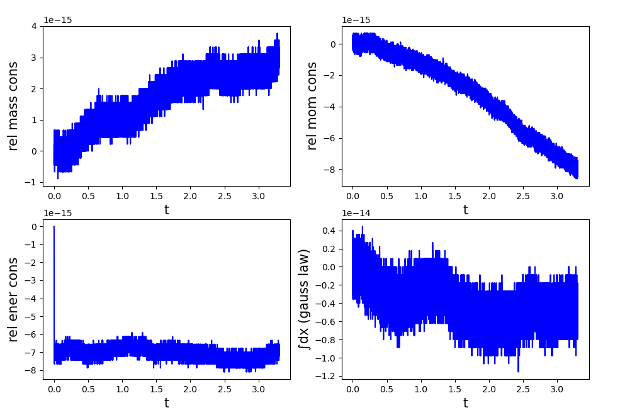}
    \par\end{centering}
    \caption{IASW: The quality of discrete conservation of mass (top-left), momentum (top-right), energy (bottom-left), and preservation of the Gauss law (bottom-right).
    \label{fig:iasw_conservation_gauss} }
\end{figure}
As can be seen, all of the critical physical invariance quantities of interest are preserved to machine precision. We also show in Figure \ref{fig:iasw_positivity} that our method ensures that the distribution function is positive at all times by plotting $\min\left(\vec{\cal F}\right)$.
\begin{figure}[th]
    \begin{centering}
        \includegraphics[width=0.4\textwidth]{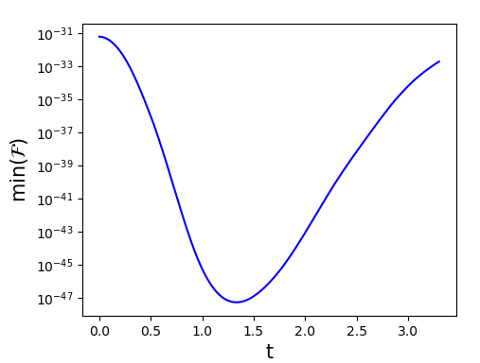}
    \par\end{centering}
    \caption{IASW: Positivity of the distribution function. 
    \label{fig:iasw_positivity}}
\end{figure}

We test the ability of the proposed formulation and discretization to recover the quasi-neutral asymptotic limit. For this purpose, we set $\epsilon=10^{-7}$, while using the same $\Delta t$ as the original setup. In this limit, from the Gauss law $\rho^{k}=\epsilon^{2}\frac{E^{k}_{l+1/2} - E^{k}_{l-1/2}}{\Delta \mathfrak{x}}\approx{\cal O}\left(\epsilon^{2}\right)$ and from the Amp{\`e}re's equation, $j^{k+1/2}_{l+1/2}=-\epsilon^{2}\frac{E^{k+1}_{l+1/2} - E^{k}_{l+1/2}}{\Delta t}\approx{\cal O}\left(\epsilon^{2}\right)$, which robustly places us in the quasi-neutral limit and the Ohm's law, Eq. \eqref{eq:1d_discrete_ohms_law}, should be valid to approximate the electric field, $E^{k+1/2}_{l+1/2}$, to ${\cal O}\left(\frac{m_{e}}{m_{i}}\right)$. In Figure \ref{fig:iasw_ap_property}, we show the quality of quasi-neutral asymptotic preservation at
$t_{max}$. 
\begin{figure}[th]
    \begin{centering}
        \includegraphics[width=0.9\textwidth]{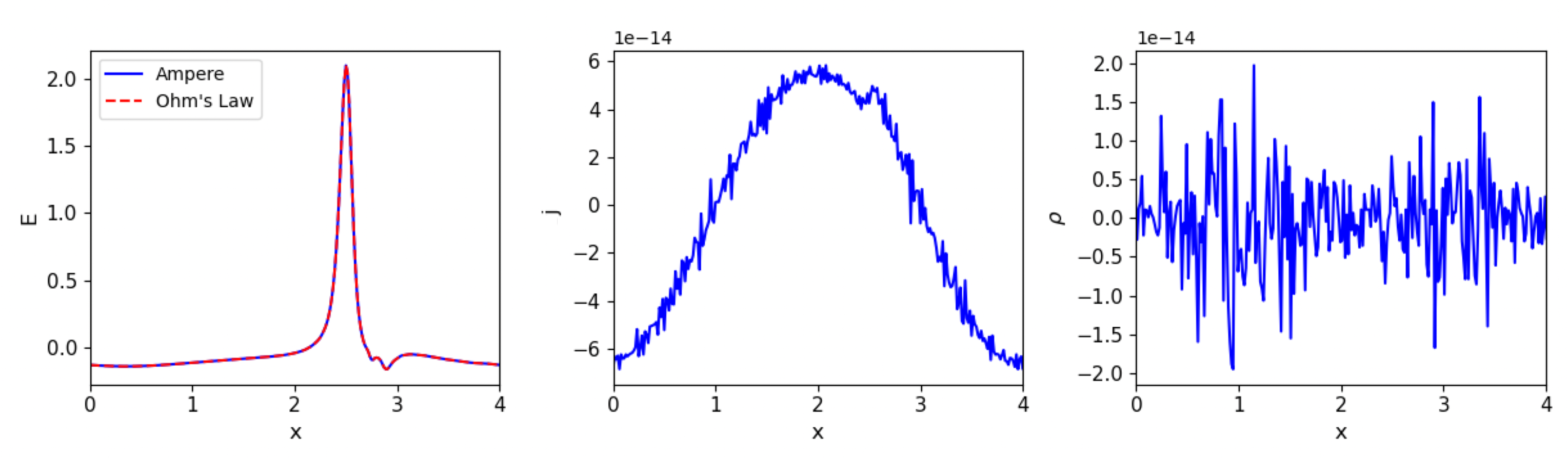}
    \par\end{centering}
    \caption{IASW: Comparison of the electric field from the evolution of the Amp{\`e}re's equation and the Ohm's law (left), the current density, $j = \epsilon\widetilde{j} = \epsilon \widehat{\widetilde{j}}$ for single-ion species (center), and the total charge density (right) for $\epsilon=10^{-7}$.}
    \label{fig:iasw_ap_property}
\end{figure}
As can be seen, the charge density and the currents are computed to be ${\cal O}\left(\epsilon^{2}\right)\sim10^{-14}$ as expected, and the Ohm's law acts as an excellent approximation to the exact electric field, demonstrating the capability of our formulation in recovering the asymptotic limiting solution guaranteed by the existence of discrete slow manifold. 

Finally, we demonstrate the performance of our nested iterative solver discussed in Section \ref{sec:solver}. In Figure \ref{fig:iasw_solver_performance}, we show the performance of the outer fixed point solver for converging the coupled kinetic-moment-field system and the QN solver for the moment-field subsystem with respect to $\Delta t \epsilon^{-1}$.
\begin{figure}[th]
    \begin{centering}
        \includegraphics[height=0.3\textwidth]{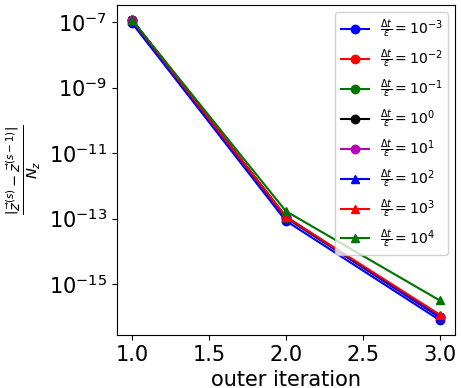}\includegraphics[height=0.3\textwidth]{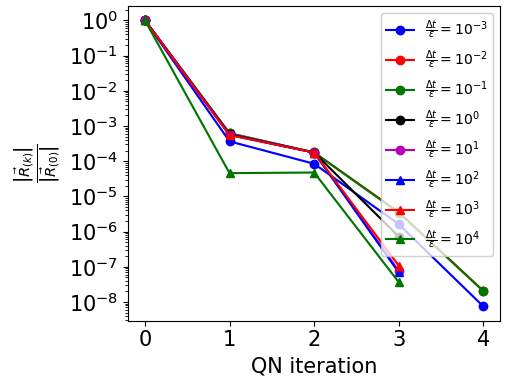}
    \par\end{centering}
    \caption{IASW: Performance of the outer fixed-point iterative solver (left) and the inner QN solver (right) with respect to $\Delta t/\epsilon$.
    \label{fig:iasw_solver_performance}}
\end{figure}
As can be seen, the performance of both the outer fixed-point iteration and the inner moment-field QN solver based on a quasi-linearization that satisfies the discrete slow manifold is largely insensitive to the ratio of $\Delta t \epsilon^{-1}$. A slight degradation in performance is seen for $\Delta t < \epsilon$ as expected; for non-vanishing $\epsilon$, $u_i \neq u_e$ and a small truncation error mismatch is introduced for $\widehat{\widetilde{j}}$ between Eq. \eqref{eq:1d_discrete_ampere} and \eqref{eq:qn_ampere}. 
%
%
%
%
%
%
\subsection{Multi-Ion Case}
\label{subsec:multi_ion_case}
Despite the lack of rigorous proof of the existence of discrete slow manifold for the multi-ion case, as discussed in Sec. \ref{subsubsec:discrete_qn_ap}, we demonstrate empirically the robustness of the proposed algorithm for a multi-ion problem in the quasi-neutral limit. For this problem, the parameters are chosen as $m_1 = 1$, $m_2 = 2$, $q_1 = 1$, $q_2 = 2$, $m_e = 1/1836$, $q_e = -1$, $L_x = 2\pi$, $\epsilon = 10^{-8}$, $w_{max} = 7$, $N_x = 128$, $N_w = 128$, $\Delta t = 10^{-3}$, $t_{max} = 5$, $\delta_{n_1} = \delta_{n_2} =  0.2$, $\widetilde{u}_1 = \widetilde{u}_2 = -1$, $\delta_{u_1} -\delta_{u_2} = -1$, $\widetilde{T}_1 = \widetilde{T}_2 = \widetilde{T}_e = 1$, $\delta_{T_1} = -\delta_{T_2} = \delta_{T_e} = 0.2$. We set the initial electron number density and flow speed so that quasi-neutrality and ambipolarity are exactly satisfied, $n^0_e = q_1 n^0_1 + q_2 n^0_2$ and $u^0_e = -\frac{q_1 n_1 + q_2 n_2}{q_e n^0_e}$. In Fig. \ref{fig:multiion_qn}, we show the quality of preservation of ambipolarity and quasi-neutrality at several times.
\begin{figure}[th]
    \begin{centering}
    \includegraphics[width=0.8\textwidth]{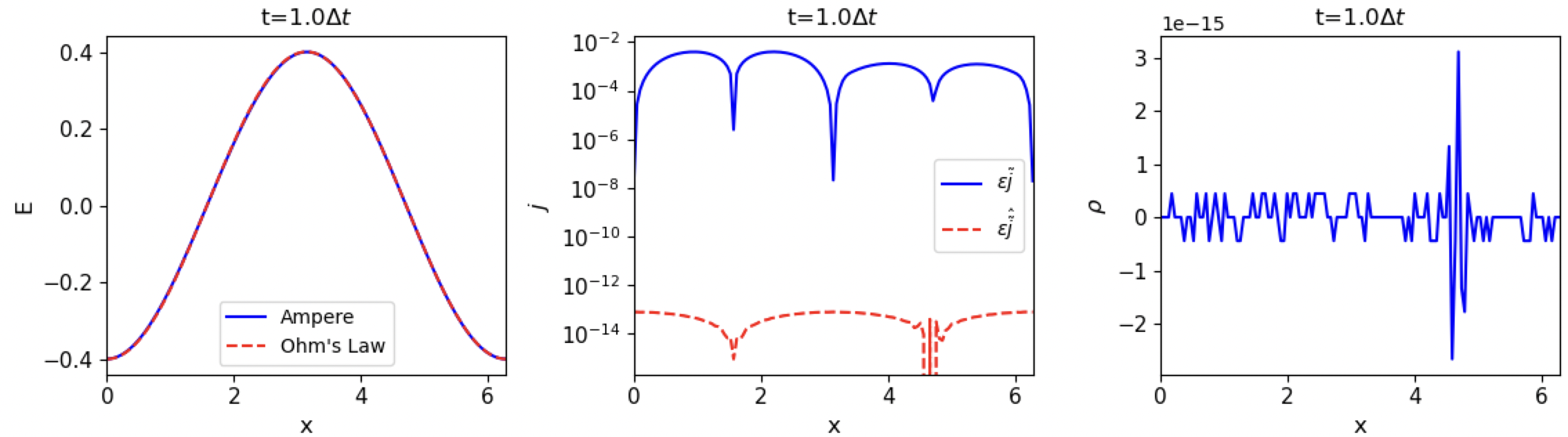}\\
    \includegraphics[width=0.8\textwidth]{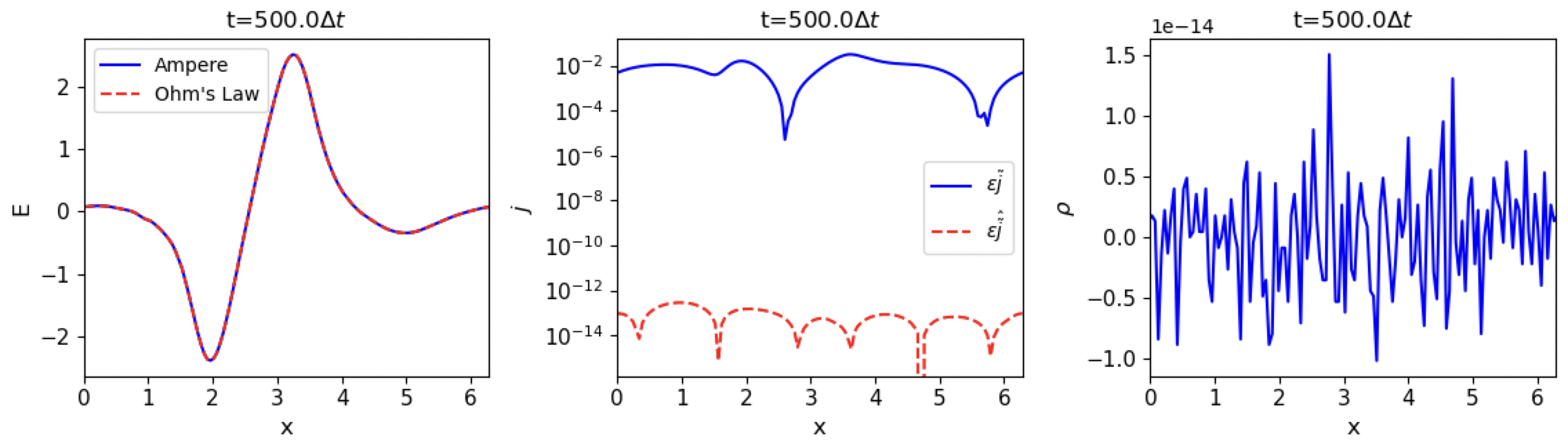}\\
    \includegraphics[width=0.8\textwidth]{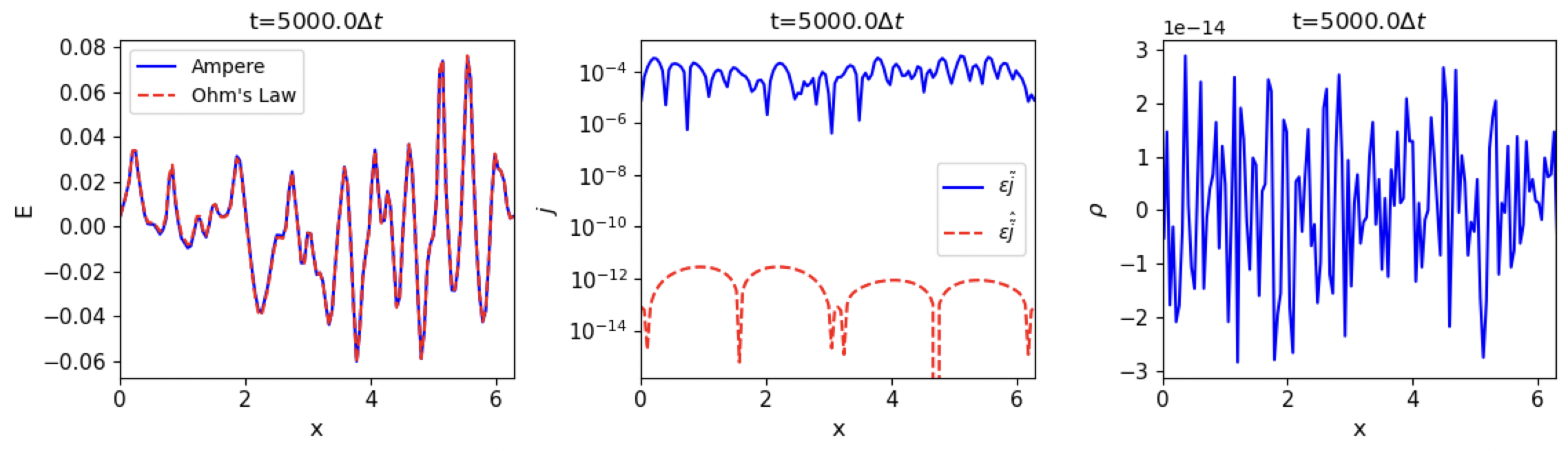}\\
    \par\end{centering}
    \caption{Multi-Ion Case: The comparison of electric field from Ampere's equation and Ohm's law (left column), the comparison of current from $\epsilon \widehat{\widetilde{j}}$ and $\epsilon \widetilde{j}$ measures (center column), and the total charge density (right column) at $t = \Delta t$ (top row), $t = 500\Delta t$ (middle row), and $t = 5000\Delta t$ (bottom row).}
    \label{fig:multiion_qn}
\end{figure}
As can be seen, quasi-neutrality is preserved to $\approx{\cal O}\left(\epsilon^2 \right)$, and Ohm's law acts as a good approximation for the electric field. In contrast, in the measure of $\epsilon \widetilde{j}_{l+\frac{1}{2}}$ ambipolarity is not rigorously enforced as expected in a multi-ion setting and discussed in Sec. \ref{subsubsec:discrete_qn_ap} while in the measure of $\epsilon \widehat{\widetilde{j}}$ it is satisfied to a much smaller value closer to ${\cal O}\left(\epsilon^2 \right)$, also as expected. In Fig. \ref{fig:multiion_pdf}, we show the evolution of the distribution function at various times.
\begin{figure}[th]
    \begin{centering}
    \includegraphics[width=0.8\textwidth]{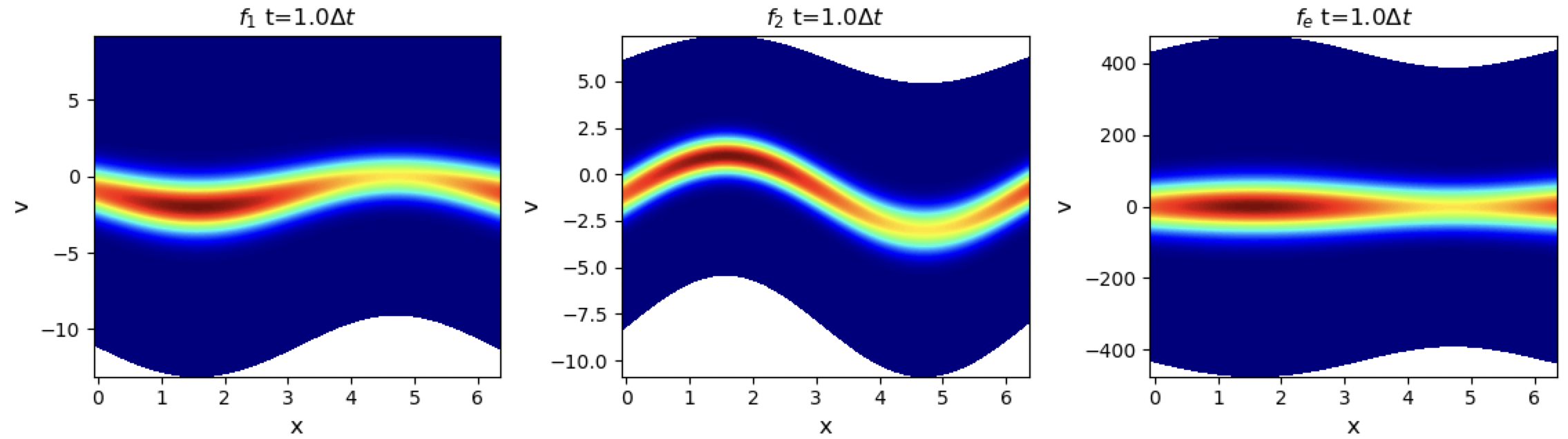}\\
    \includegraphics[width=0.8\textwidth]{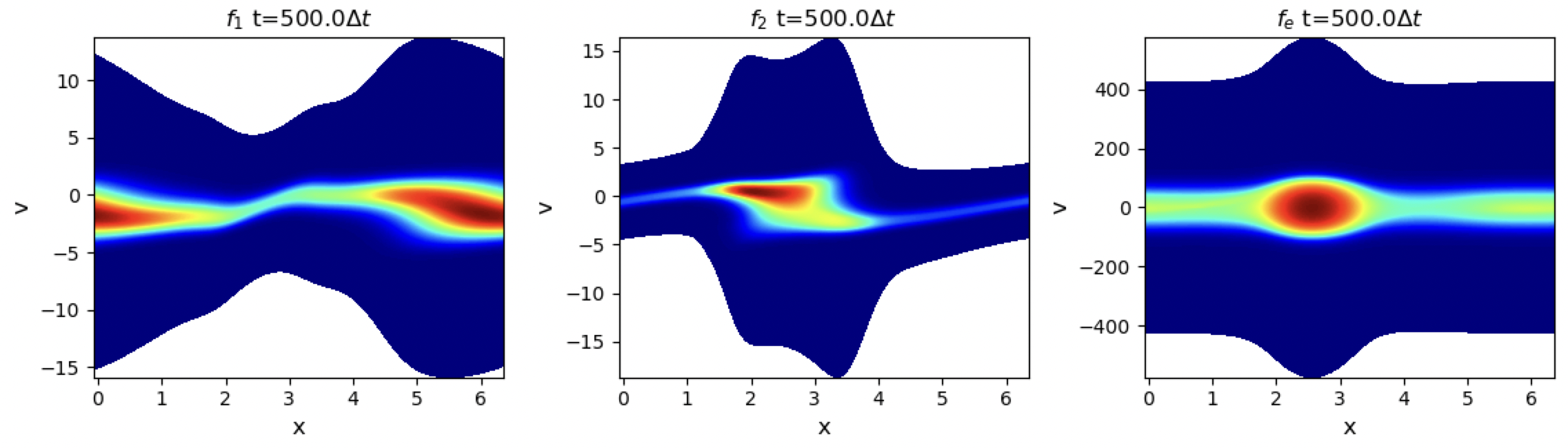}\\
    \includegraphics[width=0.8\textwidth]{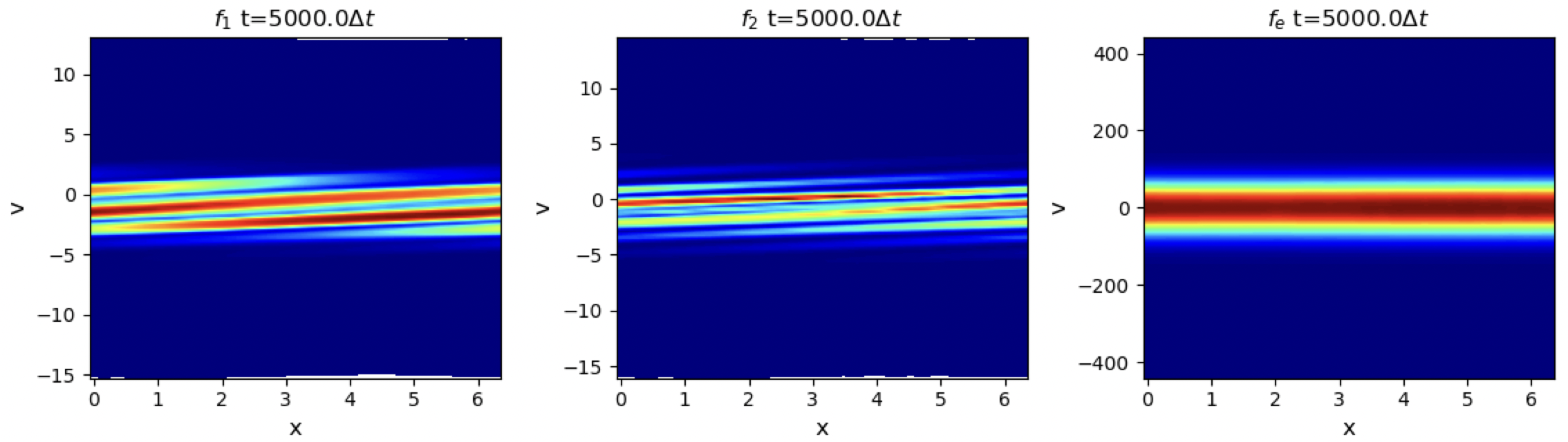}\\
    \par\end{centering}
    \caption{Multi-Ion Case: The comparison of the distribution function on the $\left\{x,v\right\}$ space for the two ions (left and center column) and electrons (right column) at $t = \Delta t$ (top row), $t = 500\Delta t$ (middle row), and $t = 5000\Delta t$ (bottom row).}
    \label{fig:multiion_pdf}
\end{figure}
Despite the fact that ambipolarity is only satisfied in a weak sense, solution remains regular, with the expected formation of the ion distribution function filamentation at late stages of time as the ion Landau damping dissipates the initial ion acoustic waves. Furthermore, we demonstrate in Fig. \ref{fig:multiion_conservation} that the discrete conservation principle and the Gauss law are upheld rigorously. 
\begin{figure}[th]
    \begin{centering}
    \includegraphics[width=0.6\textwidth]{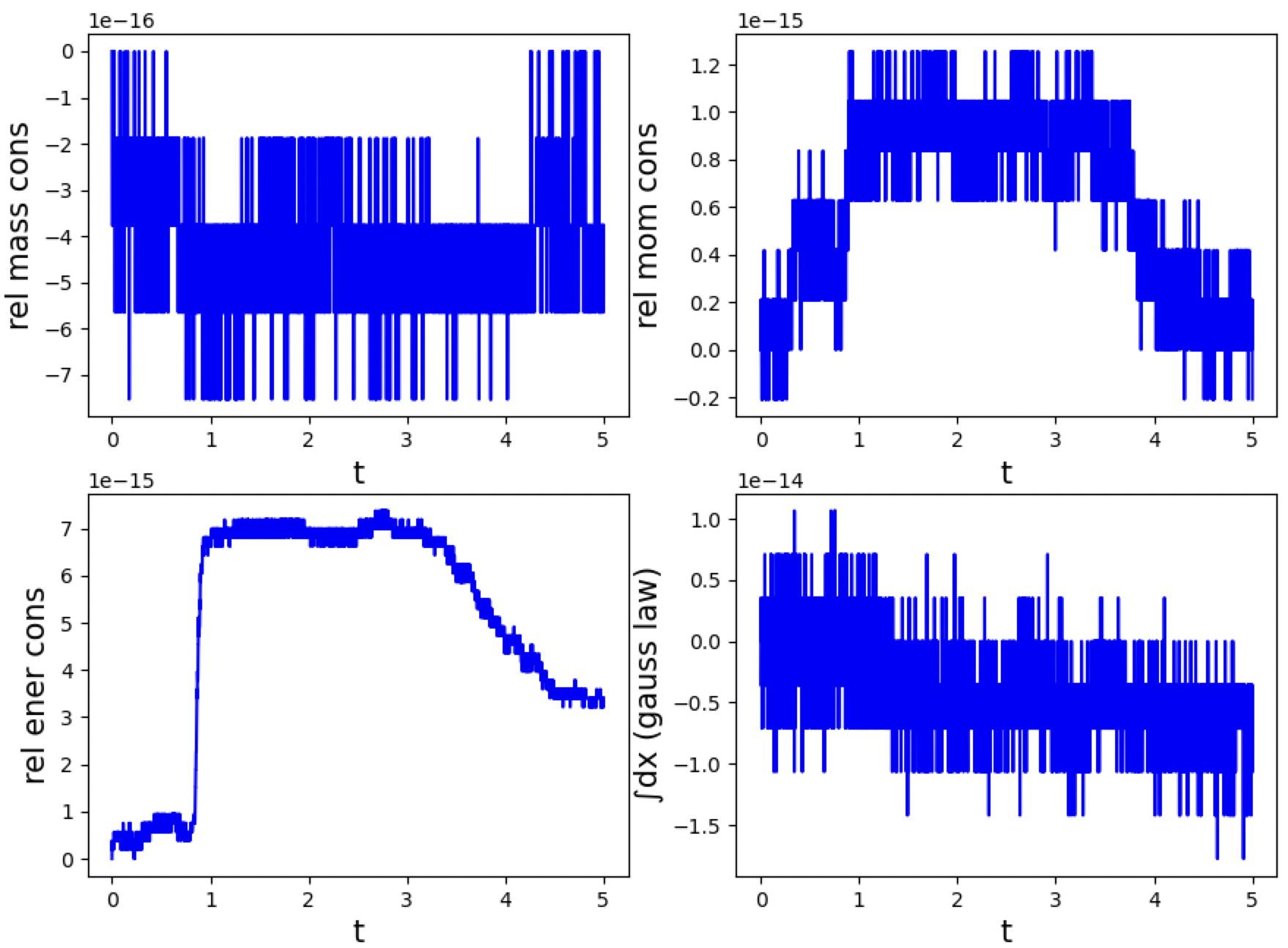}
    \par\end{centering}
    \caption{Multi-Ion Case: Relative error in the total mass (top left), momentum (top right), total energy (bottom left), and error in Gauss law (bottom right) as a function of time.}
    \label{fig:multiion_conservation}
\end{figure}
Finally, we show that the QN solver for the moment-field subsystem is robust with respect to $\Delta t \epsilon^{-1}$, despite the quasi-linearization (e.g., preconditioner) being based effectively on \emph{linear flux reconstruction} discussed in Proposition \ref{proposition:equilvalence_of_numerical_current_and_solution_current_1}, where it is not strictly equivalent with the multi-ion case that uses the kinetic flux discretization of $\widehat{\widetilde{j}}$; refer to Figure \ref{fig:multiion_solver_performance}. 
\begin{figure}[th]
    \begin{centering}
        \includegraphics[width=0.35\textwidth]{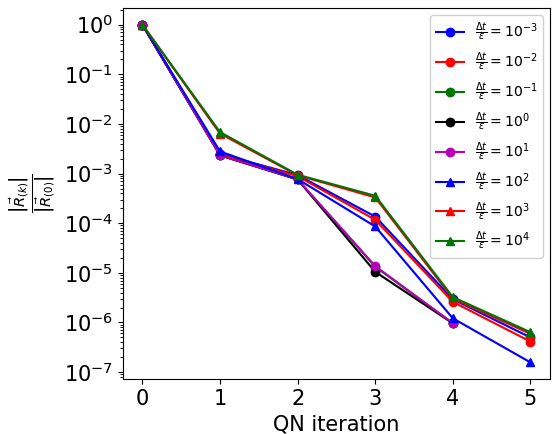}
    \par\end{centering}
    \caption{Mutli-Ion Case: Performance of the inner QN solver with respect to $\Delta t/\epsilon$.
    \label{fig:multiion_solver_performance}}
\end{figure}
The robust solver performance is attributed to the fact that the discrepencies between the residual function evaluations in Eqs. \eqref{eq:1d_discrete_continuity}-\eqref{eq:1d_discrete_ampere} and the quasi-linearization in Eqs. \eqref{eq:1d_quasi_linear_continuity}-\eqref{eq:qn_ampere} is on the order of the discretization truncation error, which is often small and the quasi- linearization is effective in approximating the full Jacobian for the purpose of the solver.

%
%
\section{Conclusions\label{sec:conclusions}}

In this study, we have developed a novel reformulation of the Vlasov-Amp{\`e}re equations that evolves the conditional distribution function of plasma particles. The reformulation relies on the transformation of the Vlasov equation using a sequence of velocity-space coordinate transformation and transformation of variables to separate the mass, momentum, and energy from the evolution of the distribution function. In this new formulation, the conserved quantities are evolved separately in the associated moment equations. Consequently, the symmetries for conservation, Gauss law involution, and quasi-neutrality asymptotics are separated from the kinetic equation and isolated into the moment-field subsystem. We have developed a fast-slow formulation for the new formulation based on a recent slow-manifold reduction theory to segregate fast electron time-scales out of the kinetic equations and into the easier to solve moment-field subsystem, which helps to expose the quasi-neutral limit in the discrete. The resulting system of equations was discretized with a compatible staggered finite differencing scheme, and we have demonstrated for the first time in the literature the simultaneous conservation of mass, momentum, and energy; preservation of the Gauss law; positivity of the distribution function; and the quasi-neutral limit for the vanishing parameter $\epsilon$. 

We note that, in addition to the conditional formulation, the staggered primitive formulation in the moment-field subsystem was key to achieve the simultaneous achievement of the above properties, as the staggering between the number density and the field is required to satisfy the Gauss law, while the quasi-neutral asymptotic preservation required collocation in the flow and the field, which is challenging to achieve simultaneously with collocated discretization schemes used in popular shock capturing Riemann solvers. The challenge of energy conservation in the staggered formulation was addressed by extending to plasmas, a strategy inspired by the fluid community that corrects the discrete internal energy equation through a Lagrange multiplier-like term. The resulting coupled nonlinear system of equations was solved efficiently using a block solver that iterates between the conditional Vlasov equation and the moment-field subsystem. The Quasi-Newton solver for the moment-field subsystem employs a linearization that is inspired by the slow manifold of the system and efficiently solves the system when $\epsilon\rightarrow0$ while recovering the quasi-neutral asymptotics. The advertised capabilities of the new formulation and the numerical methods were demonstrated on a series of canonical electrostatic benchmark problems including the multiscale ion acoustic shock wave problem.

The capability developed in this study is currently limited to first-order accuracy in $x$ and $t$ due to the particular choice of monotone-preserving staggered discretization for the moment-field subsystem and the choice of time integration of the coupled system. As such, to enhance the practical viability of the approach, a higher order scheme will be developed in the future. Furthermore, despite being able to take a large $\Delta t\epsilon^{-1}$, we are limited by the explicit CFL condition in the kinetic subsystem. To deal with the stiff advection time scales of the modified Vlasov equation, we have begun the development of a semi-Lagrangian capability of the new formulation, and the work will be reported in a follow-up manuscript.

%
%
\section*{Acknowledgments} 
W.T.T was supported by Triad National Security, LLC under contract 89233218CNA000001 and the DOE Office of Applied Scientific Computing Research (ASCR) through the Mathematical Multifaceted Integrated Capability Centers program. A.A  was supported by the National Science Foundation grant DMS-2111612.


%
%
\bibliographystyle{elsarticle-num}
\bibliography{references.bib}

%
%
\appendix
\section{SMART flux reconstruction}
\label{app:smart_discretization}
Consider a 1D, scalar advection equation of form,
\begin{equation}
    \label{eq:linear_advection}
    \frac{\partial \phi}{\partial t} + \frac{\partial}{\partial x} \left( a \phi \right) = 0.
\end{equation}
Here, $\phi\left(x,t\right)$ is the conserved scalar quantity and $a\left(x\right)$ is the advection constant, $x \in \left[0,L_x\right]$ and $t \in \left[0,t_{max}\right]$, and the spatial grid is $\mathfrak{x} = \left\{\left( i - 0.5 \right) \Delta\mathfrak{x} \right\}_{i=1}^{N_x}$, and $\Delta\mathfrak{x} = \frac{L_x}{N_x}$. Using a conservative finite differencing formulation, we semi-discretize the equation on the grid $\mathfrak{x}$ as $\vec{\phi} = \left\{\phi_1 ,\cdots,\phi_{N_x} \right\}$, and the equation on cell $i$ as:
\begin{equation}
    \label{eq:discrete_1d_advection}
    \frac{\partial \phi_i}{\partial t} + \frac{\widehat{a}_{i+1/2} \widehat{\phi}_{i+1/2} - \widehat{a}_{i-1/2}\widehat{\phi}_{i-1/2}}{\Delta\mathfrak{x}}
    = 0.
\end{equation}
Here, $\widehat{a}_{i\pm 1/2}$ is some cell reconstruction of the advection velocity on cell interfaces $i\pm1/2$, and $\widehat{\phi}_{i\pm 1/2}$ is the cell reconstruction of $\phi$, which is given using the SMART discretization \cite{gaskell_ijnmf_1988_smart} for $\widehat{a}_{i+1/2} \ge 0 $ as
\begin{equation}
    \label{eq:smart_discretization_positive_flow}
    \widehat{\phi}_{i+1/2} \equiv 
    SMART\left( \widehat{a}_{i+1/2}, \vec{\phi} \right)
    = F_{med}\left(\phi_{i}, \phi^*_4, \phi^*_1  \right) \left| \right. \widehat{a}_{i+1/2} \; \ge \; 0
\end{equation}
\begin{equation}
    \phi^*_{4} = F_{med}\left(\phi_{i},\phi^*_2,\phi^*_3\right)
\end{equation}
\begin{equation}
    \phi^*_3 = \beta_2 \phi_i + \left(1 - \beta_2 \right)\phi_{i+1}
\end{equation}
\begin{equation}
    \phi^*_2 = \beta_1 \phi_i + \left(1 -\beta_1\right) \phi_{i-1}
\end{equation}
\begin{equation}
    \phi^*_1 = \frac{\phi_{i+1} + \phi_{i}}{2} - \frac{\Delta_{\phi,i+1/2} \Delta {\mathfrak x}^2}{8}
\end{equation}
\begin{equation}
    \Delta_{\phi,i+1/2} =  \frac{\phi_{i+1} - 2\phi_{i} + \phi_{i-1}}{{\Delta \mathfrak{x}^2}}
\end{equation}
with $F_{med}\left(a,b,c\right) = a + b + c - \max\left(a,b,c\right) - \min\left(a,b,c\right)$ the median function and for $\widehat{a}_{i+1/2} < 0$ as
\begin{equation}
    \label{eq:smart_discretization_negative_flow}
    \widehat{\phi}_{i+1/2} \equiv 
    SMART\left( \widehat{a}_{i+1/2}, \vec{\phi} \right)
    = F_{med}\left(\phi_{i+1}, \phi^*_4, \phi^*_1  \right) \left| \right. \widehat{a}_{i+1/2} \; < \; 0
\end{equation}
\begin{equation}
    \phi^*_{4} = F_{med}\left(\phi_{i+1},\phi^*_2,\phi^*_3\right)
\end{equation}
\begin{equation}
    \phi^*_3 = \beta_2 \phi_{i+1} + \left(1 - \beta_2 \right)\phi_{i}
\end{equation}
\begin{equation}
    \phi^*_2 = \beta_1 \phi_{i+1} + \left(1 -\beta_1\right) \phi_{i+2}
\end{equation}
\begin{equation}
    \phi^*_1 = \frac{\phi_{i+1} + \phi_{i}}{2} - \frac{\Delta_{\phi,i+1/2} \Delta {\mathfrak x}^2}{8}
\end{equation}
\begin{equation}
    \Delta_{\phi,i+1/2} =  \frac{\phi_{i+2} - 2\phi_{i+1} + \phi_{i}}{{\Delta \mathfrak{x}^2}}
\end{equation}
Here, $\beta_1 = \frac{3}{2}$ and $\beta_2 = \frac{1}{2}$ are coefficients for the SMART reconstruction that weight the up-wind and down-wind discretization schemes.

%
%
\end{document}